\newtheorem{propo}{Proposition}[section]
\newtheorem{lemma}[propo]{Lemma}
\newtheorem{definition}[propo]{Definition}
\newtheorem{thm}{Theorem}
\def\<{\langle}
\def\>{\rangle}
\def\E{{\mathbb E}}
\def\Pr{{\mathbb P}}
\def\Var{{\textrm{ Var }}}
\title{Estimating Mutual Information for Discrete-Continuous Mixtures\thanks{This manuscript appears in part at {\em Neural Information Processing Systems} (NIPS) 2017.}}
\author{
Weihao Gao\thanks{Coordinated Science Lab and Department of Electrical and Computer Engineering, University of Illinois at Urbana-Champaign, Urbana, IL 61801, USA; \texttt{\{wgao9,pramodv\}@illinois.edu}}, \;\;
Sreeram Kannan\thanks{Electrical Engineering Department, University of Washington, Seattle, WA 98195; \texttt{ksreeram@uw.edu}},\;\;
 Sewoong Oh\thanks{Coordinated Science Lab and Department of Industrial and Enterprise Systems Engineering, University of Illinois at Urbana-Champaign, Urbana, IL 61801, USA; \texttt{swoh@illinois.edu} }, \;\;
 Pramod Viswanath$^{\dagger}$\\
%   University of Illinois at Urbana-Champaign\\
 % Urbana, IL 61801  \\
%  \texttt{\{wgao9,swoh,pramodv\}@illinois.edu} \\
  %%%%%\Author{Hyeji Kim $^{1}$, Weihao Gao $^{1}$, Sreeram Kannan $^{2}$, Sewoong Oh $^{3}$, and Pramod Viswanath $^{1}$}
}
\date{}
\begin{document}
% \nipsfinalcopy is no longer used

\maketitle

\begin{abstract}
Estimating mutual information from observed samples is a basic primitive,
useful in several machine learning tasks including correlation mining, information bottleneck clustering, learning a Chow-Liu tree, and conditional independence testing in (causal) graphical models.
While mutual information is a well-defined quantity in general probability spaces,
existing estimators can only handle
 two special cases of purely discrete or
purely continuous pairs of random variables.
The main challenge is that these methods first
estimate the (differential) entropies of
$X$, $Y$ and the pair $(X,Y)$ and
 add them up with appropriate signs to get an estimate of the mutual information.
These 3H-estimators cannot be applied in general mixture spaces,
where entropy is not well-defined.
In this paper, we design a novel estimator for  mutual information of discrete-continuous mixtures.
We prove that the proposed estimator is consistent.
We provide numerical experiments suggesting superiority of the proposed estimator
compared to other heuristics of
adding small continuous noise to all the samples and applying standard estimators tailored for purely continuous variables,
and quantizing the samples and applying standard estimators tailored for purely discrete variables.
 This significantly widens the
 applicability of mutual information estimation in  real-world applications,
 where some variables are discrete, some continuous, and others are a mixture between continuous and discrete components.

\end{abstract}

\section{Introduction}
%1. MI is important in blahblah, why are we intersting MI? well-established quantity

%2. estimation MI from samples is still active research. New estimators are being proposed...(discrete and continuous)

%3. real data, sometimes we see mixed variable. Mixed can be in several ways: (1) Scholkopf data  (2) zero-inflation  (3) .... MI is well defined , but no consistent estimator known. Heurisic (1) add noise (2) quantization.

%4. In this paper, we address this challengg by propose an estimator that can handle any type. Adapt local geometry of sample, compute MI, (the first consistent) prove it.

%5. Contributions.
A fundamental quantity of interest in machine learning is mutual information (MI), which characterizes the shared information between a pair of random variables $(X,Y)$.
MI obeys several appealing properties including the data-processing inequality, invariance under one-to-one transformations and the chain rule \cite{cover1991information},
which led to a wide use in  canonical tasks such as classification~\cite{peng2005feature}, clustering~\cite{muller2012information,ver2014maximally,chan2015multivariate} and feature selection~\cite{battiti1994using,fleuret2004fast}.
Mutual information also emerges as the ``correct''  quantity in several graphical model inference problems (e.g.,  the Chow-Liu tree \cite{chow1968approximating} and conditional independence testing \cite{bishop2006pattern}).
MI is also pervasively used in many data science application domains, such as sociology~\cite{reshef2011detecting}, computational biology~\cite{Krishnaswamy14}, and computational neuroscience~\cite{rieke1999spikes}.

An important problem in any of these applications is to estimate mutual information effectively from samples. While mutual information has been the {\em de facto} measure of information in several applications for decades, the estimation of mutual information from samples remains an active research problem. Recently, there has been a resurgence of interest in entropy and mutual information estimators, on both the theoretical as well as practical fronts~\cite{sricharan2013ensemble,moon2017ensemble,singh2016finite,singh2017nonparanormal,jiao2014maximum,han2015adaptive,gao2014efficient,gao2015estimating,gao2016demystifying,gao2016breaking}.

%For discrete data, novel entropy estimators have been shown to be sample-complexity optimal \cite{ValiantValiant}, and the minimax  convergence rate has been established in~\cite{han2015adaptive}. For continuous data, ensemble estimators capable of achieving the parametric rate of convergence have been proposed in \cite{moon2017ensemble} as well as
%
%There are several recent works trying to develop improved estimators in certain regimes, including geometrical adaptive estimators~\cite{gao2014efficient,gao2016breaking}, ensemble estimators~\cite{moon2017ensemble}.
%
%
%the most popular estimator is the nearest-neighbor estimators proposed by Kraskov et.al. in~\cite{Kra04}, whose consistency as well as convergence rate was analyzed recently in~\cite{gao2016demystifying}.

The previous estimators focus on either of two cases -- the data is either purely discrete or purely continuous. In these special cases, the mutual information can be calculated based on the three (differential) entropies of $X$, $Y$ and $(X,Y)$. We term estimators based on this principle as $3H$-estimators (since they estimate three entropy terms), and a majority of previous estimators fall under this category  \cite{han2015adaptive,gao2016breaking,sricharan2013ensemble}.

In practical downstream applications, we often have to deal with a {\em mixture of continuous and discrete} random variables. Random variables can be mixed in several ways. First, one random variable can be discrete whereas the other is continuous. For example, we want to measure the strength of relationship between children's age and height, here age $X$ is discrete and height $Y$ is continuous. Secondly, a single scalar random variable itself can be a mixture of discrete and continuous components. For example, consider $X$ taking a zero-inflated-Gaussian distribution, which takes value $0$ with probability $p$ and is a Gaussian distribution with mean $\mu$ with probability $1-p$. This distribution has both a discrete component as well as a component with density. Finally, $X$ and / or $Y$ can be high dimensional vector, each of whose components may be discrete, continuous or mixed.

In all of the aforementioned {\em mixed} cases, mutual information is  well-defined through the Radon-Nikodym derivative (see Section~\ref{sec:def}) but cannot be expressed as a function of the entropies or differential entropies of the random variables. Crucially, entropy is not well defined when a single scalar random variable comprises of both discrete and continuous components, in which case, $3H$ estimators (the vast majority of prior art) cannot be directly employed.  In this paper, we address this challenge by proposing an estimator that can handle all these cases of mixture distributions. The estimator directly estimates the Radon-Nikodym derivative using the $k$-nearest neighbor distances from the samples; we  prove $\ell_2$ consistency of the estimator and demonstrate its excellent practical performance through a variety of experiments on both synthetic and real dataset. Most relevantly, it strongly outperforms natural baselines of discretizing the mixed random variables (by quantization) or making it continuous by adding a small Gaussian noise.

The rest of the paper is organized as follows. In Section~\ref{sec:def}, we review the general definition of mutual information for Radon-Nikodym derivative. In Section~\ref{sec:estimator}, we propose our estimator of mutual information for mixed random variables.
 In Section~\ref{sec:proof}, we prove that our estimator is $\ell_2$ consistent under certain  technical  assumptions and verify that the assumptions are satisfied for most practical cases.
 Section~\ref{sec:simulation} contains the results of our  detailed synthetic and real-world  experiments testing the efficacy of the proposed estimator.

%Problem Formation
\section{Problem Formation}
\label{sec:def}
In this section, we define mutual information for general distributions as follows (e.g.,   \cite{polyanskiy2015strong}).

\begin{definition}
Let $P_{XY}$ be a probability measure on the space $\mathcal{X} \times \mathcal{Y}$, where $\mathcal{X}$ and $\mathcal{Y}$ are both Euclidean spaces. For any measurable set $A \subseteq \mathcal{X}$ and $B \subseteq \mathcal{Y}$, define $P_X(A) = P_{XY}(A \times \mathcal{Y})$ and $P_Y(B) = P_{XY}(\mathcal{X} \times B)$. Let $P_X P_Y$ be the product measure $P_X \times P_Y$. If $P_{XY}$ is absolutely continuous w.r.t. $P_XP_Y$, then the mutual information $I(X;Y)$ of $P_{XY}$ is defined as
\begin{eqnarray}
I(X;Y) &\equiv& \int_{\mathcal{X} \times \mathcal{Y}} \log \frac{dP_{XY}}{dP_XP_Y} dP_{XY}, \label{def:mi}
\end{eqnarray}
where $\frac{dP_{XY}}{dP_XP_Y}$ is the Radon-Nikodym derivative.
\end{definition}

%We will first check that mutual information is well-defined for any probability measure $P_{XY}$.
%
%\begin{thm}
%\label{thm:well_defined}
%Mutual information is well-defined by~\eqref{def:mi}.
%\end{thm}
%
%We prove that for any probability measure $P_{XY}$ on $\mathcal{X} \times \mathcal{Y}$, the joint measure $P_{XY}$ is absolutely continuous with respect to the product measure $P_XP_Y$, hence mutual information is well-defined. See Appendix~\ref{sec:well_define} for the detailed proof. 
Notice that this general definition includes the following cases of mixtures: (1) $X$ is discrete and $Y$ is continuous (or vice versa); (2) $X$ or $Y$ has many components each, where some components are discrete and some are continuous; (3) $X$ or $Y$ or their joint distribution is a mixture of continuous and discrete distributions.

%This definition does not cover arbitrary measure spaces, for example, $X$ follows a Cantor distribution. But almost every practical scenario is included in this definition.

%Estimators
\section{Estimators of Mutual Information}
\label{sec:estimator}

\subsection{Review of Previous Works}

%\noindent {\bf Review of prior work}.
 The estimation problem is quite different depending on whether the underlying distribution is discrete, continuous or mixed. As pointed out earlier, most existing estimators for mutual information are based on the $3H$ principle: they estimate the three entropy terms first. This $3H$ principle can be applied only in the purely discrete or purely continuous case.

{\em Discrete data}: For entropy estimation of a discrete variable $X$, the straightforward  approach  to plug-in the estimated probabilities $\hat{p}_X(x)$ into the formula for entropy has been shown to be suboptimal \cite{paninski2003estimation,acharyamaximum}. Novel entropy estimators with sub-linear sample complexity have been proposed \cite{valiant2011estimating,wu2016minimax,han2015adaptive,jiao2015minimax,han2015minimax,jiao2014non}. MI estimation can then be performed using the $3H$ principle, and such an approach is shown to be worst-case optimal for mutual-information estimation  \cite{han2015adaptive}.

{\em Continuous data}: There are several estimators for differential entropy of continuous random variables, which have been exploited in a $3H$ principle to calculate the mutual information \cite{beirlant1997nonparametric}. One family of   entropy estimators are based on kernel density estimators \cite{paninski2008undersmoothed} followed by re-substitution estimation. An alternate family of entropy estimators is based on $k$-Nearest Neighbor ($k$-NN) estimates, beginning with the pioneering work of Kozachenko and Leonenko \cite{KL87} (the so-called KL estimator). Recent progress involves an inspired mixture of an  ensemble of kernel and $k$-NN estimators  \cite{sricharan2013ensemble,berrett2016efficient}. Exponential concentration bounds under certain conditions are in \cite{singh2014exponential}. %Any of these entropy estimators can be used along with $3H$-principle to estimate the mutual information when the variables have a joint density.

{\em Mixed Random Variables}: Since the entropies themselves may not be well defined for mixed random variables, there is no direct way to apply the $3H$ principle. However, once the data is quantized, this principle can be applied in the discrete domain. That mutual information in arbitrary measure spaces can indeed be computed as a maximum over quantization is a  classical result  \cite{gelfand1959calculation,perez1959information,pinsker1960information}. However, the choice of quantization is complicated and while some quantization schemes are known to be consistent when there is a joint density~\cite{darbellay1999estimation}, the mixed case is  complex. Estimator of the average of Radon-Nikodym derivative $dP/dQ$ has been studied in~\cite{wang2005divergence,wang2009divergence}. Very recent work generalizing the ensemble entropy estimator when some components are discrete and others continuous is in \cite{moon2017ensemble}.

%For continuous setting, one can simply partition the underlying space into blocks and apply discrete MI estimators, which is called partitioning method. An adaptive partitioning method~\cite{darbellay1999estimation} is proposed to improve the partitioning method. But these methods suffer from curse of dimensionality because the number of blocks grows exponentially with dimension.

{\em Beyond $3H$ estimation}: In an inspired work~\cite{Kra04}  proposed a {\em direct} method for estimating mutual information (KSG estimator) when the variables have a joint density. The estimator starts with the $3H$ estimator based on differential entropy estimates based on the $k$-NN estimates, and employs a heuristic to couple the estimates in order to improve the estimator. While the original paper did not contain any theoretical proof, even of consistency, its excellent practical performance has encouraged widespread adoption. Recent work \cite{gao2016demystifying} has established the consistency of this estimator along with its  convergence rate. Further, recent works~\cite{gao2015estimating,gao2016breaking} involving a combination of kernel density estimators and $k$-NN methods have been proposed to further improve the KSG estimator. \cite{Ros14} extends the KSG estimator to the case when one variable is  discrete and another is scalar continuous.

None of these works consider a case even if one of the components has a mixture of continuous and discrete distribution, let alone for general probability distributions.
There are two generic options: (1) one can add small independent noise on each sample to break the multiple samples and apply a continuous valued MI estimator (like KSG), or (2) quantize and apply discrete MI estimators but the performance for high-dimensional case is poor. These form baselines to  compare against in our detailed simulations.

\subsection{Mixed Regime}

%\noindent {\bf Mixed Regime}.
We first examine the behavior of other estimators in the mixed regime, before proceeding to develop our estimator.  Let us consider the case when $X$ is discrete (but real valued) and $Y$ possesses a density. In this case, we will examine the consequence of using the $3H$ principle, with differential entropy estimated by the $k$-nearest neighbors.  To do this, fix a parameter $k$, that determines the number of neighbors and let $\rho_{i,x}$, $\rho_{i,y}$ and $\rho_{i,xy}$ denote the distance of the $k$-nearest neighbor of $X_i$, $Y_i$ and $(X_i,Y_i)$, respectively. Then

%\begin{eqnarray*}  I_{\rm 3H}(X;Y) & = &  +  \frac{1}{N} \sum_{i=1}^N  \left\{ \log \frac{N c_{x} \rho_{i,x}^d }{k} +  \log k -\psi(k)   \right\}\\
%&  & +  \frac{1}{N} \sum_{i=1}^N  \left\{ \log \frac{Nc_{y} \rho_{i,y}^d }{k} +  \log k -\psi(k)  \right\}\\
%&  & -  \frac{1}{N} \sum_{i=1}^N  \left\{ \log \frac{N c_{xy} \rho_{i,xy}^d}{k} +  \log k -\psi(k) \right\} \\
%\end{eqnarray*}

\begingroup\makeatletter\def\f@size{8}\check@mathfonts
$$\widehat{I}^{(N)}_{\rm 3H}(X;Y) = \left(\,\frac{1}{N} \sum_{i=1}^N   \log \frac{N c_{x} \rho_{i,x}^d }{k} +  a(k)  \,\right)+ \left(\,\frac{1}{N} \sum_{i=1}^N  \log \frac{Nc_{y} \rho_{i,y}^d }{k} +  a(k) \,\right) - \left(\, \frac{1}{N} \sum_{i=1}^N   \log \frac{N c_{xy} \rho_{i,xy}^d}{k} +  a(k) \,\right)
$$
\endgroup
where $\psi(\cdot)$ is the digamma function and $a(\cdot) = \log(\cdot)-\psi(\cdot)$.  In the case that $X$ is discrete and $Y$ has a density,
$ I_{\rm 3H}(X;Y)  =   -\infty +  a  -  b =  -\infty $,  which is clearly wrong.

The basic idea of the KSG estimator is to ensure that the $\rho$ is the same for both $x$, $y$ and $(x,y)$ and the difference is instead in the number of nearest neighbors. Let $n_{x,i}$ be the number of samples of $X_i$'s within distance $\rho_{i,xy}$ and $n_{y,i}$ be the number of samples of $Y_i$'s within  distance $\rho_{i,xy}$. Then the KSG estimator is given by
%\begin{eqnarray}
 $   \widehat{I}_{KSG}^{(N)} \equiv \frac{1}{N} \sum_{i=1}^N \left(\, \psi(k) + \log(N) - \log(n_{x,i}+1) - \log(n_{y,i}+1) \,\right)$
%\end{eqnarray}
where $\psi(\cdot)$ is the digamma function.

In the case of $X$ being discrete and $Y$ being continuous, it turns out that the KSG estimator does {\em not} blow up (unlike the $3H$ estimator), since the distances do not go to zero. However, in the mixed case, the estimator has a non-trivial bias due to discrete points and is no longer consistent.

%Recently, there are several works focus on a simple case where $X$ is discrete and $Y$ is single-dimensional continuous (or vice versa), including methods based on KSG~\cite{Ros14} and ensemble entropy estimators~\cite{moon2017ensemble}. However, for general probability distributions, there are no consistent estimator as far as we know.

\subsection{Proposed Estimator}
%\noindent {\bf Proposed Estimator}.
We propose the following estimator for general probability distributions, inspired by the KSG estimator. The intuition is as follows. First notice that MI is the average of the logarithm of Radon-Nikodym derivative, so we compute the Radon-Nikodym derivative for each sample $i$ and take the empirical average. The re-substitution estimator for MI is then given as follows:
%\begin{eqnarray*}
$\widehat{I}(X;Y) \equiv \frac{1}{n} \sum_{i=1}^n \log \left( \frac{dP_{XY}}{dP_XP_Y} \right)_{(x_i,y_i)}.$ %\label{def:mi_est}
%\end{eqnarray*}
 The basic idea behind our estimate of the Radon-Nikodym derivative at each sample point is as follows:

\begin{itemize}
\item When the point is discrete (which can be detected by checking if the $k$-nearest neighbor distance of data $i$ is zero), then we can assert that data $i$ is in a discrete component, and we can use plug-in estimator for Radon-Nikodym derivative.
\item If the point is such that there is a joint density (locally), the KSG estimator suggests a natural idea: fix the radius and estimate the Radon-Nikodym derivative by $\left( \psi(k) + \log(N) - \log(n_{x,i}+1) - \log(n_{y,i}+1) \right)$.
\item If $k$-nearest neighbor distance is not zero, then it may be either purely continuous or mixed. But we show below that the method for purely continuous is also applicable for mixed.
\end{itemize}

Precisely, let $n_{x,i}$ be the number of samples of $X_i$'s within distance $\rho_{i,xy}$ and $n_{y,i}$ be the number of samples of $Y_i$'s with in $\rho_{i,xy}$. Denote $\tilde{k}_i$ by the number of tuples $(X_i, Y_i)$ within distance $\rho_{i,xy}$. If the $k$-NN distance is zero, which means that the sample $(X_i,Y_i)$ is a discrete point of the probability measure, we set $k$ to $\tilde{k}_i$, which is the number of samples that have the same value as $(X_i,Y_i)$. Otherwise we just keep $\tilde{k}_i$ as $k$. Our proposed estimator is described in detail in Algorithm~\ref{algo1}.

\begin{algorithm}[htbp]
\caption{Mixed Random Variable Mutual Information Estimator}
    \begin{algorithmic}
        \STATE \textbf{Input:} $\{X_i, Y_i\}_{i=1}^N$, where $X_i \in \mathcal{X}$ and $Y_i \in \mathcal{Y}$;
        \STATE \textbf{Parameter:} $k \in \mathbb{Z}^+$;
        \FOR {$i = 1$ to $N$}
                \STATE $\rho_{i,xy} := $ the $k$ smallest distance among $\left[ d_{i,j} := \max\{\|X_j-X_i\|, \|Y_j-Y_i\|\}, j \neq i \right]$;
                \IF {$\rho_{i,xy} = 0$}
                    \STATE $\tilde{k}_i := $ number of samples such that $d_{i,j} = 0$;
                \ELSE
                    \STATE $\tilde{k}_i := k$;
                \ENDIF
                \STATE $n_{x,i} := $ number of samples such that $\|X_j - X_i\| \leq \rho_{i,xy}$;
                \STATE $n_{y,i} := $ number of samples such that $\|Y_j - Y_i\| \leq \rho_{i,xy}$;
                \STATE $\xi_i := \psi(\tilde{k}_i) + \log N - \log (n_{x,i}+1) - \log (n_{y,i}+1)$;
        \ENDFOR
        \STATE \textbf{Output:} $\widehat{I}^{(N)}(X;Y) := \frac{1}{N}\sum_{i=1}^N \xi_i$.
    \end{algorithmic}
    \label{algo1}
\end{algorithm}

We note that our estimator recovers previous ideas in several canonical settings.  If the underlying distribution is purely discrete, the $k$-nearest neighbor distance $\rho_{i,xy}$ equals to 0 with high probability, then our estimator recovers the plug-in estimator. If the underlying distribution is purely continuous, then there are no multiple overlapping samples, so $\tilde{k}_i$ equals to $k$, our estimator recovers the KSG estimator. If $X$ is discrete and $Y$ is single-dimensional continuous and $P_X(x) > 0$ for all $x$, for sufficiently large dataset, the $k$-nearest neighbors of sample $(x_i,y_i)$ will be located on the same $x_i$ with high probability. Therefore, our estimator recovers the discrete vs continuous estimator in \cite{Ros14}.

%Theorems
\section{Proof of Consistency}
\label{sec:proof}
We show that under certain technical conditions on the joint probability measure, the proposed estimator is consistent. We begin with the following definitions. Let $f(x,y) = dP_{XY}/dP_XP_Y$ denote the Radon-Nikodym derivative and define  
\begin{eqnarray}
P_{XY}(x,y,r) &\equiv& P_{XY}\left(\, \{(a,b) \in \mathcal{X} \times \mathcal{Y}: \|a-x\| \leq r, \|b-y\| \leq r\} \,\right), \\
P_X(x,r) &\equiv& P_X\left(\, \{a \in \mathcal{X}: \|a-x\| \leq r\} \,\right), \\
P_Y(y,r) &\equiv& P_Y\left(\, \{b \in \mathcal{Y}: \|b-y\| \leq r\} \,\right).
\end{eqnarray}

\begin{thm}
\label{thm:bias}
Suppose that
\begin{enumerate}
    \item $k$ is chosen to be a function of $N$ such that $k_N \to \infty$ and $k_N \log N/N \to 0$ as $N \to \infty$.
    \item The set of discrete points $\{(x,y): P_{XY}(x,y,0) > 0\}$ is finite.
    %\item $\frac{P_{XY}(x,y,r)}{P_X(x,r)P_Y(y,r)}$ converges to $f(x,y)$ as $r \to 0$ and $f(x,y) \leq C$ with probability 1.
    %\item $\mathcal{X} \times \mathcal{Y}$ can be decomposed into countable disjoint sets $\{E_i\}_{i=1}^{\infty}$ such that $f(x,y)$ is uniformly continuous on each $E_i$.
    \item $\int_{\mathcal{X} \times \mathcal{Y}} \big|\, \log \frac{dP_{XY}}{dP_XP_Y} \,\big|\, dP_{XY} < +\infty$.
\end{enumerate}
Then we have
%\begin{eqnarray}
$\lim_{N \to \infty} \E \left[\, \widehat{I}^{(N)}(X;Y) \,\right] = I(X;Y) \;.$
%$\end{eqnarray}
\end{thm}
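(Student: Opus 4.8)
The plan is to establish asymptotic unbiasedness by analyzing a single summand. Since the samples are i.i.d., $\E[\widehat{I}^{(N)}(X;Y)] = \E[\xi_1]$, so it suffices to prove $\E[\xi_1] \to I(X;Y) = \int \log f \, dP_{XY}$. The organizing idea is to rewrite the target pointwise: because $P_{XY} \ll P_XP_Y$ and the max-norm box around $(x,y)$ factorizes under the product measure, so that $(P_XP_Y)$ of the box equals $P_X(x,r)\,P_Y(y,r)$, the Lebesgue--Besicovitch differentiation theorem gives, for $P_{XY}$-almost every $(x,y)$,
$$\log f(x,y) = \lim_{r \to 0} \log \frac{P_{XY}(x,y,r)}{P_X(x,r)\,P_Y(y,r)}.$$
I would then condition on $(X_1,Y_1) = (x,y)$, show that $\E[\xi_1 \mid (X_1,Y_1)=(x,y)]$ converges to this limit pointwise, and finally upgrade pointwise convergence of the conditional expectations to convergence of $\E[\xi_1]$ by a dominated convergence argument.

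The heart of the pointwise step is to match each term of $\xi_1 = \psi(\tilde{k}_1) + \log N - \log(n_{x,1}+1) - \log(n_{y,1}+1)$ to the corresponding probability. First I would show the $k$-NN radius is calibrated so that $N\,P_{XY}(x,y,\rho_{1,xy}) \approx k$ and, using $k_N/N \to 0$, that $\rho_{1,xy} \to 0$ in probability; this is what drives $r \to 0$ in the differentiation formula. Conditioned on $\rho_{1,xy} = \rho$, the counts $n_{x,1}$ and $n_{y,1}$ are sums of (essentially) independent indicators over the remaining points, with means concentrating around $N P_X(x,\rho)$ and $N P_Y(y,\rho)$; a Chernoff/variance bound, valid because $k_N \to \infty$ guarantees enough mass in the box, yields $\log(n_{x,1}+1) = \log(N P_X(x,\rho)) + o(1)$ and similarly for $n_{y,1}$. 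For the first term I would use $\psi(m) = \log m + O(1/m)$ together with $\psi(\tilde{k}_1) \approx \log(N P_{XY}(x,y,\rho))$, splitting into two regimes: at a discrete point ($\rho = 0$, $\tilde{k}_1$ equal to the number of ties) Assumption~2 guarantees the atom has strictly positive mass, so the law of large numbers gives $\tilde{k}_1 \approx N P_{XY}(x,y,0)$; at a non-atom point $\tilde{k}_1 = k$ and $k \approx N P_{XY}(x,y,\rho)$ by the calibration above. In either regime the three terms telescope, the $\log N$ factors cancel, and one is left with $\xi_1 \approx \log\frac{P_{XY}(x,y,\rho)}{P_X(x,\rho)P_Y(y,\rho)}$, which converges to $\log f(x,y)$ by the differentiation theorem.

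The hard part will be the final passage from pointwise convergence of the conditional expectations to convergence of their integral against $P_{XY}$, that is, establishing uniform integrability. The difficulty is that the log-count terms $\E[\log(n_{x,1}+1)]$ and $\E[\psi(\tilde{k}_1)]$ can be large in low-probability regions, so I must dominate $|\E[\xi_1 \mid (X_1,Y_1)=(x,y)]|$ by a fixed $P_{XY}$-integrable function, uniformly in $N$. Here Assumption~3 ($\int |\log f| \, dP_{XY} < \infty$) supplies integrability of the limiting object $\log f$, and the strategy is to bound the expected log-counts against $\log P_X(x,\rho)$, $\log P_Y(y,\rho)$, and $\log P_{XY}(x,y,\rho)$ with errors controlled uniformly (Jensen's inequality in one direction, the concentration bounds above in the other), so that the dominating function reduces to a constant plus $|\log f|$. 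Carrying out this control simultaneously across the discrete atoms and the continuous region, and uniformly in $N$, is the most delicate part of the argument; the finiteness of the discrete set (Assumption~2) is what keeps the atomic contributions separated and finite, while $k_N \to \infty$ and $k_N \log N / N \to 0$ are precisely what force the radius to vanish and the counts to concentrate.
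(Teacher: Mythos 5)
Your reduction to $\E[\xi_1]$, the use of the Lebesgue--Besicovitch differentiation theorem to write $\log f(x,y)=\lim_{r\to 0}\log\bigl(P_{XY}(x,y,r)/(P_X(x,r)P_Y(y,r))\bigr)$, and the pointwise analysis (calibration $NP_{XY}(x,y,\rho)\approx k$, binomial concentration of $n_{x,1},n_{y,1}$, separate treatment of atoms via Assumption~2) all coincide with the paper's proof. The gap is exactly where you flag ``the hard part'': the passage from pointwise convergence of $\E[\xi_1\mid (X,Y)=(x,y)]$ to convergence of the integral. You propose dominated convergence with dominating function $C+|\log f(x,y)|$, arguing that the expected log-counts can be matched to $\log P_X(x,\rho)$, $\log P_Y(y,\rho)$, $\log P_{XY}(x,y,\rho)$ with uniformly controlled errors. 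But after that matching, what you must dominate is (up to bounded errors) $\E_\rho\bigl[\log\frac{P_{XY}(x,y,\rho)}{P_X(x,\rho)P_Y(y,\rho)}\bigr]$, i.e.\ the \emph{finite-radius} log-ratio at the random radius $\rho$, and this is not bounded by $C+|\log f(x,y)|$ uniformly in $N$: the quotient at positive radius can be far larger or smaller than its $r\to 0$ limit. A valid dominating function would have to control the maximal object $\sup_{r>0}\bigl|\log\frac{P_{XY}(x,y,r)}{P_X(x,r)P_Y(y,r)}\bigr|$, and Assumption~3 (an $L^1$-type condition on the limit) does not imply integrability of such a maximal function --- this is the familiar failure of the Hardy--Littlewood maximal operator to be bounded on $L^1$; one would need something like an $L\log L$ hypothesis. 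So the domination claim is unjustified and, as stated, cannot be derived from Assumptions 1--3.

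The paper avoids this obstacle by a different limit-interchange device: Egoroff's theorem. Since $P_{XY}$ is a finite measure and the log-ratio converges $P_{XY}$-a.e.\ on the continuous region $\Omega_3=\{f>0,\,P_{XY}(x,y,0)=0\}$, for any $\epsilon>0$ there is a set $E$ with $P_{XY}(E)<\epsilon$ and $\int_E|\log f|\,dP_{XY}<\epsilon$ off which the convergence is \emph{uniform}, yielding a rate $\delta_N\to 0$ independent of $(x,y)$; on the exceptional set $E$ one uses only the deterministic bound $|\xi_1|\le 2\log N$, so its contribution is at most $(2\log N+1)\epsilon$, which is killed by letting $\epsilon=\epsilon_N\to 0$ faster than $1/\log N$. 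If you want to salvage your write-up, replace the uniform-integrability step by this Egoroff argument (keeping your pointwise analysis intact, and keeping your separate finite-atom treatment, which is fine since $\Omega_2$ has finitely many points each of positive mass); alternatively you would have to add a hypothesis strong enough to make a maximal-function domination true, which changes the theorem.
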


Notice that the assumptions are satisfied whenever (1) the distribution is (finitely) discrete; (2) the distribution is continuous; (3) some dimensions are (countably) discrete and some dimensions are continuous; (4) a (finite) mixture of the previous cases. Most real world data can be covered by these cases.
A sketch of the proof is below with the full proof in the supplementary material.

%Proof Sketch of Thm2.
\begin{proof}
(Sketch) We start with an explicit form of the Radon-Nikodym derivative  $dP_{XY}/(dP_XP_Y)$.

\begin{lemma}
\label{lem:rnd}
For almost every $(x,y) \in \mathcal{X} \times \mathcal{Y}$, we have
\begin{eqnarray}
\frac{dP_{XY}}{dP_XP_Y}(x,y) = f(x,y) = \lim_{r \to 0} \frac{P_{XY}(x,y,r)}{P_X(x,r)P_Y(y,r)}.
\end{eqnarray}
\end{lemma}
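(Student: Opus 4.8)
The plan is to read the right-hand ratio as an $L^1$-average of the Radon--Nikodym derivative against the product measure and then differentiate. Abbreviate $\mu \equiv P_X P_Y$ on $\mathcal{X}\times\mathcal{Y}=\mathbb{R}^{d_X}\times\mathbb{R}^{d_Y}$; since $I(X;Y)$ is defined only when $P_{XY}\ll\mu$, the derivative $f=dP_{XY}/d\mu$ exists and satisfies $f\in L^1(\mu)$ (indeed $\int f\,d\mu=P_{XY}(\mathcal{X}\times\mathcal{Y})=1$). The first step is the bookkeeping identity
$$\frac{P_{XY}(x,y,r)}{P_X(x,r)\,P_Y(y,r)}=\frac{\int_{E_r}f\,d\mu}{\mu(E_r)},\qquad E_r\equiv\{(a,b):\|a-x\|\le r,\ \|b-y\|\le r\},$$
whose numerator is the defining property $P_{XY}(E_r)=\int_{E_r}f\,d\mu$ of the derivative and whose denominator is $\mu(E_r)=(P_XP_Y)(B(x,r)\times B(y,r))=P_X(x,r)\,P_Y(y,r)$ by the product structure (Tonelli). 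This reduces the lemma to showing that the $\mu$-averages of $f$ over $E_r$ converge to $f(x,y)$ as $r\to0$, for $\mu$-a.e. $(x,y)$.

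Next I would observe that the family $\{E_r\}_{r>0}$ is exactly the family of closed balls of radius $r$ centered at $(x,y)$ for the norm $\rho(u,v)\equiv\max(\|u\|,\|v\|)$ on $\mathbb{R}^{d_X+d_Y}$, whose unit ball is the convex, centrally symmetric product of the two Euclidean unit balls. This is what lets me dispense with any doubling hypothesis on $\mu$: the Besicovitch covering theorem holds for the balls of any fixed norm on $\mathbb{R}^n$ (with covering constant depending only on the norm and the dimension), and hence the Besicovitch differentiation theorem holds for an arbitrary Radon measure $\mu$ and any $g\in L^1_{\mathrm{loc}}(\mu)$ along these $\rho$-balls. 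Applying it with $g=f$ gives $\lim_{r\to0}\mu(E_r)^{-1}\int_{E_r}f\,d\mu=f(x,y)$ for $\mu$-a.e. $(x,y)$, which together with the identity above is the claim. Two small points close it out: for $\mu$-a.e. $(x,y)$ one has $\mu(E_r)>0$ for all $r>0$ (the points whose small balls are $\mu$-null form the complement of the support, a $\mu$-null set), so the ratio is well defined along the limit; and since $P_{XY}\ll\mu$, the $\mu$-null exceptional set is also $P_{XY}$-null, so the identity in fact holds $P_{XY}$-a.e., which is the form used when averaging the integrand over $P_{XY}$ samples.

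The hard part is precisely the differentiation step, and specifically the geometry behind it. The tempting shortcut is to sandwich $E_r$ between Euclidean balls of radii $r$ and $r\sqrt2$ and to quote the ``nicely shrinking sets'' version of the Lebesgue differentiation theorem; but that version needs a comparison $\mu(B_{\mathrm{Euc}}(\cdot,r\sqrt2))\lesssim\mu(B_{\mathrm{Euc}}(\cdot,r))$, a doubling-type estimate that a product measure $P_XP_Y$ simply need not satisfy. The robust fix is to treat $E_r$ intrinsically as a norm-ball and invoke the norm-ball Besicovitch covering theorem, which requires no doubling at all; the only care needed is to verify that $\rho$ is a genuine norm (so its balls are homothets of a fixed convex symmetric body) and that the resulting differentiation theorem is being applied to a general, possibly non-doubling, Radon probability measure.
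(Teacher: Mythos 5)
Your proof is correct and takes essentially the same route as the paper's: both reduce the claim to the Lebesgue--Besicovitch differentiation theorem for a general Radon measure, applied with $g = dP_{XY}/d(P_XP_Y) \in L^1(\mu)$ and $\mu = P_XP_Y$, so that the ratio $P_{XY}(x,y,r)/\bigl(P_X(x,r)P_Y(y,r)\bigr)$ is recognized as the $\mu$-average of $g$ over the ball of radius $r$. The one place you go beyond the paper is in flagging that these balls are max-norm (product) balls rather than Euclidean balls and that Besicovitch covering---hence differentiation without any doubling hypothesis---still holds for balls of an arbitrary norm; the paper's citation of the Euclidean-ball statement silently elides this point, so your treatment is a careful refinement of the same argument rather than a different one.
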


Notice that $\widehat{I}_N(X;Y) = (1/N) \sum_{i=1}^N \xi_i$, where all $\xi_i$ are identically distributed. Therefore, $\E[\widehat{I}^{(N)}(X;Y)] = \E[\xi_1]$. Therefore, the bias can be written as:
\begin{eqnarray}
\Big|\, \E[\widehat{I}^{(N)}(X;Y)] - I(X;Y) \,\Big| &=& \Big|\, \E_{XY} \left[ \E \left[ \xi_1 | X,Y \right] \right] - \int \log f(X,Y) P_{XY} \,\Big| \,\notag\\
&\leq& \int \Big|\, \E \left[ \xi_1| X,Y \right] - \log f(X,Y) \,\Big| \, dP_{XY} \,.
\end{eqnarray}

Now we upper bound $\Big|\, \E \left[\, \xi_1 | X,Y \,\right] - \log f(X,Y) \,\Big|$ for every $(x,y) \in \mathcal{X} \times \mathcal{Y}$ such that Lemma~\ref{lem:rnd} is satisfied. Note that the probability of having $(x,y)$ not satisfying Lemma~\ref{lem:rnd} is zero, so we ignore this case. We then divide the domain into three parts as $\mathcal{X} \times \mathcal{Y} = \Omega_1 \bigcup \Omega_2 \bigcup \Omega_3$ where
\begin{itemize}
    \item $\Omega_1 = \{(x,y): f(x,y) = 0\} \,;$
    \item $\Omega_2 = \{(x,y): f(x,y) > 0, P_{XY}(x,y,0) > 0\}\,;$
    \item $\Omega_3 = \{(x,y): f(x,y) > 0, P_{XY}(x,y,0) = 0\} \;.$
\end{itemize}
We  show that $\lim_{N \to \infty} \int_{\Omega_i} \Big|\, \E \left[ \xi_1 | X,Y\right] - \log f(X,Y) \,\Big| \,dP_{XY} = 0$ for each $i \in \{1,2,3\}$ separately.

\begin{itemize}
    \item For $(x,y) \in \Omega_1$, we will show that $\Omega_1$ has zero probability with respect to $P_{XY}$, i.e. $P_{XY}(\Omega_1) = 0$. Hence, $\int_{\Omega_1} \Big|\, \E \left[ \xi_1 | X,Y\right] - \log f(X,Y) \,\Big| \,dP_{XY} = 0$.
    \item For $(x,y) \in \Omega_2$, $f(x,y)$ equals to $P_{XY}(x,y,0)/P_X(x,0)P_Y(y,0)$, so it can be viewed as a discrete part. We will first show that the $k$-nearest neighbor distance $\rho_{k,1} = 0$ with high probability. Then we will use the the number of samples on $(x,y)$ as $\tilde{k}_i$, and we will show that the mean of estimate $\xi_1$ is closed to $\log f(x,y)$.
    \item For $(x,y) \in \Omega_3$, it can be viewed as a continuous part. We use the similar proof technique as~\cite{Kra04} to prove that the mean of estimate $\xi_1$ is closed to $\log f(x,y)$.
\end{itemize}
%The complete proof is relegated to Section.~\ref{sec:bias} in the supplementary material.
\end{proof}

The following theorem bounds the variance of the proposed estimator.
\begin{thm}
\label{thm:var}
Assume in addition that
\begin{enumerate}
    %\item[6.] $\int_{\mathcal{X} \times \mathcal{Y}} \left(\, \log f(x,y) \,\right)^2 dP_{XY} < +\infty$.
    \item[6.] $(k_N \log N)^2 /N \to 0$ as $N \to \infty$.
\end{enumerate}
Then we have
\begin{eqnarray}
\lim_{N \to \infty} \Var \left[\, \widehat{I}^{(N)}(X;Y) \,\right] = 0 \;.
\end{eqnarray}
\end{thm}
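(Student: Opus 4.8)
The plan is to bound the variance with the Efron–Stein inequality, which is well suited to the fact that the summands $\xi_j$ are coupled through the $k$-nearest-neighbor distances. Write $Z=\widehat{I}^{(N)}(X;Y)=\frac1N\sum_{j=1}^N\xi_j$ as a function $g(W_1,\dots,W_N)$ of the i.i.d.\ samples $W_i=(X_i,Y_i)$, and let $Z^{(i)}$ be the value obtained when $W_i$ is replaced by an independent copy $W_i'$. Efron–Stein then gives
\[
\Var(Z)\;\le\;\tfrac12\sum_{i=1}^N\E\big[(Z-Z^{(i)})^2\big].
\]
The whole argument reduces to showing that resampling a single point changes the sum $\sum_j\xi_j$ by at most $O(k_N\log N)$, so that $|Z-Z^{(i)}|=O(k_N\log N/N)$ and, after summing $N$ squared terms, $\Var(Z)=O\big((k_N\log N)^2/N\big)$, which vanishes exactly under the added Assumption~6.

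First I would record the crude uniform bound $|\xi_j|=O(\log N)$, which follows from $1\le\tilde k_j\le N$, $0\le n_{x,j},n_{y,j}\le N$, and $\psi(m)=\log m+O(1/m)$; hence $|\xi_j-\xi_j^{(i)}|=O(\log N)$ whenever $\xi_j$ changes at all. Next I would classify the indices $j$ affected by resampling $W_i$. The dominant contribution comes from \emph{regular} points ($\rho_{j,xy}>0$) for which $W_i$ (or $W_i'$) lies inside the max-norm ball of radius $\rho_{j,xy}$, i.e.\ is among the $k$ nearest neighbors of $W_j$; for these, $\rho_{j,xy}$ and thus $n_{x,j},n_{y,j}$ can shift substantially. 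The key geometric fact is that in a finite-dimensional normed space a fixed point can be one of the $k$ nearest neighbors of at most $C_d\,k$ other points, where $C_d$ depends only on the dimension (a standard covering/cone argument, valid for the max-norm on $\mathcal X\times\mathcal Y$). Thus at most $O(k_N)$ regular indices are affected, each by $O(\log N)$, contributing $O(k_N\log N)$ to $\sum_j|\xi_j-\xi_j^{(i)}|$, which is the main term.

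The hard part will be the remaining indices $j$ whose $k$-NN radius does \emph{not} change but whose slab count $n_{x,j}$ or $n_{y,j}$ shifts by $\pm1$ because $W_i$ enters or leaves a coordinate slab $\{\|X_\ell-X_j\|\le\rho_{j,xy}\}$ without entering the joint ball: there can be order $N$ such indices, so a naive count is too weak. The resolution is that each such change is only $\big|\log\tfrac{n_{x,j}+1\pm1}{n_{x,j}+1}\big|=O(1/n_{x,j})$ with $n_{x,j}\ge k_N$, and the counting identity $\sum_{i}\sum_{j}\ind[\|X_i-X_j\|\le\rho_{j,xy}]/n_{x,j}=\sum_j 1=N$ redistributes these inverse weights. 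Combined with the pointwise bound $\Delta_i:=\sum_{j:\,\|X_i-X_j\|\le\rho_{j,xy}}1/n_{x,j}\le N/k_N$, this yields $\sum_i\Delta_i^2\le(N/k_N)\sum_i\Delta_i=N^2/k_N$, so the slab part contributes only $O(1/k_N)$ to the variance and vanishes since $k_N\to\infty$. The finite atoms (Assumption~2), where $\rho_{j,xy}=0$ forces $\tilde k_j$ to equal the cluster size $c$, are handled the same way: resampling touches every point of a cluster, but each $\xi_j$ moves by only $O(1/c)$ through the increments of $\psi(\tilde k_j)$ and the logarithms, so a cluster's total effect is $O(1)$. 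Putting the three pieces together gives $\sum_j|\xi_j-\xi_j^{(i)}|=O(k_N\log N)$ uniformly, hence $\Var(Z)=O\big((k_N\log N)^2/N+1/k_N\big)\to0$. I expect the careful bookkeeping of the slab-count and atom perturbations—ensuring that atoms of positive mass cannot inflate the bound despite having large in-degree—to be the technically most delicate step, with the geometric in-degree lemma under the max-norm (and ties) a close second.
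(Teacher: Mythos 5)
Your proof is correct and follows the paper's skeleton --- Efron--Stein, the crude $|\xi_j|=O(\log N)$ bound, the cone/covering in-degree lemma to show that only $O(k_N)$ indices can have their $k$-NN radius altered by one resampled point, and the $O(1/c)$-per-point treatment of atoms --- but it handles the delicate slab terms by a genuinely different argument. Where you use the double-counting identity $\sum_i \Delta_i = N$ together with the pointwise bound $\max_i \Delta_i \le N/k_N$ to get $\sum_i \Delta_i^2 \le N^2/k_N$, i.e.\ an \emph{average-case} control of the slab perturbations, the paper instead proves a weighted in-degree bound (the second claim of Lemma~\ref{lem:knn}): $\sum_i \frac{1}{k_i}\,\ind\{Z \textrm{ is among the } k_i\textrm{-nearest neighbors of } Z_i\} \le \gamma_d(\log N + 1)$, obtained from the unweighted cone bound by a summation-by-parts argument. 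That lemma makes the slab contribution $O(\log N)$ \emph{deterministically for every single resampled index}, so the paper gets the uniform bounded-differences estimate $\sup \big|\, \widehat{I}^{(N)}(Z) - \widehat{I}^{(N)}(Z^{(j)}) \,\big| = O(k\log N/N)$ and hence $\Var = O((k\log N)^2/N)$; this is cleaner, would also yield McDiarmid-type exponential concentration rather than only a variance bound, and vanishes under Assumption 6 alone, even for bounded $k$. Your route is more elementary (it avoids the weighted in-degree lemma entirely), but it controls the slab terms only in expectation and leaves an extra $O(1/k_N)$ term in the variance, so it genuinely needs $k_N \to \infty$ --- which is legitimate here, since Theorem~\ref{thm:var} assumes the hypotheses of Theorem~\ref{thm:bias} (in particular Assumption 1) in addition to Assumption 6. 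Both arguments are sound; yours trades the sharper uniform perturbation bound for a simpler counting identity, at the price of a slightly weaker rate and an extra appeal to $k_N \to \infty$.
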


\begin{proof}
(Sketch) We use the Efron-Stein inequality to bound the variance of the estimator. For simplicity, let $\widehat{I}^{(N)}(Z)$ be the estimate based on original samples $\{Z_1, Z_2, \dots, Z_N\}$, where $Z_i = (X_i, Y_i)$, and $\widehat{I}^{(N)}(Z_{\setminus j})$ is the estimate from $\{Z_1, \dots, Z_{j-1}, Z_{j+1}, \dots, Z_N\}$. Then a certain version of Efron-Stein inequality states that:
%\begin{eqnarray}
$\Var \left[\, \widehat{I}^{(N)}(Z) \,\right] \leq 2 \sum_{j=1}^N \left(\, \sup_{Z_1, \dots, Z_N} \Big|\, \widehat{I}^{(N)}(Z) - \widehat{I}^{(N)}(Z_{\setminus j}) \,\Big| \,\right)^2 \;.$
%\end{eqnarray}
Now recall that \begin{eqnarray}
\widehat{I}^{(N)}(Z) = \frac{1}{N} \sum_{i=1}^N \xi_i(Z) = \frac{1}{N} \sum_{i=1}^N \left(\, \psi(\tilde{k}_i) + \log N - \log (n_{x,i}+1) - \log (n_{y,i}+1) \,\right) \;,
\end{eqnarray}
Therefore, we have
\begin{eqnarray}
\sup_{Z_1, \dots, Z_N} \Big|\, \widehat{I}^{(N)}(Z) - \widehat{I}^{(N)}(Z_{\setminus j}) \,\Big| \leq \frac{1}{N} \sup_{Z_1, \dots, Z_N} \sum_{i=1}^N  \Big|\, \xi_i(Z) - \xi_i(Z_{\setminus j}) \,\Big| \;.
\end{eqnarray}
To upper bound the difference $|\, \xi_i(Z) - \xi_i(Z_{\setminus j}) \,|$ created by eliminating sample $Z_j$ for different $i$ 's we  consider three different cases: (1) $i=j$; (2) $\rho_{k,i} = 0$; (3) $\rho_{k,i} > 0$, and conclude that $\sum_{i=1}^N  |\, \xi_i(Z) - \xi_i(Z_{\setminus j}) \,| \leq O(k \log N)$ for all $Z_i$'s. The detail of the case study is in Section.~\ref{sec:var} in the supplementary material. Plug it into Efron-Stein inequality, we obtain:
\begin{eqnarray}
&&\Var \left[\, \widehat{I}^{(N)}(Z) \,\right] \leq 2 \sum_{j=1}^N \left(\, \sup_{Z_1, \dots, Z_N} \Big|\, \widehat{I}^{(N)}(Z) - \widehat{I}^{(N)}(Z_{\setminus j}) \,\Big| \,\right)^2 \,\notag\\
&\leq& 2 \sum_{j=1}^N \left(\, \frac{1}{N} \sup_{Z_1, \dots, Z_N} \sum_{i=1}^N  \Big|\, \xi_i(Z) - \xi_i(Z_{\setminus j}) \,\Big| \,\right)^2 = O((k \log N)^2 /N)\;.
\end{eqnarray}
By Assumption 6, we have $\lim_{N \to \infty} \Var \left[\, \widehat{I}^{(N)}(Z) \,\right] = 0$.
\end{proof}

Combining Theorem~\ref{thm:bias} and Theorem~\ref{thm:var}, we have the $\ell_2$ consistency of $\widehat{I}^{(N)}(X;Y)$.

%\begin{coro}
%Under Assumption. 1 -- 6 in Theorem.~\ref{thm:bias} and Theorem.~\ref{thm:var},
%\begin{eqnarray}
%\lim_{N \to \infty} \E \left[\, \left( \widehat{I}^{(N)}(X;Y) - I(X;Y) \,\right)^2 \,\right] = 0 \;.
%\end{eqnarray}
%\end{coro}

%Simulations
\section{Simulations}
\label{sec:simulation}
We evaluate the performance of our  estimator in a variety of (synthetic and real-world) experiments.

\begin{figure}[h]
	\begin{center}
	\includegraphics[width=.4\textwidth]{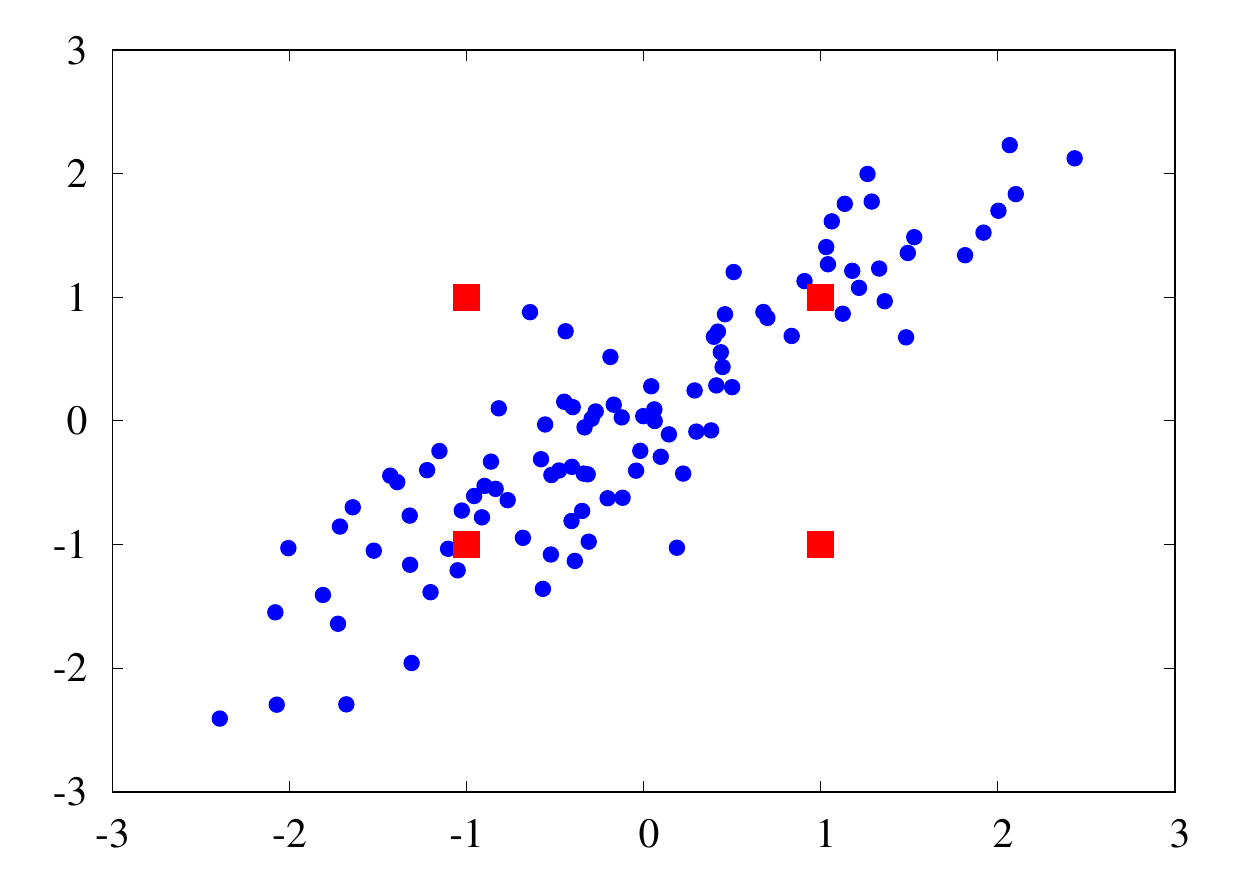}
    \put(-100,-8){$X$}
    \put(-195,65){$Y$}
	\hspace{0.5 cm}
	\includegraphics[width=.4\textwidth]{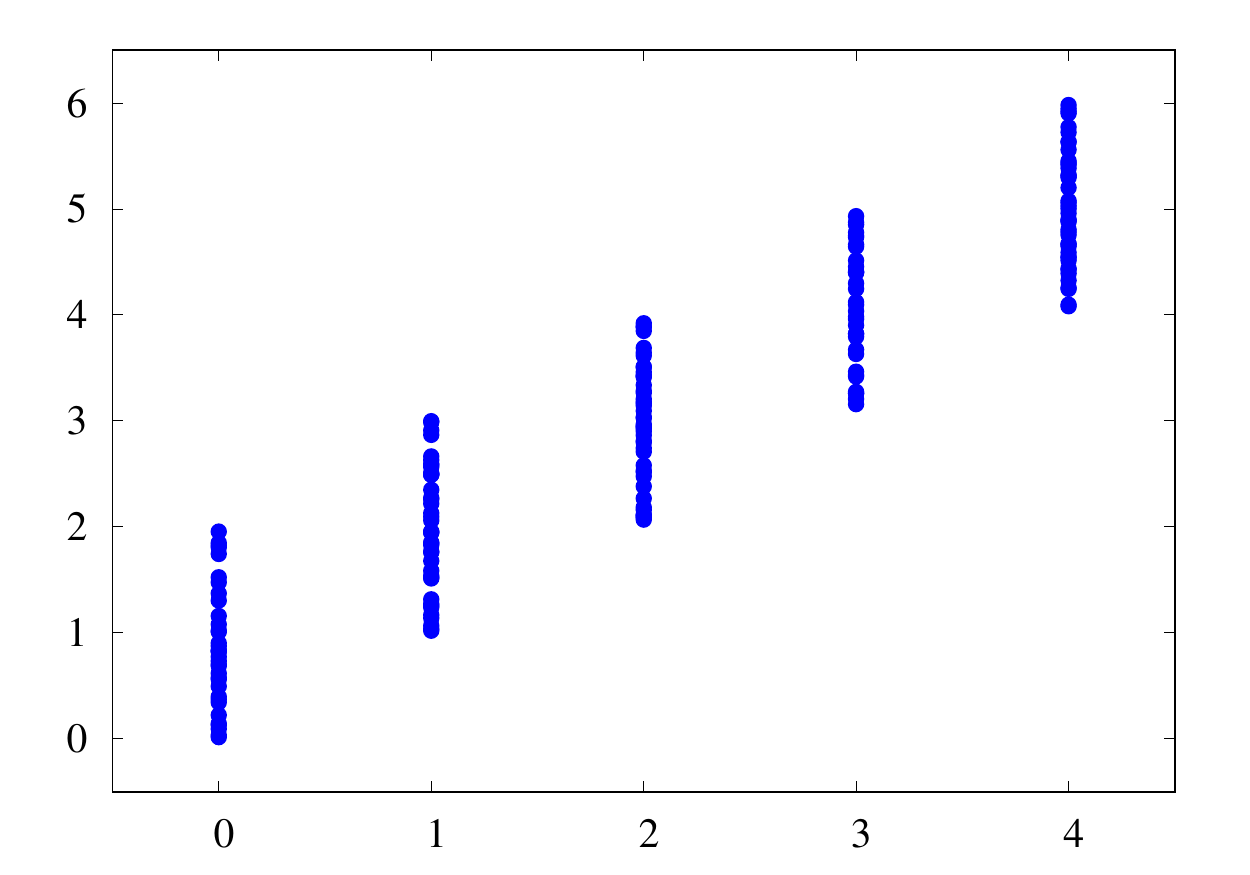}
    \put(-100,-8){$X$}
    \put(-195,65){$Y$}
	\end{center}
	\caption{Left: An example of samples from a mixture of continuous (blue) and discrete (red) distributions, where  red points denote multiple samples. Right: An example of samples from a discrete $X$ and a continuous $Y$.}
	\label{fig:example}
\end{figure}

{\bf Experiment I}. $(X,Y)$ is a mixture of one continuous distribution  and one discrete distribution.
The continuous distribution is jointly Gaussian with zero mean and covariance $\Sigma = \begin{pmatrix} 1 & 0.9 \\ 0.9 & 1\end{pmatrix}$, and the discrete distribution is $P(X=1, Y=1) = P(X=-1, Y=-1) = 0.45$ and $P(X=1, Y=-1) = P(X=-1, 1) = 0.05$. These two distributions are mixed with equal probability. The scatter plot of a set of samples from this distribution is shown in the left panel of Figure.~\ref{fig:example}, where the red squares denote multiple samples from the discrete distribution. For all  synthetic experiments, we compare our proposed estimator with a (fixed) partitioning estimator, an adaptive partitioning estimator~\cite{darbellay1999estimation} implemented by~\cite{szabo14information}, the KSG estimator~\cite{Kra04} and noisy KSG estimator (by adding Gaussian noise $N(0,\sigma^2 I)$ on each sample to transform all mixed distributions into continuous one). We plot the mean squared error versus number of samples in  Figure~\ref{fig:mse}. The mean squared error is averaged over 250 independent trials.

The KSG estimator is entirely misled by the discrete samples as expected.
The noisy KSG estimator performs better
but the added noise causes the estimate to degrade. In this experiment, the estimate is less sensitive to the noise added and the line is indistinguishable with the line for KSG.
The partitioning and
adaptive partitioning method quantizes all samples, resulting in an extra quantization error.
% have large mean squared error because they are designed for continuous distributions, hence they can not handle the multiple samples at the same point,  while proposed estimator is able to handle this.
Note that only the proposed estimator has error decreasing with the sample size.

%\begin{figure}[h]
%    \begin{center}
%    \includegraphics[width=.4\textwidth]{new_figures/simul1_mse}
%    \put(-100,-4){sample size}
%    \put(-160,25){\rotatebox{90}{mean squared error}}
%    \hspace{0.5cm}
%    \includegraphics[width=.4\textwidth]{new_figures/simul2_mse}
%    \put(-100,-4){sample size}
%    \put(-160,25){\rotatebox{90}{mean squared error}}
%    \end{center}
%    \caption{Mean squared error vs. sample size for experiment I (left) and experiment II (right).}
%    \label{fig:mse_1}
%\end{figure}

{\bf Experiment II}. $X$ is a discrete random variable and $Y$ is a continuous random variable.
 $X$ is uniformly distributed over integers $\{0, 1, \dots, m-1\}$ and $Y$ is uniformly distributed over the range $[X, X+2]$ for a given $X$. The ground truth $I(X;Y) = \log(m) - (m-1)\log(2)/m$. We choose $m=5$ and  a scatter plot of a set of samples is in the right panel of Figure.~\ref{fig:example}. Notice that in this case (and the following experiments) our proposed estimator degenerates to KSG if the hyper parameter $k$ is chosen the same, hence KSG is not plotted. In this experiment our proposed estimator outperforms other methods.

%%%%%%%%%%%%%%%%%%%%%%%%%%%%%
%MSE v.s. sample size figures
%%%%%%%%%%%%%%%%%%%%%%%%%%%%%

{\bf Experiment III.} Higher dimensional mixture.
 Let $(X_1, Y_1)$ and $(Y_2, X_2)$ have the same joint distribution as in experiment II and independent of each other.  We evaluate the mutual information between $X = (X_1, X_2)$ and $Y = (Y_1, Y_2)$. Then ground truth $I(X;Y) = 2(\log(m) - (m-1)\log(2)/m)$.  We also consider $X = (X_1, X_2, X_3)$ and $Y = (Y_1, Y_2, Y_3)$ where $(X_3, Y_3)$ have the same joint distribution as in experiment II and independent of $(X_1,Y_1), (X_2,Y_2)$. The ground truth  $I(X;Y) = 3(\log(m) - (m-1)\log(2)/m)$. The adaptive partitioning algorithm works only for one-dimensional $X$ and  $Y$ and is not compared here.

We can see that the performance of partitioning estimator is very bad because the number of partitions grows exponentially with dimension. Proposed algorithm suffers less from the curse of dimensionality. For the right figure, noisy KSG method has smaller error, but we point out that it is unstable with respect to the noise level added: as the noise level is varied from $\sigma=0.5$ to $\sigma=0.7$ and the performance varies significantly (far from convergence).

{\bf Experiment IV.} Zero-inflated Poissonization.
Here $X \sim {\rm Exp}(1)$ is a standard exponential random variable, and $Y$ is zero-inflated Poissonization of $X$, i.e., $Y = 0$ with probability $p$ and $Y \sim {\rm Poisson}(x)$ given $X = x$ with probability $1-p$. { Here the ground truth is $I(X;Y) =  (1-p)(2\log2 - \gamma - \sum_{k=1}^{\infty} \log k \cdot 2^{-k}) \approx (1-p)0.3012$, where $\gamma$ is Euler-Mascheroni constant. We repeat the experiment for no zero-inflation ($p=0$) and for $p = 15\%$. We find that the proposed estimator is comparable to adaptive partitioning for no zero-inflation and outperforms others for 15\% zero-inflation.

\begin{figure}
    \begin{center}
        \includegraphics[width=.45\textwidth]{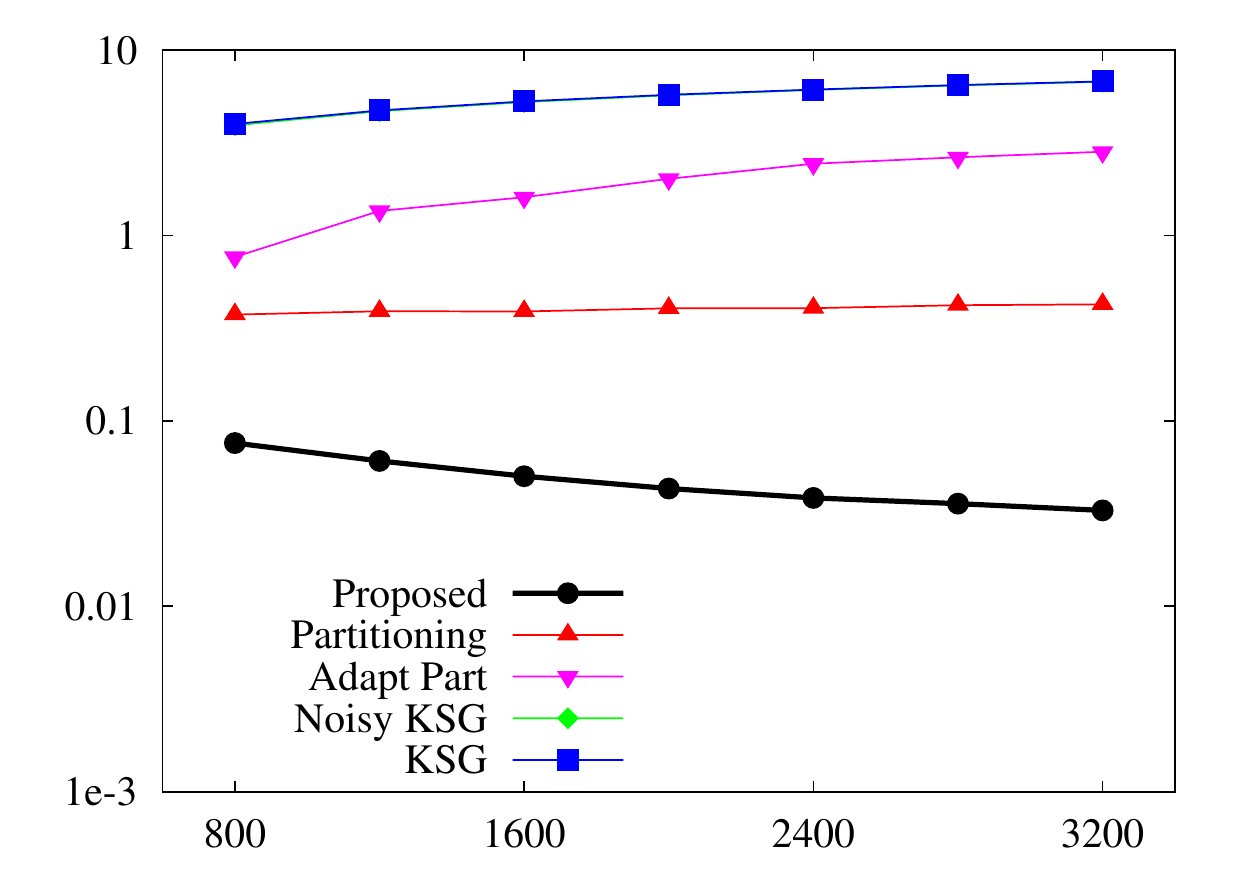}
        \put(-215,30){\rotatebox{90}{mean squared error}}
        \includegraphics[width=.45\textwidth]{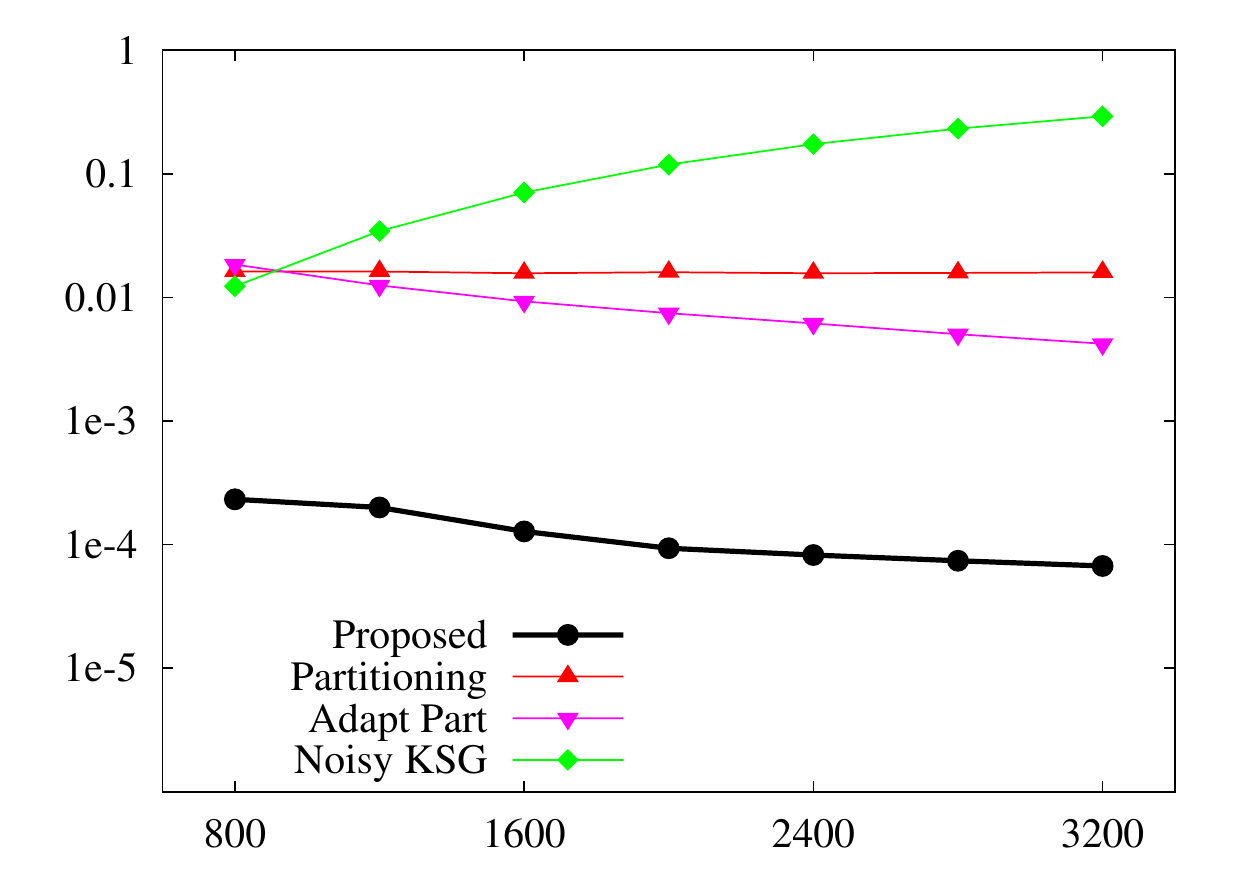}
        \vspace{0.1cm}
        \includegraphics[width=.45\textwidth]{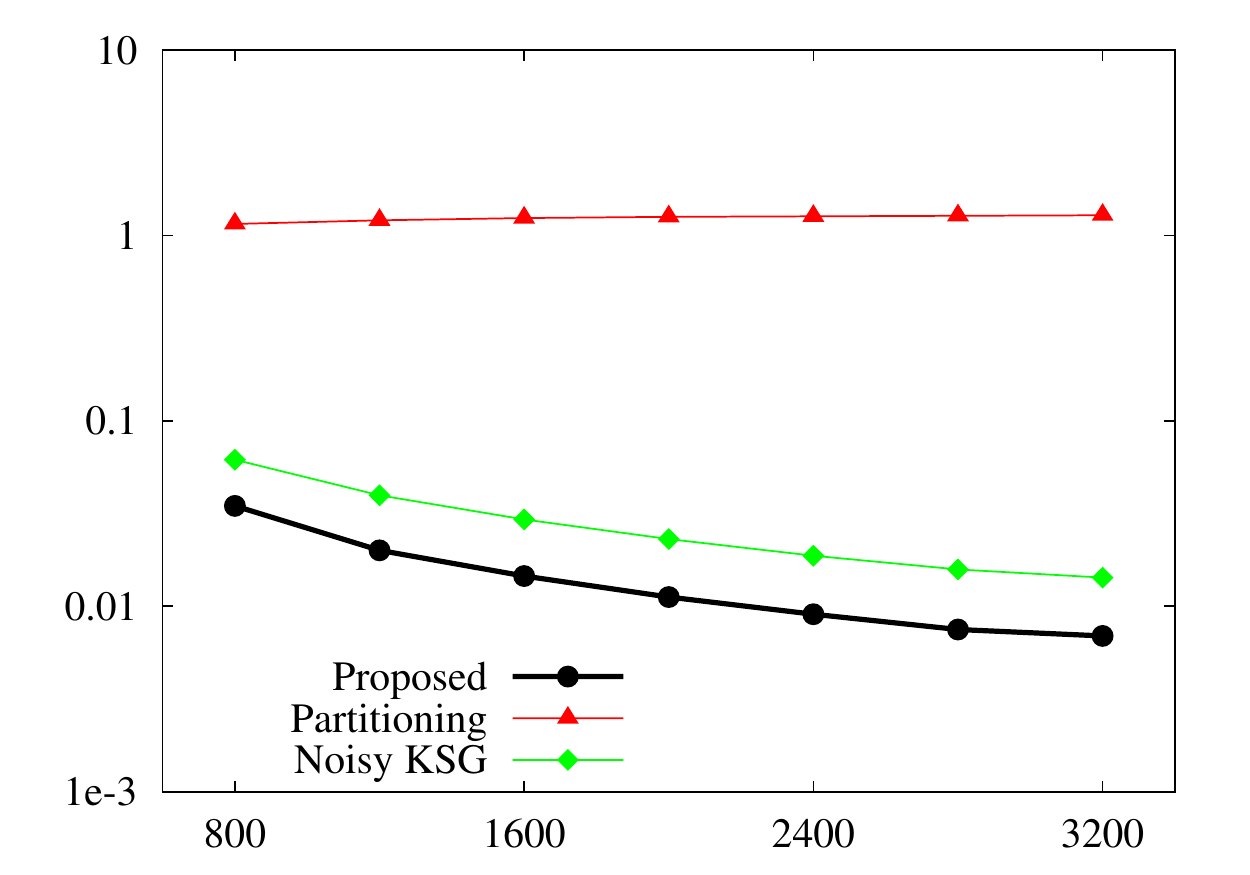}
        \put(-215,30){\rotatebox{90}{mean squared error}}
        \includegraphics[width=.45\textwidth]{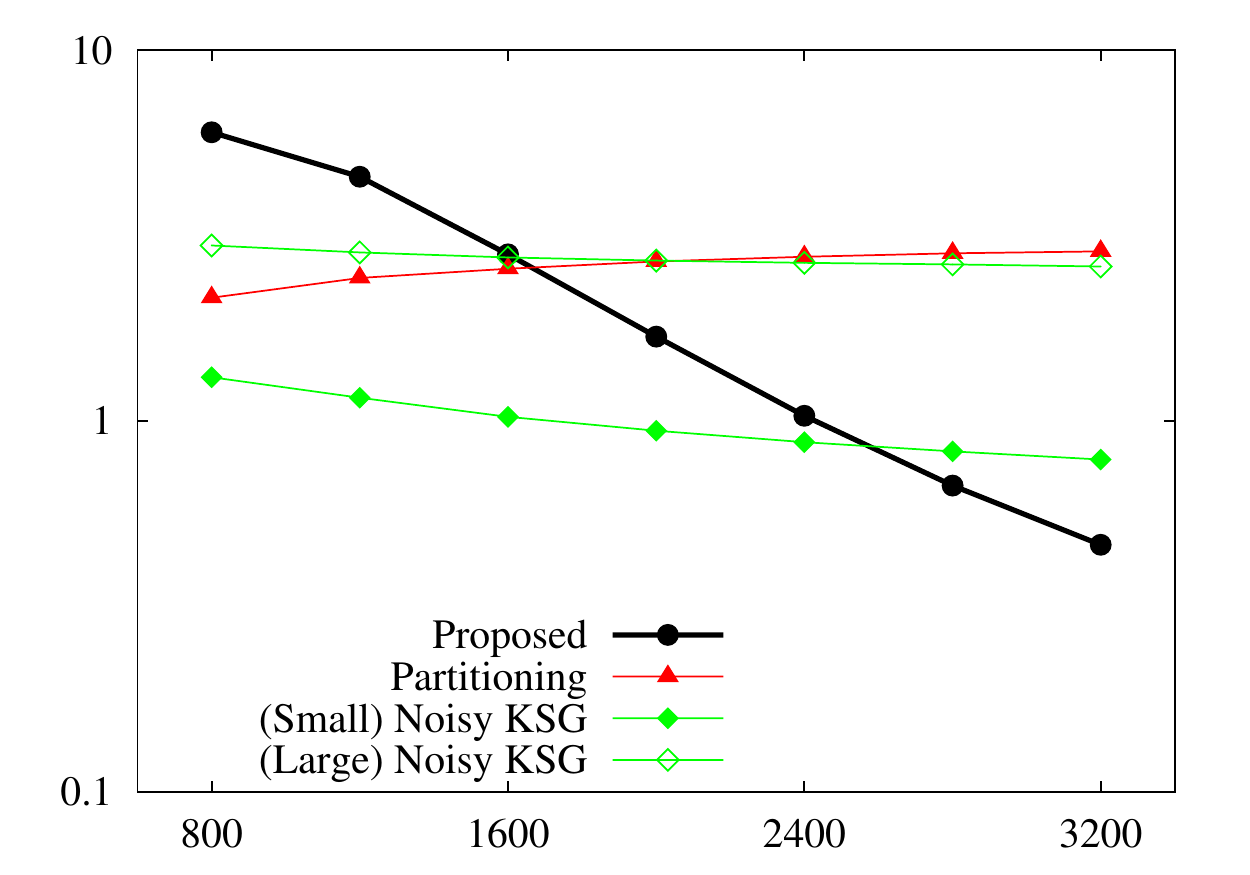}
        \vspace{0.1cm}
        \includegraphics[width=.45\textwidth]{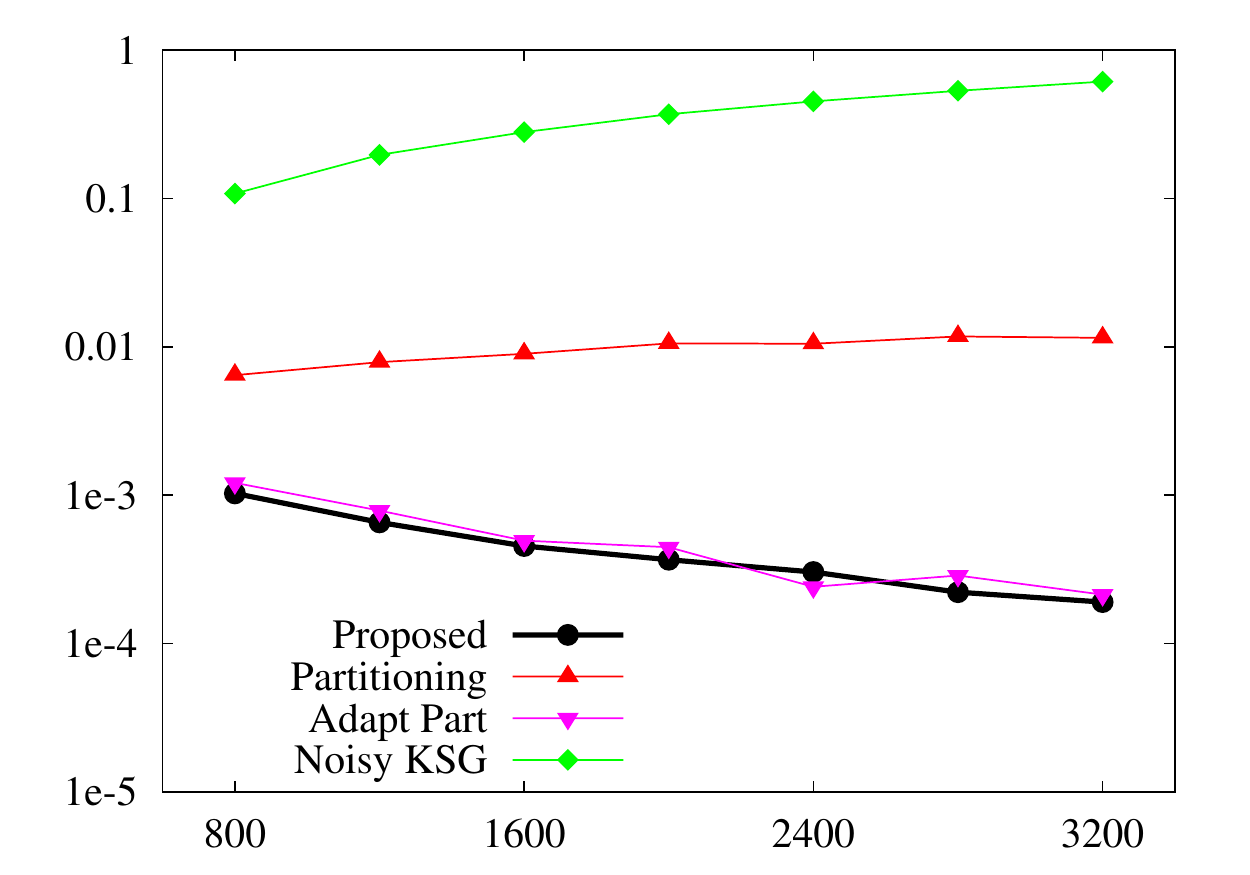}
        \put(-215,30){\rotatebox{90}{mean squared error}}
        \put(-125,-5){sample size}
        \includegraphics[width=.45\textwidth]{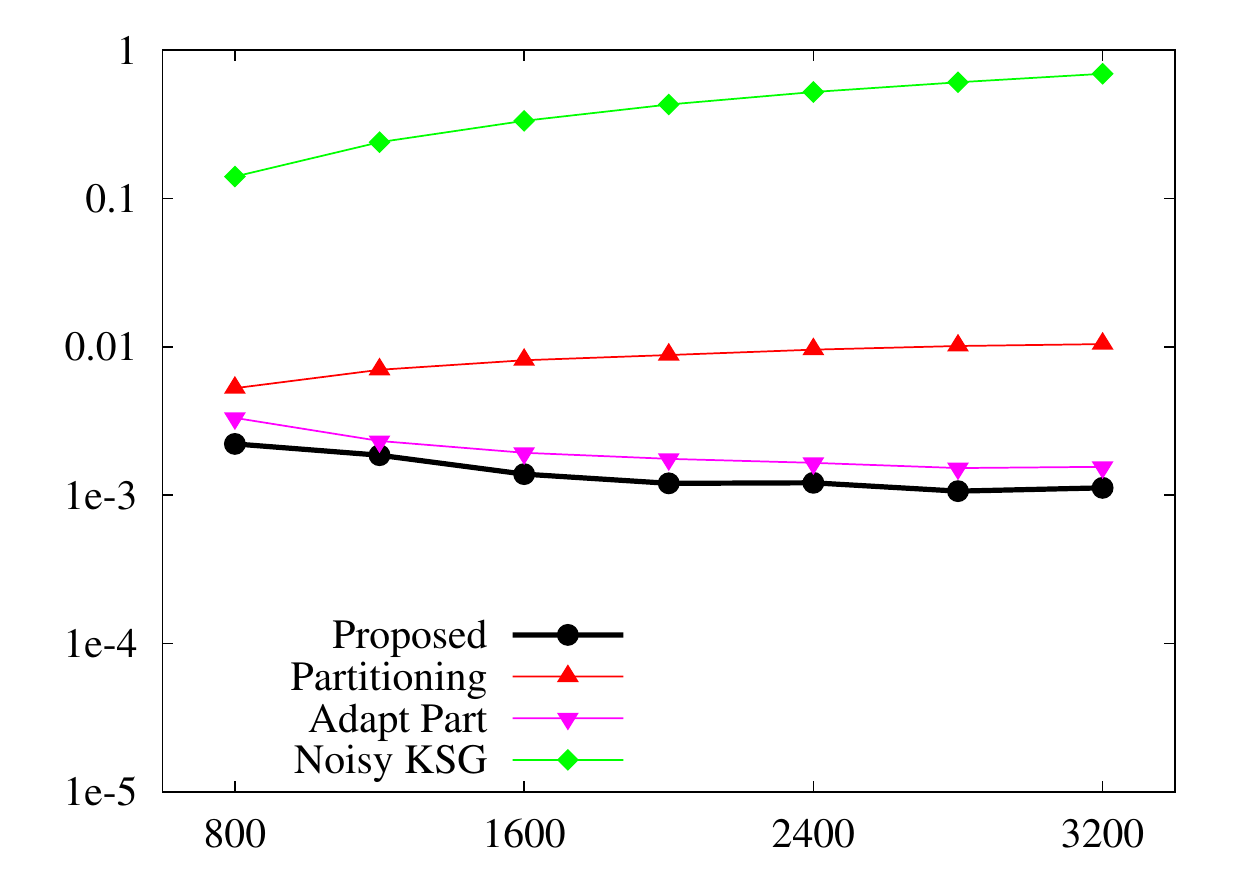}
        \put(-125,-5){sample size}
    \end{center}
    \caption{Mean squared error vs. sample size for synthetic experiments. Top row (left to right): Experiment I; Experiment II. Middle row (left to right): Experiment III for 4 dimensions and 6 dimensions. Bottom row (left to right): Experiment IV for $p = 0$ and $p=15\%$.}
    \label{fig:mse}
\end{figure}

We conclude that our proposed estimator is consistent for all these four experiments, and the mean squared error is always the best or comparable to the best. Other estimators are either not consistent or have large mean squared error for at least one experiment. %To see the convergence of the estimators, we make the violin plots for our proposed estimator and the other four estimators, for all four experiments. The figures are relegated to supplementary material.

%\begin{figure}[h]
%    \begin{center}
%    \includegraphics[width=.45\textwidth]{new_figures/simul3_1_mse}
%    \put(-110,-10){sample size}
%    \put(-180,30){\rotatebox{90}{mean squared error}}
%    \hspace{0.5cm}
%    \includegraphics[width=.45\textwidth]{new_figures/simul3_2_mse}
%    \put(-110,-10){sample size}
%    \put(-180,30){\rotatebox{90}{mean squared error}}
%    \end{center}
%    \caption{Mean squared error vs. sample size for experiment III. Left: 4 dimensions. Right: 6 dimensions.}
%    \label{fig:mse_2}
%\end{figure}
%
%\begin{figure}[h]
%    \begin{center}
%    \includegraphics[width=.45\textwidth]{new_figures/simul4_1_mse}
%    \put(-110,-10){sample size}
%    \put(-180,30){\rotatebox{90}{mean squared error}}
%    \hspace{0.5cm}
%    \includegraphics[width=.45\textwidth]{new_figures/simul4_2_mse}
%    \put(-110,-10){sample size}
%    \put(-180,30){\rotatebox{90}{mean squared error}}
%    \end{center}
%    \caption{Mean squared error vs. sample size for experiment IV. Left: No zero-inflation Right: Zero-inflation probability 0.15.}
%    \label{fig:mse_3}
%\end{figure}

%%%%%%%%%%%%%%%%%%%%%%%%%%%%%%%%
%Violin Plots go to supplementary
%%%%%%%%%%%%%%%%%%%%%%%%%%%%%%%%

\noindent {\bf Feature Selection Task}.
 Suppose there are a set of features modeled by independent random variables $(X_1, \dots, X_p)$ and the data $Y$ depends on a subset of features $\{X_i\}_{i \in S}$, where ${\rm card}(S) = q < p$. We observe the features $(X_1, \dots, X_p)$ and data $Y$ and try to select which features are related to $Y$.
 In many biological applications, some of the data is lost due to experimental reasons and  set to 0; even the available data is noisy. This setting naturally leads to  a mixture of continuous and discrete parts which we model by supposing that the observation is $\tilde{X}_i$ and $\tilde{Y}$, instead of $X_i$ and $Y$. Here $\tilde{X}_i$ and $\tilde{Y}$ equals to 0 with probability $\sigma$ and follows Poisson distribution parameterized by $X_i$ or $Y$ (which corresponds to the noisy observation) with probability $1-\sigma$.

In this experiment, $(X_1, \dots, X_{20})$ are i.i.d.\ standard exponential random variables and $Y$ is simply $(X_1, \dots, X_5)$. $\tilde{X}_i$ equals to 0 with probability 0.15, and $\tilde{X}_i \sim {\rm Poisson}(X_i)$ with probability 0.85. $\tilde{Y}_i$ equals to 0 with probability 0.15 and $\tilde{Y}_i \sim {\rm Exp}(Y_i)$ with probability 0.85. Upon observing $\tilde{X}_i$'s and $\tilde{Y}$, we evaluate ${\rm MI}_i = I(\tilde{X}_i;\tilde{Y})$ using different estimators, and select the features with top-$r$ highest mutual information. Since the underlying number of features is unknown, we iterate over all $r \in \{0, \dots, p\}$ and observe a receiver operating characteristic (ROC) curve, shown in left of Figure~\ref{fig:subset_sel_dream}. Compared to partitioning, noisy KSG and KSG estimators, we conclude that our proposed estimator outperforms other estimators.

\begin{figure}[h]
    \begin{center}
	\includegraphics[width=.45\textwidth]{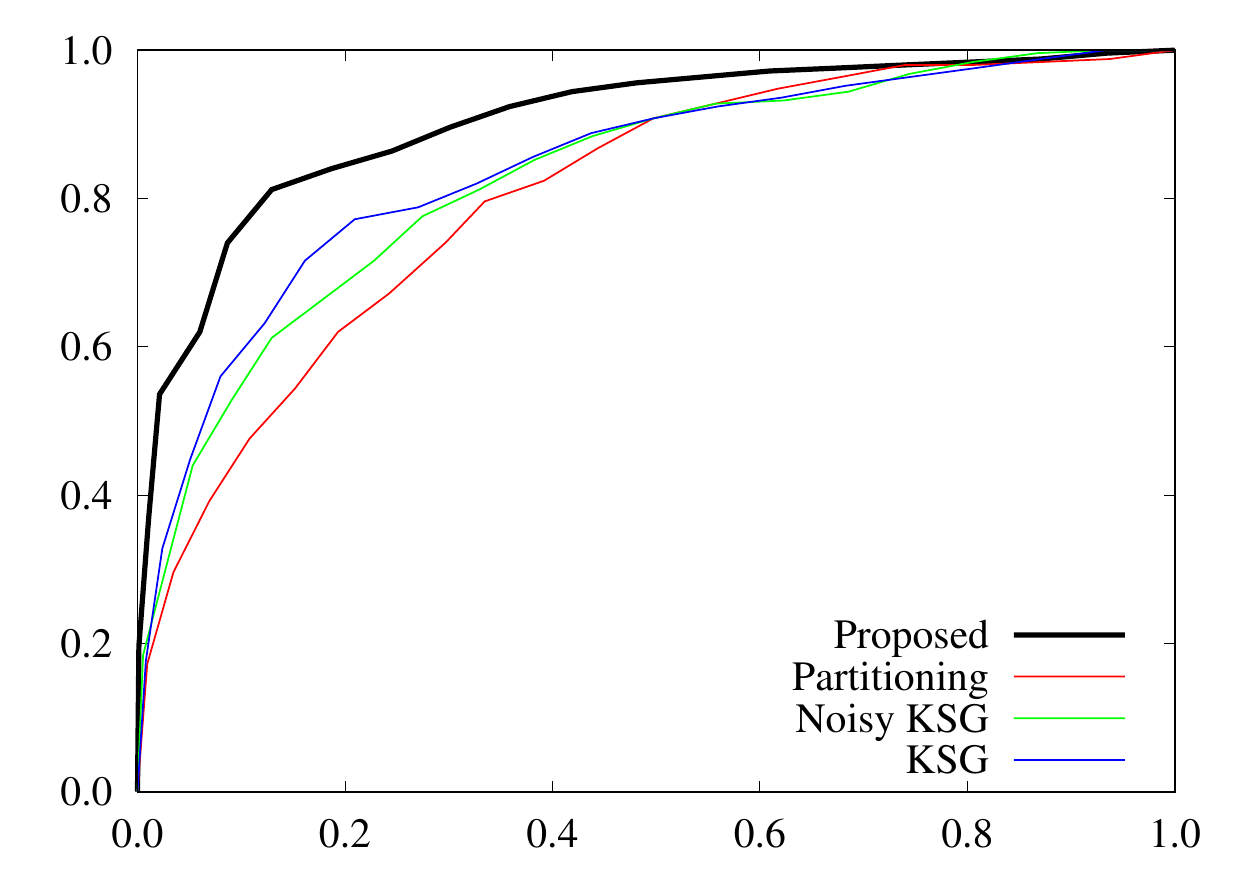}
    \put(-150,-10){False Positive Rate}
    \put(-215,30){\rotatebox{90}{True Positive Rate}}
    \hspace{0.5cm}
    \includegraphics[width=.45\textwidth]{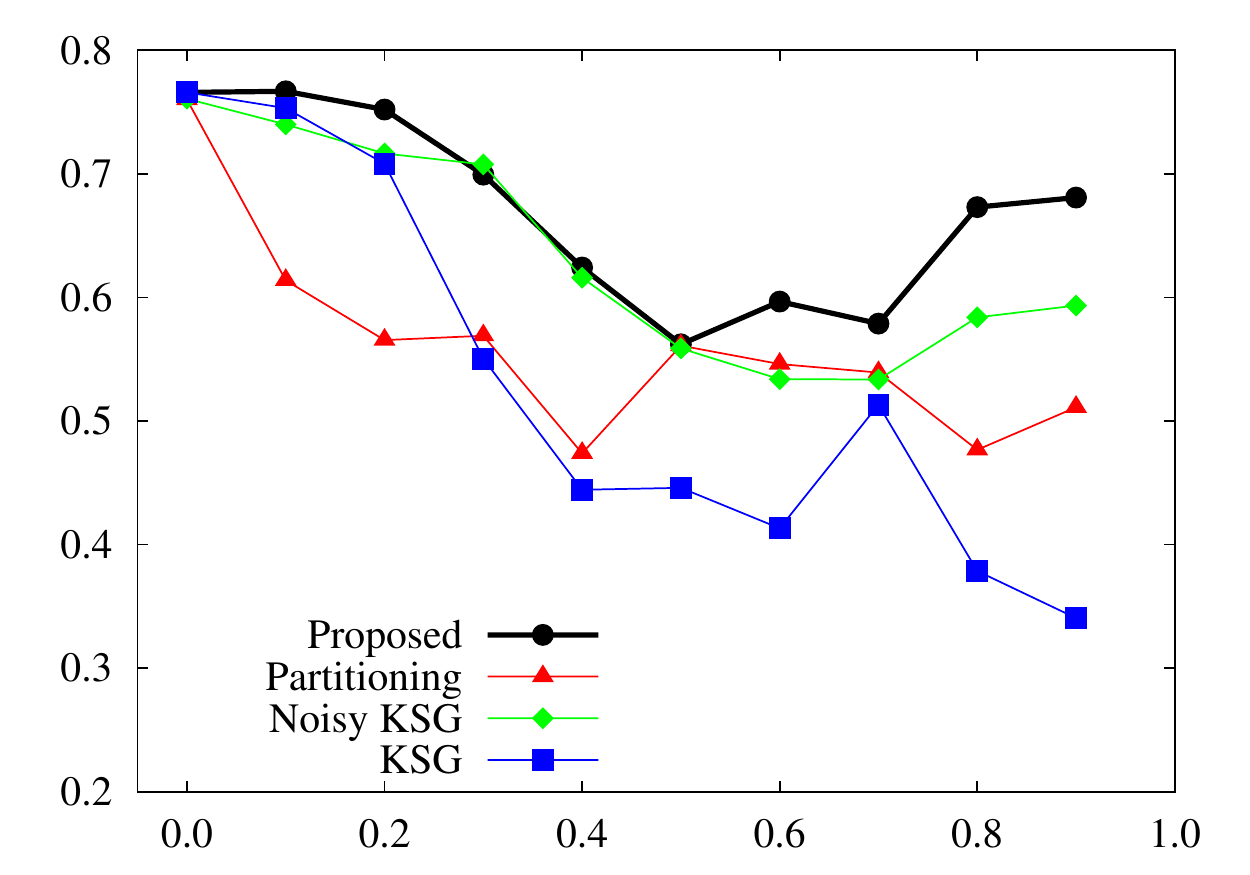}
    \put(-135,-10){Level of Dropout}
    \put(-215,60){\rotatebox{90}{AUROC}}
	\end{center}
	\caption{Left: ROC curve for the feature selection task. Right: AUROC versus levels of dropout for gene regulatory network inference.}
	\label{fig:subset_sel_dream}
\end{figure}

\noindent {\bf Gene regulatory network inference}.
%\label{sec:dream}
Gene expressions form a rich source of data from which to infer gene regulatory networks; it is now possible to sequence gene expression data from single cells using a technology called single-cell RNA-sequencing \cite{singleCell}. However, this technology has a problem called {\em dropout}, which implies that sometimes, even when the gene is present it is not sequenced \cite{scde,finak2015mast}. While we tested our algorithm on real single-cell RNA-seq dataset, it is hard to establish the ground truth on these datasets. Instead we resorted to a challenge dataset for reconstructing regulatory networks, called the DREAM5 challenge \cite{marbach2012wisdom}. The simulated (insilico) version of this dataset contains gene expression for 20 genes with 660 data point containing various perturbations. The goal is to reconstruct the true network between the various genes. We used mutual information as the test statistic in order to obtain AUROC for various methods. While the dataset did not have any dropouts, in order to simulate the effect of dropouts in real data, we simulated various levels of dropout and compared the AUROC (area under ROC) of different algorithms in the right of Figure~\ref{fig:subset_sel_dream} where we find the proposed algorithm to outperform the competing ones. %The choice of AUROC instead of the full ROC is for ease of visualization and space expediencies.

%\subsection{Real Data}
%
%{\blue TBD}

%\input{supplementary}

\section*{Acknowledgement}
%This work is supported by NSF SaTC award CNS-1527754, NSF CISE award CCF-1553452,
%NSF CISE award CCF-1617745.
%We thank the anonymous reviewers for their constructive feedback.
We thank Arman Rahimzamani and Himanshu Asnani for their constructive comments on the proofs of the lemmas, especially for the proof of Lemma~\ref{lem:lemma4}.
% ---------------------------------------------------------------------------------------------------------------------------------

%\clearpage
%Proofs

\section*{Appendix}
\appendix

%\section{Proof of Well-Definedness of Mutual Information}
%\label{sec:well_define}
%To prove the well-definedness of $I(X;Y)$, we need to show that $P_{XY}$ is absolutely continuous with respect to $P_X P_Y$. That is equivalent to show that for any measurable set $A  \subseteq \mathcal{X} \times \mathcal{Y}$ such that $P_XP_Y(A) = 0$, we have $P_{XY}(A) = 0$. We will prove the contrapositive statement: for any measurable set $A \subseteq  \mathcal{X} \times \mathcal{Y}$ such that $P_{XY}(A) > 0$, we have $P_XP_Y(A) > 0$. Consider a simple case that $A$ is a rectangle set, i.e. $A$ can be written as $A = A^x \times A^y$, where $A^x$, $A^y$ are measurable sets in $\mathcal{X}$ and $\mathcal{Y}$ respectively. Then
%\begin{eqnarray}
%P_XP_Y(A) &=& P_X(A^x) P_Y(A^y) = P_{XY}(A^x \times \mathcal{Y}) P_{XY}(\mathcal{X} \times A^y) \,\notag\\
%&\geq& P_{XY}(A) P_{XY}(A) = (P_{XY}(A))^2 > 0
%\end{eqnarray}
%
%Since $\mathcal{X}$ and $\mathcal{Y}$ are Euclidean spaces, for any measurable set $A \subseteq \mathcal{X} \times \mathcal{Y}$, we can decompose $A$ as a countable union of disjoint rectangle sets. Let $A = \bigcup_{i=1}^{\infty} A_i$, where $A_i = A_i^x \times A_i^y$. Since $P_{XY}(A) > 0$, there exists $A_i$ such that $P_{XY}(A_i) > 0$, so $P_XP_Y(A_i) > 0$. Therefore, $P_XP_Y(A) > 0$.
%
%Given that $P_{XY}$ is absolutely continuous with respect to $P_XP_Y$, by Radon-Nikodym theorem, there exists a function $f$ such that for any measurable set $A$, $\int_A f dP_XP_Y = P_{XY}(A)$. This $f$ is the Radon-Nikodym derivative $\frac{dP_{XY}}{dP_XP_Y}$ in~\eqref{def:mi}.

\section{Proof of Theorem~\ref{thm:bias}}
\label{sec:bias}
To prove the asymptotic unbiasedness of the estimator, we need to write the Radon-Nikodym derivative in an explicit form. The following lemma gives the explicit form of $\frac{dP_{XY}}{dP_XP_Y}$.

\begin{lemma}
\label{lem:rnd}
For almost every $(x,y) \in \mathcal{X} \times \mathcal{Y}$, $\frac{dP_{XY}}{dP_XP_Y} = f(x,y) = \lim_{r \to 0} \frac{P_{XY}(x,y,r)}{P_X(x,r)P_Y(y,r)}$.
\end{lemma}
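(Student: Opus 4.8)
The plan is to recognize the claimed identity as an instance of the Lebesgue--Besicovitch differentiation theorem, applied with the dominating measure $\mu := P_XP_Y$ and the differentiated measure $P_{XY}$. The first step is a bookkeeping observation that the denominator is itself the product measure of a box: writing $S(x,y,r) := \{(a,b)\in\mathcal{X}\times\mathcal{Y} : \|a-x\|\le r,\ \|b-y\|\le r\}$, the product structure gives $P_X(x,r)\,P_Y(y,r) = (P_XP_Y)(S(x,y,r))$, while by definition the numerator is $P_{XY}(x,y,r) = P_{XY}(S(x,y,r))$. Hence the assertion to be proved is exactly
\[
f(z) \;=\; \lim_{r\to 0}\frac{P_{XY}(S(z,r))}{\mu(S(z,r))}, \qquad z=(x,y),
\]
for almost every $z$, where $f = dP_{XY}/d\mu$ exists because $P_{XY}\ll\mu$ by the hypothesis in the Definition.

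Second, I would identify $S(z,r)$ as the closed ball of radius $r$ centered at $z$ in the metric induced by the norm $\|(u,v)\|_{\ast} := \max\{\|u\|,\|v\|\}$ on the product space $\mathcal{X}\times\mathcal{Y}$; this is a genuine norm whose unit ball $\{(u,v):\|u\|\le 1,\ \|v\|\le 1\}$ is a convex body symmetric about the origin. Since $P_{XY}$ and $\mu$ are finite Borel (hence Radon) measures on a finite-dimensional Euclidean space, and $S(z,r)$ ranges over the balls of this fixed norm, I would invoke the Lebesgue--Besicovitch differentiation theorem for Radon measures to conclude $\lim_{r\to 0}P_{XY}(S(z,r))/\mu(S(z,r)) = (dP_{XY}/d\mu)(z) = f(z)$ for $\mu$-almost every $z$. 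Finally, because $P_{XY}\ll\mu$, every $\mu$-null set is $P_{XY}$-null, so the identity holds for almost every $z$ with respect to both measures, which is the conclusion used in the proof of Theorem~\ref{thm:bias}.

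The genuine subtlety---and the main obstacle---is justifying that the differentiation theorem applies to the boxes $S(z,r)$, since the dominating measure $\mu=P_XP_Y$ is an arbitrary finite measure (it may have atoms and be mutually singular with Lebesgue measure), so the classical Lebesgue differentiation theorem does not suffice and one needs the full Besicovitch machinery. The resolution is the identification above: $S(z,r)$ are balls of the product max-norm, and the Besicovitch covering theorem is valid for the balls of any norm on a finite-dimensional space (with a covering constant depending only on the dimension and the norm), so the differentiation theorem for general Radon measures goes through; I would cite a standard reference for this extension rather than reprove it. As a sanity check at the discrete points, which carry positive $P_{XY}$-mass and must therefore be among the admissible points: any atom $z_0$ of $P_{XY}$ is automatically an atom of $\mu$ (otherwise $P_{XY}(\{z_0\})=\int_{\{z_0\}}f\,d\mu=0$), hence $z_0$ lies outside every $\mu$-null exceptional set, and continuity from above gives $P_{XY}(S(z_0,r))\to P_{XY}(\{z_0\})$ and $\mu(S(z_0,r))\to\mu(\{z_0\})$, whose ratio equals $f(z_0)$, confirming the formula directly on the finitely many discrete atoms allowed by Assumption~2.
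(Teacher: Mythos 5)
Your proposal is correct and follows essentially the same route as the paper: both apply the Lebesgue--Besicovitch differentiation theorem with the Radon measure $\mu = P_XP_Y$ and $f = dP_{XY}/dP_XP_Y$, identifying the sets $\{(a,b):\|a-x\|\le r,\ \|b-y\|\le r\}$ as the closed balls $\bar{B}_r(x,y)$ of the max-norm on the product space. If anything, you are more careful than the paper on two points it leaves implicit --- that the Besicovitch machinery must be known to hold for max-norm balls rather than only Euclidean balls, and that the $\mu$-a.e.\ conclusion transfers to $P_{XY}$-a.e.\ via absolute continuity.
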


Now notice that $\widehat{I}_N(X;Y) = \frac{1}{N} \sum_{i=1}^N \xi_i$, where all $\xi_i$ are identically distributed. Therefore, $\E[\widehat{I}_N(X;Y)] = \E[\xi_1]$. Therefore, the bias can be written as:
\begin{eqnarray}
\Big|\, \E[\widehat{I}_N(X;Y)] - I(X;Y) \,\Big| &=& \Big|\, \E_{XY} \left[ \E \left[ \xi_1 | X,Y \right] \right] - \int \log f(X,Y) P_{XY} \,\Big| \,\notag\\
&\leq& \int \Big|\, \E \left[ \xi_1| X,Y \right] - \log f(X,Y) \,\Big| \, dP_{XY} \,.
\end{eqnarray}

Now we will give upper bounds for $\Big|\, \E \left[\, \xi_1 | X,Y \,\right] - \log f(X,Y) \,\Big|$ for every $(x,y) \in \mathcal{X} \times \mathcal{Y}$. We will divide the space into three parts as $\mathcal{X} \times \mathcal{Y} = \Omega_1 \bigcup \Omega_2 \bigcup \Omega_3$ where
\begin{itemize}
    \item $\Omega_1 = \{(x,y): f(x,y) = 0\} \,;$
    \item $\Omega_2 = \{(x,y): f(x,y) > 0, P_{XY}(x,y,0) > 0\}\,;$
    \item $\Omega_3 = \{(x,y): f(x,y) > 0, P_{XY}(x,y,0) = 0\} \;.$
\end{itemize}
We will show that $\lim_{N \to \infty} \int_{\Omega_i} \Big|\, \E \left[ \xi_1 | (X,Y)=(x,y)\right] - \log f(x,y) \,\Big| \,dP_{XY} = 0$ for each $i \in \{1,2,3\}$ separately.
\\

$(x,y) \in \Omega_1$: In this case, we will show that $\Omega_1$ has zero probability with respect to $P_{XY}$.
\begin{eqnarray}
P_{XY}(\Omega_1) = \int_{\Omega_1} dP_{XY} = \int_{\Omega_1} f(X,Y) dP_X P_Y = \int_{\Omega_1} 0\, dP_XP_Y = 0
\end{eqnarray}
Therefore, $\int_{\Omega_1} \Big|\, \E \left[ \xi_1 | X,Y \right] - \log f(X,Y) \,\Big| \, dP_{XY} = 0$.
\\

$(x,y) \in \Omega_2$: In this case, $f(x,y)$ is just $P_{XY}(x,y,0)/P_X(x,0)P_Y(y,0)$. We will first show that the probability that the $k$-nearest neighbor distance $\rho_{k,1} > 0$ is small. Then with high probability, we will use the the number of samples on $(x,y)$ as $\tilde{k}_i$, and we will show that the mean of estimate $\xi_1$ is closed to $\log f(x,y)$.
\\

First, the probability of $\rho_{k,1} > 0$ is upper bounded by:
\begin{eqnarray}
&&\Pr \left(\, \rho_{k,1} > 0 \,|\, (X,Y) = (x,y)\,\right) \,\notag\\
&=& \sum_{m=0}^{k-1} {N-1 \choose m} P_{XY}(x,y,0)^m (1-P_{XY}(x,y,0))^{N-1-m} \,\notag\\
&\leq& \sum_{m=0}^{k-1} N^m (1-P_{XY}(x,y,0))^{N-k} \,\notag\\
&\leq& k N^k (1-P_{XY}(x,y,0))^{N-k} \,\notag\\
&\leq& kN^ke^{-(N-k)P_{XY}(x,y,0)} \;.
\end{eqnarray}

Conditioning on the event that $\rho_{k,1} = 0$, we have $\xi_1 = \psi(\tilde{k}_1) + \log N - \log(n_{x,1}+1) - \log(n_{y,1}+1)$. Then we write $\Big|\, \E \left[ \xi_1 | (X,Y)=(x,y), \rho_{k,1} = 0 \right] - \log f(x,y) \,\Big|$ as
\begin{eqnarray}
&&\Big|\, \E \left[ \xi_1 | (X,Y)=(x,y), \rho_{k,1} = 0 \right] - \log f(x,y) \,\Big| \,\notag\\
&=& \Big|\, \E \left[ \psi(\tilde{k}_1) + \log N - \log(n_{x,1}+1) - \log(n_{y,1}+1) | (X,Y)=(x,y), \rho_{k,1} = 0 \right] \,\notag\\
&&-\, \log \frac{P_{XY}(x,y,0)}{P_X(x,0)P_Y(y,0)} \,\Big| \,\notag\\
&\leq&  \Big|\, \E \left[ \log(n_{x,1}+1) | (X,Y)=(x,y), \rho_{k,1} = 0 \right] - \log NP_X(x,0) \,\Big| \,\notag\\
&&+\, \Big|\, \E \left[ \log(n_{y,1}+1) | (X,Y)=(x,y), \rho_{k,1} = 0 \right] - \log NP_Y(y,0) \,\Big| \,\notag\\
&&+\, \Big|\, \E \left[ \psi(\tilde{k}_1) | (X,Y)=(x,y), \rho_{k,1} = 0 \right] - \log NP_{XY}(x,y,0) \,\Big| \label{eq:caseii}
\end{eqnarray}

Notice that $\tilde{k}_1$ is the number of samples among $\{(X_i,Y_i)\}_{i=2}^N$ such that $(X_i, Y_i) = (x,y)$, where each $(X_i, Y_i) = (x,y)$ with probability $P_{XY}(x,y,0)$. Therefore, the distribution of $\tilde{k}_1$ is ${\rm Bino}(N-1, P_{XY}(x,y,0))$. Similarly, $n_{x,1}$ is the number of samples among $\{(X_i,Y_i)\}_{i=2}^N$ such that $X_i = x$, $n_{y,1}$ is the number of samples among $\{(X_i,Y_i)\}_{i=2}^N$ such that $Y_i = y$. Therefore, $n_{x,1} \sim {\rm Bino}(N-1, P_X(x,0))$ and $n_{y,1} \sim {\rm Bino}(N-1, P_Y(y,0))$. Notice that conditioning on $\rho_{k,i}=0$ is equivalent to conditioning on $\tilde{k}_i \geq k$, or $n_{x,i} \geq k$, $n_{y,i} \geq k$, so we propose the following lemma to deal with~\eqref{eq:caseii}.

\begin{lemma}
\label{lem:lemma4}
If $X$ is distributed as $\text{Bino}(N,p)$ and $m \geq 0$ , then:
\begin{eqnarray}
\left| \mathbb{E} \left[ \log(X+m) \middle| X \geq k \right] - \log (Np) \right| \leq \max \left\{ \left| \log \left( \frac{1+\frac{m}{Np}}{1 - \exp\left( -2 \frac{(Np-k)^2}{N} \right)} \right) \right|
 , \frac{1}{1 - \exp\left( -2 \frac{(Np-k)^2}{N} \right)} \frac{3}{2Np} \right\}.
\end{eqnarray}
\end{lemma}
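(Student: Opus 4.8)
The plan is to establish the two one-sided estimates separately and then combine them: if $-B \le \mathbb{E}[\log(X+m)\mid X\ge k]-\log(Np)\le A$ with $A,B\ge 0$, the absolute deviation is at most $\max\{A,B\}$, and I will arrange that $A$ is the first argument of the $\max$ and $B$ the second. Throughout I work in the meaningful regime $k\le Np$, where Hoeffding's inequality applied to the lower tail gives $\Pr(X<k)\le \exp(-2(Np-k)^2/N)=:\delta$, hence $q:=\Pr(X\ge k)\ge 1-\delta$.

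For the upper deviation I would use concavity of the logarithm. Jensen's inequality for the conditional law gives $\mathbb{E}[\log(X+m)\mid X\ge k]\le \log\mathbb{E}[X+m\mid X\ge k]$, and since $X\ge0$ we have $\mathbb{E}[X\mid X\ge k]=\mathbb{E}[X\,\mathbf{1}[X\ge k]]/q\le Np/q\le Np/(1-\delta)$. Therefore $\mathbb{E}[X+m\mid X\ge k]\le (Np+m)/(1-\delta)$, and subtracting $\log(Np)$ yields $\mathbb{E}[\log(X+m)\mid X\ge k]-\log(Np)\le \log\!\big((1+\tfrac{m}{Np})/(1-\delta)\big)$. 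This is exactly the first argument of the $\max$ (it is already nonnegative, so the outer absolute value is harmless).

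The delicate half is the lower deviation, i.e. controlling the conditional concavity gap. Using $m\ge0$ and $X\ge k\ge1$ I would reduce to $\log(X+m)\ge\log X$, so it suffices to bound $\log(Np)-\mathbb{E}[\log X\mid X\ge k]=\mathbb{E}[\log\tfrac{Np}{X}\mid X\ge k]$ from above. For this I would invoke the elementary global inequality $-\log t\le(1-t)+\frac{(t-1)^2}{2\min(1,t)}$, valid for all $t>0$ (it follows from the Lagrange form of Taylor's theorem applied in the two cases $t\ge1$ and $t<1$). Taking $t=X/(Np)$, the first-order term contributes $-(\mathbb{E}[X\mid X\ge k]-Np)/Np\le0$ because conditioning on an upper-tail event can only raise the mean; discarding it leaves $\mathbb{E}[\log\tfrac{Np}{X}\mid X\ge k]\le \mathbb{E}\big[\tfrac{(X-Np)^2}{2Np\,\min(Np,X)}\,\big|\,X\ge k\big]$.

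The main obstacle is bounding this quadratic remainder, since $1/\min(Np,X)$ blows up precisely where $X$ is small and $\log$ is steep. I would split the conditional expectation at the threshold $X\ge Np/2$: on this bulk event $\min(Np,X)\ge Np/2$, so using $\Var(X)=Np(1-p)\le Np$ the contribution is at most $\frac{1}{(Np)^2}\cdot\frac{\Var(X)}{q}\le \frac{1}{Np\,q}$; on the complementary lower-tail event $\{X<Np/2\}$, Hoeffding makes the probability super-polynomially small, so even after multiplying by the at-most-polynomial integrand the contribution is negligible and is absorbed into the constant. Consolidating, $\mathbb{E}[\log\tfrac{Np}{X}\mid X\ge k]\le \frac{3}{2Np(1-\delta)}$, the second argument of the $\max$, where the conservative constant $3/2$ leaves room for the tail term and the factor $1/q\le1/(1-\delta)$. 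Combining the two one-sided bounds through $\max$ finishes the proof; the only genuine work is the constant bookkeeping in this tail split, together with the routine edge cases $k=0$ or $m=0$, which are handled by carrying $X+m\ge1$ through the same argument.
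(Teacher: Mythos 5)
Your upper-deviation half (Jensen plus Hoeffding on $\Pr(X\ge k)$) and your skeleton for the lower deviation (expand $\log$ around $Np$, drop the first-order term because conditioning on $\{X\ge k\}$ can only raise the mean, then bound a quadratic remainder) coincide with the paper's proof. The genuine gap is in the one step you dismiss as ``constant bookkeeping'': the tail split at $X\ge Np/2$. On the tail you bound the integrand $\frac{(X-Np)^2}{2NpX}$ by roughly $Np/2$ and multiply by the Hoeffding bound $\Pr(X<Np/2)\le\exp\left(-2(Np/2)^2/N\right)=\exp\left(-Np^2/2\right)$. This is \emph{not} super-polynomially small uniformly in the parameters, because it depends on $p$ separately from $Np$: take $p=N^{-1/2}$, so $Np=\sqrt N\to\infty$ while $Np^2=1$; then the Hoeffding factor is $e^{-1/2}$, a constant, and your tail term is of order $\sqrt{N}\,e^{-1/2}$, which is nowhere near the required $O\left(1/(Np)\right)$. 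Replacing Hoeffding by the multiplicative Chernoff bound $\Pr(X\le Np/2)\le e^{-Np/8}$ restores uniformity (since $\sup_{u>0} u^2e^{-u/8}<\infty$), but the resulting tail contribution is on the order of tens of $\frac{1}{Np}$, far exceeding the $\frac{1}{2Np}$ of headroom left after your bulk term already contributes $\frac{1}{Np}$; so even the repaired argument cannot deliver the specific constant $\frac{3}{2}$ that the lemma asserts.

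The paper avoids any tail split with a device your proposal is missing: writing the remainder as $\frac{(X-Np)^2}{2X^2}+\frac{(X-Np)^2}{2(Np)^2}$, it controls the singular first term exactly, using $2i^2\ge(i+1)(i+2)$ (for $i\ge 4$) together with the identity $\frac{1}{(i+1)(i+2)}\binom{N}{i}p^i(1-p)^{N-i}=\frac{1}{(N+1)(N+2)p^2}\binom{N+2}{i+2}p^{i+2}(1-p)^{N-i}$, which converts $\mathbb{E}\bigl[\frac{(X-Np)^2}{2X^2}\,\big|\,X\ge k\bigr]$ into a second-moment computation for a $\mathrm{Bino}(N+2,p)$ variable and yields the bound $\frac{1}{Np}\cdot\frac{1}{1-\exp(-2(Np-k)^2/N)}$ uniformly in $p$; adding the variance term $\frac{1}{2Np}\cdot\frac{1}{1-\exp(-2(Np-k)^2/N)}$ gives exactly $\frac{3}{2Np}$ times the conditioning factor. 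A secondary, fixable inaccuracy in your write-up: the inequality $-\log t\le(1-t)+\frac{(t-1)^2}{2\min(1,t)}$ is true, but it does not follow from the Lagrange form of Taylor's theorem as you claim (for $t<1$ Lagrange only gives denominator $2\min(1,t)^2$); one needs the integral form of the remainder, $-\log t-(1-t)=\int_t^1\frac{1-s}{s}\,ds\le\frac{(1-t)^2}{2t}$, or a direct monotonicity argument.
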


By Assumption 2, $k/N \rightarrow 0$ as $N \rightarrow \infty$, then $(Np-k)^2/N= N (p-k/N)^2 \rightarrow \infty$, So for sufficiently large $N$, the RHS of Lemma~\ref{lem:lemma4} is upper bounded by $\max\{\frac{C_1 m}{Np}, \frac{C_2}{Np}\} \leq \frac{C(m+1)}{Np}$, where $C = \max\{C_1, C_2\}$ is some constant not depends on $N$. Therefore, by applying Lemma~\ref{lem:lemma4} with $m=1$, the first term of~\eqref{eq:caseii} is bounded by:
\begin{eqnarray}
&&\Big|\, \E \left[ \log(n_{x,1}+1) | (X,Y)=(x,y), \rho_{k,1} = 0 \right] - \log NP_X(x,0) \,\Big| \,\notag\\
&\leq& \Big|\, \E \left[ \log(n_{x,1}+1) | (X,Y)=(x,y), n_{x,i} \geq k \right] - \log (N-1)P_X(x,0) \,\Big| + \log \frac{N}{N-1} \,\notag\\
&\leq& \frac{2C}{(N-1)P_X(x,0)} + \frac{1}{N-1} \,\notag\\
&\leq& \frac{2C+1}{(N-1)P_X(x,0)} \leq \frac{4C+2}{NP_X(x,0)} \;.
\end{eqnarray}
Similarly, the second term of~\eqref{eq:caseii} is bounded by: $(4C+2)/(NP_Y(y,0))$. For the third term, notice that $|\psi(x) - \log(x)| \leq 1/x$ for every integer $ x \geq 1$, therefore, $|\psi(\tilde{k}_1) - \log(\tilde{k}_1)| \leq 1/\tilde{k}_1 \leq 1/k$. By applying Lemma~\ref{lem:lemma4} with $m=0$, the third term of ~\eqref{eq:caseii} is bounded by: $(2C+2)/(NP_{XY}(x,y,0))+1/k$. By Combining three terms together and noticing that $P_X(x,0) \geq P_{XY}(x,y,0)$ and $P_Y(y,0) \geq P_{XY}(x,y,0)$, we obtain
\begin{eqnarray}
&&\Big|\, \E \left[ \xi_1 | (X,Y)=(x,y), \rho_{k,1} = 0 \right] - \log f(x,y) \,\Big| \,\notag\\
&\leq& \frac{4C+2}{NP_X(x,0)} + \frac{4C+2}{NP_Y(y,0)} + \frac{2C+2}{NP_{XY}(x,y,0)} + \frac{1}{k} \leq \frac{10C+6}{NP_{XY}(x,y,0)} + \frac{1}{k} \;.
\end{eqnarray}
Combine with the case that $\rho_{i,xy} > 0$, we obtain that:
\begin{eqnarray}
&&\Big|\, \E \left[ \xi_1 | (X,Y)=(x,y)\right] - \log f(x,y) \,\Big| \,\notag\\
&\leq& \Big|\, \E \left[ \xi_1 | (X,Y)=(x,y), \rho_{k,1} > 0 \right] - \log f(x,y) \,\Big| \times \Pr \left(\, \rho_{k,1} > 0 \,\right) \,\notag\\
&&+\, \Big|\, \E \left[ \xi_1 | (X,Y)=(x,y), \rho_{k,1} = 0 \right] - \log f(x,y) \,\Big| \times \Pr \left(\, \rho_{k,1} = 0 \,\right) \,\notag\\
&\leq& (2 \log N + |\, \log f(x,y) \,|) k N^k e^{-(N-k)P_{XY}(x,y,0)} + \frac{10C+6}{NP_{XY}(x,y,0)} + \frac{1}{k} \;,
\end{eqnarray}
where the first term comes from triangle inequality and the fact that $|\xi_1| \leq 2 \log N$. Integrating over $\Omega_2$, we have:
\begin{eqnarray}
&&\int_{\Omega_2} \Big|\, \E \left[ \xi_1 | (X,Y)=(x,y)\right] - \log f(x,y) \,\Big| \,dP_{XY} \,\notag\\
&\leq& \int_{\Omega_2} (2 \log N + |\, \log f(x,y) \,|) k N^k e^{-(N-k)P_{XY}(x,y,0)} \, dP_{XY} \,\notag\\
&&+\, \frac{10C+6}{N} \int_{\Omega_2} \frac{1}{P_{XY}(x,y,0)} dP_{XY}+ \frac{1}{k} \,\notag\\
&\leq& (2 \log N + \int_{\Omega_2} |\, \log f(x,y) \,| dP_{XY}) k N^k e^{-(N-k) \inf_{(x,y)\in \Omega_2} P_{XY}(x,y,0)} \,\notag\\
&&+\, \frac{10C+6}{N} \mu(\Omega_2) + \frac{1}{k} \;,
\end{eqnarray}
where $\mu$ denotes counting measure. By Assumption 1, $k$ goes to infinity as $N$ goes to infinity, so $1/k$ vanishes as $N$ increases. By Assumption 1 and 2, $k/N$ goes to 0 and $\Omega_2$ has finite counting measure, so the second term also vanishes. Since $\Omega_2$ has finite counting measure, so $\inf_{(x,y)\in \Omega_2} P_{XY}(x,y,0) = \epsilon > 0$. By Assumption 3, $\int_{\Omega_2} |\, \log f(x,y) \,| dP_{XY} < +\infty$. Therefore, for sufficiently large $N$, the first term also vanishes. Therefore,
\begin{eqnarray}
\lim_{N \to \infty} \int_{\Omega_2} \Big|\, \E \left[ \xi_1 | (X,Y)=(x,y)\right] - \log f(x,y) \,\Big| \,dP_{XY} = 0 \;.
\end{eqnarray}

$(x,y) \in \Omega_3$: In this case, $P_{XY}(x,y,r)$ is a monotonic function of $r$ such that $P_{XY}(x,y,0) = 0$ and $\lim_{r \to \infty} P_{XY}(x,y,r) = 1$. Hence, we can view $\log \left(\, P_{XY}(x,y,r)/P_X(x,r)P_Y(y,r) \,\right)$ as a function of $P_{XY}(x,y,r)$, and it converges to $\log f(x,y)$ as $P_{XY}(x,y,r) \to 0$, for almost every $(x,y)$. Since $P_{XY}(\Omega_3) \leq 1 < +\infty$ and $\int_{\Omega_3} |\log f(x,y)| dP_{XY} < +\infty$. Then by Egoroff's Theorem, for any $\epsilon > 0$, there exists a subset $E \subseteq \Omega_3$ with $P_{XY}(E) < \epsilon$ and $\int_E |\log f(x,y)| dP_{XY} < \epsilon$, such that $\log \left(\, P_{XY}(x,y,r)/P_X(x,r)P_Y(y,r) \,\right)$ converges as $P_{XY}(x,y,r) \to 0$, uniformly on $\Omega_3 \setminus E$. For $(x, y) \in E$, notice that $|\xi_1| \leq 2 \log N$, so we have:
\begin{eqnarray}
&&\, \int_{E} \Big|\, \E \left[ \xi_1 | (X,Y)=(x,y)\right] - \log f(x,y) \,\Big| \,dP_{XY} \,\notag\\
&\leq& \int_{E} \left(\, 2\log N + |\, \log f(x,y) \,| \,\right) \,dP_{XY} < (2 \log N + 1)\epsilon \;.
\end{eqnarray}
By choosing $\epsilon$ appropriately, we will have $\lim_{N \to \infty} \int_{E} \Big|\, \E \left[ \xi_1 | (X,Y)=(x,y)\right] - \log f(x,y) \,\Big| \,dP_{XY} = 0$.
\\

Now for any $(x, y) \in \Omega_3 \setminus E$, since $P_{XY}(x,y,0) = 0$, we know that $\Pr \left(\, \rho_{k,1} = 0 \,|\, (X,Y) = (x,y)  \,\right) = 0$, so $\tilde{k}_1 = k$ with probability $1$. Conditioning on $\rho_{k,1} = r > 0$, the difference $\Big|\, \E \left[ \xi_1 | (X,Y)=(x,y)\right] - \log f(x,y) \,\Big|$ can be decomposed into four parts as follows
\begin{eqnarray}
&&\Big|\, \E \left[ \xi_1 | (X,Y)=(x,y)\right] - \log f(x,y) \,\Big| \,\notag\\
&=& \Big|\, \int_{r=0}^{\infty} \left(\, \E \left[ \xi_1 | (X,Y)=(x,y), \rho_{k,1} = r\right] - \log f(x,y) \,\right) dF_{\rho_{k,1}}(r) \,\Big| \,\notag\\
&\leq& \Big|\, \int_{r=0}^{\infty} \left(\, \log \frac{P_{XY}(x,y,r)}{P_X(x,r)P_Y(y,r)} - \log f(x,y) \,\right) dF_{\rho_{k,1}}(r) \,\Big| \,\label{eq:i4}\\
&&+\, \Big|\, \int_{r=0}^{\infty} \left(\, \psi(k) - \log N - \log P_{XY}(x,y,r) \,\right) dF_{\rho_{k,1}}(r) \,\Big| \,\label{eq:i1}\\
&&+\, \Big|\, \int_{r=0}^{\infty} \left(\, \E \left[ \log(n_{x,1}+1) | (X,Y,\rho_{k,1})=(x,y,r)\right] - \log (NP_X(x,r)) \,\right) dF_{\rho_{k,1}}(r) \,\Big| \,\label{eq:i2}\\
&&+\, \Big|\, \int_{r=0}^{\infty} \left(\, \E \left[ \log(n_{y,1}+1) | (X,Y,\rho_{k,1})=(x,y,r)\right] - \log (NP_Y(y,r)) \,\right) dF_{\rho_{k,1}}(r) \,\Big| \,\label{eq:i3}
\end{eqnarray}
here $F_{\rho_{k,1}}(r)$ is the CDF of the $k$-nearest neighbor distance $\rho_{k,1}$, given $(X,Y) = (x,y)$. By results of order statistics, its derivative with respect to $P_{XY}(x,y,r)$ is given by:
\begin{eqnarray}
\frac{d F_{\rho_{k,1}}(r)}{d P_{XY}(x,y,r)} &=& \frac{(N-1)!}{(k-1)!(N-k-1)!} P_{XY}(x,y,r)^{k-1} \left(\, 1-P_{XY}(x,y,r) \,\right)^{N-k-1}\;. \label{eq:f_rho}
\end{eqnarray}

Now we consider the four terms separately. For~\eqref{eq:i4}, since $\log \left(\, P_{XY}(x,y,r)/P_X(x,r)P_Y(y,r) \,\right)$ converges as $P_{XY}(x,y,r) \to 0$, uniformly on $\Omega_3 \setminus E$. So for every $(x,y) \in \Omega_3 \setminus E$, there exists an $r_N$ such that $P_{XY}(x,y,r_N) = 4k\log N/N$ and $|\log \left(\, P_{XY}(x,y,r)/P_X(x,r)P_Y(y,r) \,\right) - \log f(x,y)| < \delta_N$ for every $r \leq r_N$. Here $r_N$ may depend on $(x,y)$, but $\delta_N$ does not depend on $(x,y)$ and $\lim_{N \to \infty} \delta_N = 0$. Therefore,~\eqref{eq:i4} is upper bounded by:
\begin{eqnarray}
&&\Big|\, \int_{r=0}^{\infty} \left(\, \log \frac{P_{XY}(x,y,r)}{P_X(x,r)P_Y(y,r)} - \log f(x,y) \,\right) dF_{\rho_{k,1}}(r) \,\Big| \,\notag\\
&\leq& \int_{r=0}^{r_N} \Big|\, \log \frac{P_{XY}(x,y,r)}{P_X(x,r)P_Y(y,r)} - \log f(x,y) \,\Big| dF_{\rho_{k,1}}(r) \,\notag\\
&&+\,  \int_{r=r_N}^{\infty} \Big|\, \log \frac{P_{XY}(x,y,r)}{P_X(x,r)P_Y(y,r)} - \log f(x,y) \,\Big| dF_{\rho_{k,1}}(r) \,\notag\\
&\leq& \delta_N \Pr \left(\, \rho_{k,1} \leq r_N \,|\, (X,Y) = (x,y) \,\right) \,\notag\\
&&+\, \left(\, \sup_{r \geq r_N} \Big|\, \log \frac{P_{XY}(x,y,r)}{P_X(x,r)P_Y(y,r)} - \log f(x,y) \,\Big| \,\right) \Pr \left(\, \rho_{k,1} > r_N \,|\, (X,Y) = (x,y) \,\right) \;.
\end{eqnarray}
Firstly, the probability $\Pr \left(\, \rho_{k,1} \leq r_N \,|\, (X,Y) = (x,y) \,\right)$ is smaller than 1. Secondly, since $P_X(x,y,r) \geq 4k \log N/N > 1/N$ for $r \geq r_N$, so we have $|\log P_{XY}(x,y,r)| \leq \log N $. The same bounds apply for $|\log P_X(x,r)|$ and $|\log P_Y(y,r)|$ as well. By triangle inequality, the supremum is upper bounded by $3 \log N + |\log f(x,y)|$. Finally, the probability $\Pr \left(\, \rho_{k,1} > r_N \,|\, (X,Y) = (x,y) \,\right)$ is upper bounded by
\begin{eqnarray}
&&\Pr \left(\, \rho_{k,1} > r_N \,|\, (X,Y) = (x,y)\,\right) \,\notag\\
&=& \sum_{m=0}^{k-1} {N-1 \choose m} P_{XY}(x,y,r_N)^m (1-P_{XY}(x,y,r_N))^{N-1-m} \,\notag\\
&\leq& \sum_{m=0}^{k-1} N^m (1-P_{XY}(x,y,r_N))^{N-k} \,\notag\\
&=& k N^k (1-\frac{4k\log N}{N})^{N/2} \,\notag\\
&\leq& kN^k e^{-2k \log N} = \frac{k}{N^k} \;.
\end{eqnarray}
for sufficiently large $N$ such that $N-k > N/2$. Therefore,~\eqref{eq:i4} is upper bounded by
\begin{eqnarray}
&&\,\Big|\, \int_{r=0}^{\infty} \left(\, \log \frac{P_{XY}(x,y,r)}{P_X(x,r)P_Y(y,r)} - \log f(x,y) \,\right) dF_{\rho_{k,1}}(r) \,\Big| \,\notag\\
&\leq& \delta_N + \frac{k(3 \log N + |\log f(x,y)|)}{N^k}\;.~\label{eq:ub_i4}
\end{eqnarray}

For~\eqref{eq:i1}, we simply plug in $F_{\rho_{k,1}}(r)$ and integrate over $P_{XY}(x,y,r)$ and obtain
\begin{eqnarray}
&& \int_{r=0}^{\infty} \left(\, \psi(k) - \log N - \log P_{XY}(x,y,r) \,\right) dF_{\rho_{k,1}}(r) \,\notag\\
&=& \psi(k) - \log N - \frac{(N-1)!}{(k-1)!(N-k-1)!} \,\notag\\
&&\times\, \int_{r=0}^{\infty} (\log P_{XY}(x,y,r)) P_{XY}(x,y,r)^{k-1} \left(\, 1-P_{XY}(x,y,r) \,\right)^{N-k-1} d P_{XY}(x,y,r) \,\notag\\
&=& \psi(k) - \log N - \frac{(N-1)!}{(k-1)!(N-k-1)!} \int_{t=0}^1 (\log t) t^{k-1} (1-t)^{N-k-1} dt \,\notag\\
&=& \psi(k) - \log N - (\psi(k) - \psi(N)) = \psi(N) - \log N \;. \label{eq:ub_i1}
\end{eqnarray}
where we use the fact that $\psi(k) - \psi(N) = \frac{(N-1)!}{(k-1)!(N-k-1)!} \int_{t=0}^1 (\log t) t^{k-1} (1-t)^{N-k-1} dt$. Notice that $\psi(N) < \log N$ and $\lim_{N \to 0} (\psi(N) - \log N) = 0$.

Now we deal with~\eqref{eq:i2} and~\eqref{eq:i3}. The following lemmas establish the distribution of $n_{x,1}$ and $n_{y,1}$ given $(X,Y)=(x,y)$ and $\rho_{k,1} = r > 0$.
\begin{lemma}
\label{lem:bino}
Given $(X,Y) = (x,y)$ and $\rho_{k,1} = r > 0$, then $n_{x,1} - k$ is distributed as ${\rm Bino}(N-k-1, \frac{P_X(x,r)-P_{XY}(x,y,r)}{1-P_{XY}(x,y,r)})$; $n_{y,1} - k$ is distributed as ${\rm Bino}(N-k-1, \frac{P_Y(y,r)-P_{XY}(x,y,r)}{1-P_{XY}(x,y,r)})$.
\end{lemma}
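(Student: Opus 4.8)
The plan is to condition on $(X_1,Y_1)=(x,y)$ and on the value $\rho_{k,1}=r$ of the $k$-nearest-neighbor distance, and then to track exactly which of the remaining $N-1$ i.i.d.\ samples can contribute to $n_{x,1}$. Write $B_r=\{(a,b):\|a-x\|\le r,\ \|b-y\|\le r\}$ for the closed max-norm box, so $P_{XY}(B_r)=P_{XY}(x,y,r)$. The first, purely combinatorial, step: since $r>0$ and $(x,y)$ lies where the samples carry a density (the $\Omega_3$ regime, $P_{XY}(x,y,0)=0$), almost surely there are no ties, so the event $\{\rho_{k,1}=r\}$ means exactly $k$ of the other $N-1$ samples fall in $B_r$ (the $k$-th on its boundary) while the remaining $N-1-k$ fall strictly outside $B_r$.

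The key structural step is to describe the conditional law of the $N-1-k$ ``outside'' samples given $\{(X_1,Y_1)=(x,y),\ \rho_{k,1}=r\}$. Because the $N-1$ samples are i.i.d.\ and exchangeable and because $B_r$ and its complement are disjoint, conditioning on how many and where the inside samples sit does not couple the outside samples: conditioned on the event above, the $N-1-k$ outside samples are i.i.d.\ with the law of $P_{XY}$ restricted to $B_r^c$ and renormalized, i.e.\ each is distributed as $P_{XY}(\,\cdot \mid d_{1,j}>r\,)$. I would justify this by the standard nearest-neighbor order-statistics computation already underlying~\eqref{eq:f_rho}: condition on $\rho_{k,1}\in[r,r+dr)$, factorize the joint density of the configuration through the nested family $\{B_\rho\}$, and let $dr\to 0$; the inside/outside split and the renormalized outside law fall out.

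Given this, $n_{x,1}$ (counting among the other $N-1$ samples, the reference point excluded as in the KSG convention) decomposes as $n_{x,1}=k+\sum_j \ind\{\|X_j-x\|\le r\}$, where the sum runs over the $N-1-k$ outside samples: each of the $k$ inside samples automatically satisfies $\|X_j-x\|\le r$, while an outside sample satisfies $\|X_j-x\|\le r$ exactly when it lands in the ``x-slab minus box'' region $\{\|a-x\|\le r,\ \|b-y\|>r\}$. Marginalizing $Y$ gives mass $P_X(x,r)$ to the full x-slab $\{\|a-x\|\le r\}$, so the x-slab-minus-box has mass $P_X(x,r)-P_{XY}(x,y,r)$, a subset of $B_r^c$ (mass $1-P_{XY}(x,y,r)$). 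Hence each outside sample contributes independently with probability
\[
p_x \;=\; \frac{P_X(x,r)-P_{XY}(x,y,r)}{1-P_{XY}(x,y,r)},
\]
so $n_{x,1}-k\sim{\rm Bino}(N-k-1,p_x)$. The statement for $n_{y,1}-k$ is identical after swapping the roles of $x$ and $y$.

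The main obstacle is the rigorous justification of the structural step, namely that conditioning on the density-zero event $\{\rho_{k,1}=r\}$ leaves the outside samples i.i.d.\ from the renormalized outside law. The clean route is to pass to the radial depth variable $u_j=P_{XY}(x,y,d_{1,j})$, under which the claim reduces to the classical fact that, given the $k$-th smallest of $N-1$ i.i.d.\ uniforms equals a fixed value, the larger ones are i.i.d.\ uniform on the remaining interval; the slab membership is then a measurable function of the sample that is independent of its depth ordering. Everything else---the box/slab mass bookkeeping and the two binomial identifications---is routine.
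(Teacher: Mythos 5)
Your overall decomposition is the same as the paper's: write $n_{x,1}=k+\sum_{l}\ind\{\|X_l-x\|\le r\}$ over the samples falling outside the box, argue that those samples are i.i.d.\ with the renormalized law of $P_{XY}$ restricted to the complement of the box, and read off the Bernoulli parameter $\bigl(P_X(x,r)-P_{XY}(x,y,r)\bigr)/\bigl(1-P_{XY}(x,y,r)\bigr)$. The gap is in the step you yourself flagged as the main obstacle, and specifically in your justification of it. You claim that because $(x,y)\in\Omega_3$ (i.e.\ $P_{XY}(x,y,0)=0$) there are almost surely no distance ties, and you then reduce to uniform order statistics via the probability integral transform $u_j=P_{XY}(x,y,d_{1,j})$. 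Both claims fail for exactly the distributions this paper targets. Membership in $\Omega_3$ only rules out an atom at $(x,y)$ itself; it does not mean the distribution has a density near $(x,y)$, and it does not make the distance variable $d_{1,j}=\max\{\|X_j-x\|,\|Y_j-y\|\}$ atomless. Take $X$ uniform on $\{0,\dots,m-1\}$ and $Y\,|\,X$ uniform on $[X,X+2]$ (Experiment II of the paper): every point of the support lies in $\Omega_3$, yet two samples sharing the same value of $X_j$, with the max attained in the $X$-coordinate, are at exactly the same distance from $(x,y)$ with positive probability. Consequently $r\mapsto P_{XY}(x,y,r)$ has jumps, $u_j$ is not uniform, the classical fact about the order statistics of uniforms does not apply, and your description of the event $\{\rho_{k,1}=r\}$ as ``exactly $k$ inside, the $k$-th on the boundary, the rest strictly outside'' is no longer exhaustive: with positive probability several samples sit at distance exactly $r$, and all of them are counted in $n_{x,1}$.

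The paper's proof is engineered precisely to avoid assuming an atomless distance distribution: it augments each sample with an independent ${\rm Unif}[0,1]$ tie-breaker $Z_i$, defines a strict total order $\prec$ that is almost surely unambiguous, and then runs an exchangeability argument over ordered partitions $\{2,\dots,N\}=S\cup\{j\}\cup T$ with $|S|=k-1$, $|T|=N-k-1$: each event $\mathcal{A}_{S,j,T}$ has probability $(N-k-1)!(k-1)!/(N-1)!$, and conditionally on $\mathcal{A}_{S,j,T}$ and $\rho_{k,1}=r$ the count $n_{x,1}-k$ equals the number of $l\in T$ with $\|X_l-x\|\le r$, which is binomial because the conditioning factorizes across the independent samples in $T$. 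To repair your argument you would need either (i) to adopt the same tie-breaking device, so that the inside/outside split is well defined, and replace the uniform transform by a direct conditional-independence computation of the type the paper does; or (ii) to prove the lemma by your route under the additional hypothesis that the distance distribution is atomless and then treat the atoms separately --- but the atoms cannot be dismissed as a null set, since they receive positive mass under $F_{\rho_{k,1}}$ and therefore contribute to the integrals in which the lemma is used.
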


The following lemma is useful to establish the upper bound for~\eqref{eq:i2} and~\eqref{eq:i3}.
\begin{lemma}
\label{lem:elog}
For integer $m \geq 1$, if $X$ is distributed as ${\rm Bino}(N,p)$, then $|\E [\log (X+m)] - \log(Np+m)| \leq C/(Np+m)$ for some constant $C$.
\end{lemma}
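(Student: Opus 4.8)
The plan is to reduce the two-sided bound to a one-sided one via Jensen's inequality, and then control the remaining quantity by a second-order estimate in the bulk combined with an exponential concentration estimate for the lower tail of the binomial. Throughout I would write $Y = X+m$ and $\nu = \E[Y] = Np+m$. Since $X \geq 0$ and $m \geq 1$ we always have $Y \geq 1$, so $\log Y$ is well-defined and nonnegative, and in fact $\nu \geq m \geq 1$.

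First, because $\log$ is concave, Jensen's inequality gives $\E[\log Y] \leq \log\nu$, so the quantity to bound is the \emph{nonnegative} number $\log\nu - \E[\log Y]$ and the absolute value can be dropped. Introducing $t = (Y-\nu)/\nu$ and $\phi(t) = t - \log(1+t) \geq 0$, and using $\E[t]=0$ together with $\log(1+t) = \log Y - \log \nu$, I would rewrite the error exactly as
$$\log\nu - \E[\log Y] \;=\; \E[\phi(t)]\;.$$
Thus the whole task becomes bounding $\E[\phi(t)]$ by $C/\nu$. (A direct Taylor expansion of $\log(X+m)$ around $Np$ with Lagrange remainder would work too, but the remainder is hard to control when $Y$ is close to $1$; the representation above isolates exactly that difficulty.)

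Next I would split the expectation according to whether $Y \geq \nu/2$ (the bulk, $t \geq -1/2$) or $Y < \nu/2$ (the lower tail, $t < -1/2$). Writing $\phi(t) = \int_0^t s/(1+s)\,ds$ and using $1+s \geq 1/2$ on the relevant range, one gets $\phi(t) \leq t^2$ whenever $t \geq -1/2$; hence the bulk contributes at most $\E[t^2] = \Var(Y)/\nu^2 = Np(1-p)/\nu^2 \leq Np/\nu^2 \leq 1/\nu$, using $Np \leq \nu$. On the tail I would bound $\phi(t) = t + (\log\nu - \log Y) \leq 1 + \log\nu$ (using $t<0$ and $Y \geq 1$), so that the tail contributes at most $(1+\log\nu)\,\Pr(Y < \nu/2)$. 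Here the key observation is that $Y = X+m \geq m$, so the event $\{Y < \nu/2\}$ is empty unless $m < Np$; in that regime $\nu < 2Np$ and $\{Y < \nu/2\} \subseteq \{X < Np/2\}$, and a Chernoff bound for the lower tail of $\mathrm{Bino}(N,p)$ yields $\Pr(X < Np/2) \leq e^{-Np/8} \leq e^{-\nu/16}$.

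The main obstacle is to make this tail estimate \emph{uniform in $m$}, which is essential because in the application (via Lemma~\ref{lem:bino}) one takes $m = \tilde{k}+1$, a quantity that grows with $N$. The resolution is that the tail contributes at most $(1+\log\nu)e^{-\nu/16}$, and the elementary fact that $\nu(1+\log\nu)e^{-\nu/16}$ is bounded by a universal constant over $\nu \geq 1$ shows this is at most $C'/\nu$ with $C'$ independent of $N$, $p$, and $m$. Adding the bulk and tail contributions gives $\log\nu - \E[\log Y] \leq (1+C')/\nu$, which is precisely the claimed bound $|\E[\log(X+m)] - \log(Np+m)| \leq C/(Np+m)$ with a constant $C$ that does not depend on any of the parameters.
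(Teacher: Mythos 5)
Your proof is correct, and its technical core is genuinely different from the paper's. Both arguments open the same way --- Jensen's inequality kills one side, leaving only $\log(Np+m)-\E[\log(X+m)]$ to bound --- but they diverge after that. The paper Taylor-expands $\log x$ around $x_0 = Np+m$ with a Lagrange remainder $\frac{(x-x_0)^2}{2\zeta^2}$, splits into the cases $Np \geq m$ and $Np < m$ to control $\zeta$, and in the first case handles the problematic term $\E\big[\frac{(X-Np-m)^2}{2X^2}\big]$ by an algebraic identity, $\frac{1}{(j+1)(j+2)}\binom{N}{j} = \frac{1}{(N+1)(N+2)}\binom{N+2}{j+2}$, which converts the sum into a variance computation for ${\rm Bino}(N+2,p)$. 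You instead use the exact representation $\log\nu - \E[\log Y] = \E[\phi(t)]$ with $\phi(t)=t-\log(1+t)$, bound $\phi(t)\leq t^2$ on the bulk $\{t\geq -1/2\}$ so that the variance of the binomial does the work there, and kill the lower tail $\{Y<\nu/2\}$ with a Chernoff bound, absorbing the factor $\log\nu$ into the exponentially small probability via $\sup_{\nu\geq 1}\nu(1+\log\nu)e^{-\nu/16}<\infty$. What each buys: the paper's combinatorial identity yields a small explicit constant ($5/2$) and stays entirely within elementary binomial manipulations, but needs the two-case analysis and the slightly delicate inequality $2(j+m)^2\geq(j+1)(j+2)$; your truncation-plus-concentration argument is more modular and more general (it would survive with any distribution on $[0,\infty)$ satisfying a variance bound and a lower-tail estimate, and it makes uniformity of $C$ in $m$, $N$, $p$ transparent), at the price of a larger unspecified constant. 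Both are complete proofs of the lemma as stated.
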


%For~\eqref{eq:i2}, recall that in Lemma~\ref{lem:bino}, we have shown that conditioning on $(X,Y) = (x,y)$ and $\rho_{k,1} = r > 0$, $n_{x,1} - k$ is distributed as ${\rm Bino}(N-k-1, (P_X(x,r)-P_{XY}(x,y,r))/(1-P_{XY}(x,y,r)))$. The expectation $\E \left[ \log(n_{x,1}+1) | (X,Y)=(x,y), \rho_{k,1} = r\right]$ is given by applying Lemma~\ref{lem:elog} with $m=k+1$. 
Now we are ready to upper bound~\eqref{eq:i2}. First, we rewrite the term~\eqref{eq:i2} as:
\begin{eqnarray}
&&\Big|\, \int_{r=0}^{\infty} \left(\, \E \left[ \log(n_{x,1}+1) | (X,Y)=(x,y), \rho_{k,1} = r\right] - \log N - \log P_X(x,r) \,\right) dF_{\rho_{k,1}}(r) \,\Big|\,\notag\\
&\leq&\Big|\, \int_{r=0}^{\infty} \Big(\, \E \left[ \log(n_{x,1}+1) | (X,Y)=(x,y), \rho_{k,1} = r\right] \,\notag\\
&&-\, \log \left(\, (N-k-1)\frac{P_X(x,r)-P_{XY}(x,y,r)}{1-P_{XY}(x,y,r)}+k+1 \,\right) \,\Big) dF_{\rho_{k,1}}(r) \,\Big|\,\notag\\
&&+\, \Big|\, \int_{r=0}^{\infty} \left(\, \log \frac{ (N-k-1)\frac{P_X(x,r)-P_{XY}(x,y,r)}{1-P_{XY}(x,y,r)}+k+1}{NP_X(x,r)} \,\right) dF_{\rho_{k,1}}(r) \,\Big|\,\notag\\
&\leq& \int_{r=0}^{\infty} \Big|\, \E \left[ \log(n_{x,1}+1) | (X,Y)=(x,y), \rho_{k,1} = r\right] \,\notag\\
&&-\, \log \left(\, (N-k-1)\frac{P_X(x,r)-P_{XY}(x,y,r)}{1-P_{XY}(x,y,r)}+k+1 \,\right) \,\Big| dF_{\rho_{k,1}}(r) \,\label{eq:i2_1}\\
&&+\,\Big|\, \E_r \left[\, \log \left(\, \frac{N(P_X(x,r)-P_{XY}(x,y,r))+(k+1)(1-P_X(x,r))}{NP_X(x,r)(1-P_{XY}(x,y,r))} \,\right) \,\right] \,\Big|\;.\label{eq:i2_2}
\end{eqnarray}
where $\E_r$ denotes expectation over $F_{\rho_{i,xy}}$. By Lemma~\ref{lem:elog}, the term~\eqref{eq:i2_1} is upper bounded by
\begin{eqnarray}
&&\int_{r=0}^{\infty} \Big|\, \E \left[ \log(n_{x,1}+1) | (X,Y)=(x,y), \rho_{k,1} = r\right] \,\notag\\
&&-\, \log \left(\, (N-k-1)\frac{P_X(x,r)-P_{XY}(x,y,r)}{1-P_{XY}(x,y,r)}+k+1 \,\right) \,\Big| dF_{\rho_{k,1}}(r) \,\notag\\
&\leq& \int_{r=0}^{\infty} \frac{C}{(N-k-1)\frac{P_X(x,r)-P_{XY}(x,y,r)}{1-P_{XY}(x,y,r)}+k+1} dF_{\rho_{k,1}}(r) \,\notag\\
&\leq& \int_{r=0}^{\infty} \frac{C}{k+1} dF_{\rho_{k,1}}(r) = \frac{C}{k+1}\;. \label{eq:ub_i21}
\end{eqnarray}
For~\eqref{eq:i2_2}, by the fact that $\log(x/y) \leq (x-y)/y$ for all $x, y > 0$ and Cauchy-Schwarz inequality, we have the following:
\begin{eqnarray}
&&\E_r \left[\, \log \left(\, \frac{N(P_X(x,r)-P_{XY}(x,y,r))+(k+1)(1-P_X(x,r))}{NP_X(x,r)(1-P_{XY}(x,y,r))} \,\right) \,\right] \,\notag\\
&\leq& \E_r \left[\, \frac{N(P_X(x,r)-P_{XY}(x,y,r))+(k+1)(1-P_X(x,r))}{NP_X(x,r)(1-P_{XY}(x,y,r))}-1 \,\right] \,\notag\\
&=& \E_r \left[\, \frac{(k+1-NP_{XY}(x,y,r))(1-P_X(x,r))}{NP_X(x,r)(1-P_{XY}(x,y,r))}\,\right] \,\notag\\
&\leq& \sqrt{\E_r \left[\, \left(\frac{k+1-NP_{XY}(x,y,r)}{NP_{XY}(x,y,r)}\right)^2 \,\right] \E_r \left[\, \left( \frac{P_{XY}(x,y,r)(1-P_X(x,r))}{P_X(x,r)(1-P_{XY}(x,y,r))}\right)^2 \,\right] } \;.
\end{eqnarray}
Notice that $P_X(x,r) \geq P_{XY}(x,y,r)$ for all $r$, so the second expectation is always no larger than 1. For the first expectation, we plug in $F_{\rho_{k,1}}(r)$ and integrate over $P_{XY}(x,y,r)$, let $t = P_{XY}(x,y,r)$ and observe,
\begin{eqnarray}
&&\E_r \left[\, \left(\frac{k+1-NP_{XY}(x,y,r)}{NP_{XY}(x,y,r)}\right)^2 \,\right] \,\notag\\
&=& \int_{r=0}^{\infty} \left(\frac{k+1-NP_{XY}(x,y,r)}{NP_{XY}(x,y,r)}\right)^2 dF_{\rho_{i,xy}}(r) \,\notag\\
&=& \frac{(N-1)!}{(k-1)!(N-k-1)!} \int_{t=0}^1 \frac{(k+1-Nt)^2}{N^2t^2} t^{k-1} (1-t)^{N-k-1} dt \,\notag\\
&=& \frac{(N-1)!}{(k-1)!(N-k-1)!}\frac{(k+1)^2}{N^2} \int_{t=0}^1 t^{k-3}(1-t)^{N-k-1} dt \,\notag\\
&&-\, \frac{(N-1)!}{(k-1)!(N-k-1)!}\frac{2(k+1)}{N^2} \int_{t=0}^1 t^{k-2}(1-t)^{N-k-1} dt \,\notag\\
&&+\, \frac{(N-1)!}{(k-1)!(N-k-1)!} \int_{t=0}^1 t^{k-3}(1-t)^{N-k-1} dt \,\notag\\
&=& \frac{(N-1)!}{(k-1)!(N-k-1)!}\frac{(k+1)^2}{N^2} \frac{(k-3)!(N-k-1)!}{(N-3)!} \,\notag\\
&&-\, \frac{(N-1)!}{(k-1)!(N-k-1)!}\frac{2(k+1)}{N^2} \frac{(k-2)!(N-k-1)!}{(N-2)!} + 1 \,\notag\\
&=& \frac{(N-1)(N-2)(k+1)^2}{N^2(k-1)(k-2)} - \frac{2(N-1)(k+1)}{N(k-1)} + 1 \;.
\end{eqnarray}
For sufficiently large $N$ and $k$, it is upper bounded by $C_1(1/N+1/k)$ for some constant $C_1 > 0$. Therefore,
\begin{eqnarray}
\E_r \left[\, \log \left(\, \frac{N(P_X(x,r)-P_{XY}(x,y,r))+(k+1)(1-P_X(x,r))}{NP_X(x,r)(1-P_{XY}(x,y,r))} \,\right) \,\right] \leq \sqrt{C_1(\frac{1}{N} + \frac{1}{k})} \;. \label{eq:ub_i22}
\end{eqnarray}
Similarly, by using the fact that $\log(x/y) > (x-y)/x$ and Cauchy-Schwarz inequality again, we conclude that there are some constant $C_2 > 0$ such that
\begin{eqnarray}
\E_r \left[\, \log \left(\, \frac{N(P_X(x,r)-P_{XY}(x,y,r))+(k+1)(1-P_X(x,r))}{NP_X(x,r)(1-P_{XY}(x,y,r))} \,\right) \,\right] \geq -\sqrt{C_2(\frac{1}{N} + \frac{1}{k})} \;. \label{eq:ub_i23}
\end{eqnarray}
Therefore, by combining~\eqref{eq:ub_i21},~\eqref{eq:ub_i22} and~\eqref{eq:ub_i23}, we obtain
\begin{eqnarray}
&&\Big|\, \int_{r=0}^{\infty} \left(\, \E \left[ \log(n_{x,1}+1) | (X,Y)=(x,y), \rho_{k,1} = r\right] - \log N - \log P_X(x,r) \,\right) dF_{\rho_{k,1}}(r) \,\Big| \,\notag\\
&\leq& \frac{C}{k+1} + \sqrt{C'(\frac{1}{N} + \frac{1}{k})} \;.\label{eq:ub_i2}
\end{eqnarray}
where $C' = \max\{C_1, C_2\}$. Since~\eqref{eq:i3} and~\eqref{eq:i2} are symmetric, the same upper bound~\eqref{eq:ub_i2} also applies to~\eqref{eq:i3}. Combine~\eqref{eq:ub_i4},~\eqref{eq:ub_i1} and~\eqref{eq:ub_i2}, we have
\begin{eqnarray}
&&\Big|\, \E \left[ \xi_1 | (X,Y)=(x,y)\right] - \log f(x,y) \,\Big| \,\notag\\
&\leq& \delta_N + \frac{k(3\log N + |\log f(x,y)|)}{N^k} + \log N - \psi(N) + \frac{2C}{k+1} + 2\sqrt{C'(\frac{1}{N} + \frac{1}{k})}
\end{eqnarray}
for every $(x,y) \in \Omega_3 \setminus E$. By integration over $\Omega_3 \setminus E$, we have
\begin{eqnarray}
&&\int_{\Omega_3 \setminus E} \Big|\, \E \left[ \xi_1 | (X,Y)=(x,y)\right] - \log f(x,y) \,\Big| \,dP_{XY} \,\notag\\
&\leq& \int_{\Omega_3 \setminus E} \Big(\, \delta_N + \frac{k(3\log N + |\log f(x,y)|)}{N^k} + \log N - \psi(N)\,\notag\\
&&+\, \frac{2C}{k+1} + 2\sqrt{C'(\frac{1}{N} + \frac{1}{k})} \,\Big) dP_{XY}
\,\notag\\
&\leq& \delta_N + \frac{k(3\log N + \int_{\mathcal{X} \times \mathcal{Y}} |\log f(x,y)| dP_{XY})}{N^k} + \log N - \psi(N) \,\notag\\
&&+\, \frac{2C}{k+1} + 2\sqrt{C'(\frac{1}{N} + \frac{1}{k})} \;.
\end{eqnarray}
By Assumption 1, $k$ increases as $N \to \infty$. By Assumption 3, $\int_{\mathcal{X} \times \mathcal{Y}} |\log f(x,y)| dP_{XY} < +\infty$. Therefore, this quantity vanishes as $N \to \infty$. Combining with the case that $(x,y) \in E$, we have
\begin{eqnarray}
\lim_{N \to \infty} \int_{\Omega_3} \Big|\, \E \left[ \xi_1 | (X,Y)=(x,y)\right] - \log f(x,y) \,\Big| \,dP_{XY} = 0
\end{eqnarray}

\subsection{Proof of Lemma~\ref{lem:rnd}}
The proof of this lemma utilizes the Lebesgue-Besicovitch differentiation theorem~\cite[Theorem 1.32]{evans2018measure}, stated below
\begin{thm}[Lebesgue-Besicovitch Differentiation Theorem]
Let $\mu$ be a Radon measure on $\mathbb{R}^n$. For $f \in L_{loc}^1(\mu)$,
\begin{eqnarray}
    \lim_{r \to 0} \frac{1}{\mu(\bar{B}_r(x))} \int_{\bar{B}_r(x)} f d\mu = f(x),
\end{eqnarray}
for $\mu$-a.e. $x$.
\end{thm}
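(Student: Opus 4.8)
The plan is to prove the theorem by the classical three-step argument: verify the conclusion for continuous functions, establish a weak-type $(1,1)$ bound for an associated maximal operator, and then pass to general $f \in L^1_{loc}(\mu)$ by density. Since the statement is local and $\mu$ is Radon (hence finite on compact sets), I would first reduce to a bounded setting: fix $R \in \mathbb{N}$, note that $f$ is $\mu$-integrable on the compact set $\bar B_{R+1}(0)$, and observe that for $x \in B_R(0)$ and all small $r$ the average $A_r f(x) := \mu(\bar B_r(x))^{-1}\int_{\bar B_r(x)} f \, d\mu$ only sees $f$ on $\bar B_{R+1}(0)$; hence it suffices to prove the a.e.\ statement for an $L^1(\mu)$ function on each $B_R(0)$ and take the union over $R$. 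I would also discard the $\mu$-null set of points outside $\mathrm{supp}(\mu)$, so that $\mu(\bar B_r(x)) > 0$ for every $r > 0$ and the averages are well defined.

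The geometric heart is the Besicovitch covering theorem: there is a constant $N_n$ depending only on the dimension $n$ such that any family of closed balls with uniformly bounded radii covering a set $A$ admits countable subfamilies $\mathcal{G}_1, \dots, \mathcal{G}_{N_n}$, each consisting of pairwise disjoint balls, whose union still covers $A$. Using it I would prove the weak-type estimate for the local maximal operator $M_\mu f(x) := \sup_{0 < r \le 1} A_r |f|(x)$ (bounded radii, as required by Besicovitch), namely $\mu(\{M_\mu f > \lambda\}) \le N_n \lambda^{-1} \|f\|_{L^1(\mu)}$ with $N_n$ depending only on $n$. The argument: for a compact $K \subseteq \{M_\mu f > \lambda\}$, each $x \in K$ admits a radius $r(x) \le 1$ with $\int_{\bar B_{r(x)}(x)} |f| \, d\mu > \lambda \, \mu(\bar B_{r(x)}(x))$; apply Besicovitch to extract the $N_n$ disjoint subfamilies, sum $\mu(\bar B_{r(x)}(x))$ within each $\mathcal{G}_i$ using this inequality (each family contributes at most $\lambda^{-1}\|f\|_1$ by disjointness), and add the $N_n$ bounds. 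Inner regularity of $\mu$ upgrades the bound from compact subsets to the full superlevel set.

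For continuous $g$ with compact support the conclusion is immediate: given $\varepsilon > 0$, continuity yields $\delta$ with $|g(y) - g(x)| < \varepsilon$ whenever $|y - x| \le \delta$, so $|A_r g(x) - g(x)| \le \varepsilon$ for all $r \le \delta$, hence $A_r g(x) \to g(x)$ at every $x$. To finish for $f \in L^1(\mu)$ on $B_R(0)$ I would approximate: continuous compactly supported functions are dense in $L^1(\mu)$, so choose $g$ with $\|f - g\|_{L^1} < \varepsilon$ and set $h = f - g$. Since the $g$-contribution vanishes by the continuous case, $\limsup_{r\to 0} |A_r f(x) - f(x)| \le M_\mu h(x) + |h(x)|$, so the exceptional set is contained in $\{M_\mu h > \lambda/2\} \cup \{|h| > \lambda/2\}$, whose $\mu$-measure is at most $(2N_n + 2)\lambda^{-1}\varepsilon$ by the maximal inequality and Chebyshev. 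Letting $\varepsilon \to 0$ gives $\mu(\{\limsup_{r\to 0}|A_r f - f| > \lambda\}) = 0$ for every $\lambda > 0$; intersecting over $\lambda = 1/m$ yields a.e.\ convergence.

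The main obstacle is the weak-type maximal inequality, and specifically the need for the Besicovitch covering theorem rather than the elementary Vitali $5r$-covering lemma. A general Radon measure on $\mathbb{R}^n$ need not be doubling, so the Vitali approach — trading a ball for one of five times the radius and invoking $\mu(\bar B_{5r}) \lesssim \mu(\bar B_r)$ — is unavailable, and the resulting constant would blow up. Besicovitch circumvents this by producing controlled-overlap disjoint subfamilies with a purely dimensional constant $N_n$ independent of $\mu$; proving it is an intricate packing argument and is the one genuinely hard ingredient. A secondary technicality is the measurability of $x \mapsto M_\mu f(x)$ and the at most countably many discontinuities of $r \mapsto \mu(\bar B_r(x))$, both routine once the covering theorem is in hand.
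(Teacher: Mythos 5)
Your proof is correct, but there is nothing in the paper to compare it against: the paper does not prove this theorem at all. It is imported verbatim as a known result, cited as Theorem~1.32 of Evans--Gariepy, and used as a black box in the proof of Lemma~\ref{lem:rnd}. So the only meaningful comparison is with the cited source. Your route is the classical maximal-function argument: weak-type $(1,1)$ bound for the local maximal operator via the Besicovitch covering theorem, exact convergence for continuous compactly supported functions, and a density argument splitting $f = g + h$ with the exceptional set controlled by $\{M_\mu h > \lambda/2\} \cup \{|h| > \lambda/2\}$. Evans--Gariepy instead develop the differentiation theory of one Radon measure with respect to another: they define upper and lower derivates $\overline{D}_\mu\nu$, $\underline{D}_\mu\nu$, prove comparison lemmas of the form $\nu(A) \le \alpha\,\mu(A)$ on sets where the derivate is bounded by $\alpha$ (again via Besicovitch), and deduce a.e.\ differentiability without ever introducing a maximal operator; the differentiation theorem for $f \in L^1_{loc}(\mu)$ then follows by applying this to $\nu = f\,d\mu$ (split into positive and negative parts). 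Both arguments stand on the same foundation, and you correctly identify it as the crux: a general Radon measure need not be doubling, so the Vitali $5r$-lemma is useless and the purely dimensional overlap constant $N_n$ of Besicovitch is what saves the weak-type bound. The derivate route buys a statement for mutually singular pairs of measures and the Radon--Nikodym decomposition essentially for free; your route is shorter if one only wants the $L^1_{loc}$ statement and makes the quantitative weak-type control explicit.

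Two small points of hygiene, neither fatal. First, your statement of the Besicovitch covering theorem is loose: it is not enough that the balls \emph{cover} $A$; each point of $A$ must be the \emph{center} of some ball in the family. Your application conforms (you take $\bar B_{r(x)}(x)$ centered at each $x \in K$), but the statement as written is false. Second, measurability of $M_\mu f$ deserves one line rather than a gesture: since $r \mapsto \mu(\bar B_r(x))$ and $r \mapsto \int_{\bar B_r(x)}|f|\,d\mu$ are right-continuous, the supremum over $r \in (0,1]$ equals the supremum over rational $r$, which makes $M_\mu f$ Borel; alternatively one can work with outer measure on the superlevel set and compact subsets, as you partly do. With these repairs the argument is complete and fully rigorous.
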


For our lemma, let $f = \frac{dP_{XY}}{dP_XP_Y}$ and $\mu = P_X P_Y$. Since $\mu$ is a probability measure, it is a Radon measure of Euclidean space. Also, since $\int_{\mathcal{X} \times \mathcal{Y}} |f| d \mu = 1$, so $f$ is globally integrable, hence locally integrable with respect to $\mu$. So the conditions of Lebesgue-Besicovitch differentiation theorem are satisfied, so
\begin{eqnarray}
f(x,y) &=& \frac{dP_{XY}}{dP_XP_Y}(x,y) \,\notag\\
&=& \lim_{r \to 0} \frac{1}{P_XP_Y(\bar{B}_r(x,y))} \int_{\bar{B}_r(x,y)} \frac{dP_{XY}}{dP_X P_Y} dP_XP_Y \,\notag\\
&=& \lim_{r \to 0} \frac{P_{XY}(\bar{B}_r(x,y))}{P_XP_Y(\bar{B}_r(x,y))} = \lim_{r \to 0}\frac{P_{XY}(x,y,r)}{P_X(x,r)P_Y(y,r)}
\end{eqnarray}

\subsection{Proof of Lemma~\ref{lem:lemma4}}
First, we upperbound $\mathbb{E} \left[ \log(X) \middle| X \geq k \right] - \log(Np)$. We can see that:

\begin{eqnarray}
& & \mathbb{E} \left[ X+m \middle| X \geq k \right] \\
& = & \frac{1}{ \mathbb{P} \left( X \geq k \right)} \sum_{i=k}^N (i+m) \left( \begin{array}{c} N \\ i \end{array}  \right) p^i (1-p)^{N-i} \\
& \leq & \frac{1}{1 - \exp\left( -2 \frac{(Np-k)^2}{N} \right)} \sum_{i=k}^N (i+m) \left( \begin{array}{c} N \\ i \end{array}  \right) p^i (1-p)^{N-i} \\
& \leq & \frac{1}{1 - \exp\left( -2 \frac{(Np-k)^2}{N} \right)} \sum_{i=1}^N (i+m) \left( \begin{array}{c} N \\ i \end{array}  \right) p^i (1-p)^{N-i} \\
& = & \frac{1}{1 - \exp\left( -2 \frac{(Np-k)^2}{N} \right)} \left( \mathbb{E}\left[ X \right] + m \right) = \frac{Np+m}{1 - \exp\left( -2 \frac{(Np-k)^2}{N} \right)}
\end{eqnarray}

In which we used the Hoeffding's inequality. Since $\mathbb{E} \left[ \log(X+m) \middle| X \geq k \right] \leq \log \left( \mathbb{E} \left[ X+m \middle| X \geq k \right] \right)$, thus:
\begin{eqnarray}
\mathbb{E} \left[ \log(X) \middle| X \geq k \right] - \log(Np) \leq \log \left( \frac{1+\frac{m}{Np}}{1 - \exp\left( -2 \frac{(Np-k)^2}{N} \right)} \right)
\end{eqnarray}

Second, to give an upper bound over $\log(Np) - \mathbb{E} \left[ \log(X+m) \middle| X \geq k \right]$, we first notice that:
\begin{equation} \log(Np) - \mathbb{E} \left[ \log(X+m) \middle| X \geq k \right] \leq \log(Np) - \mathbb{E} \left[ \log(X) \middle| X \geq k \right] \end{equation}

Then we upperbound $\log(Np) - \mathbb{E} \left[ \log(X) \middle| X \geq k \right]$ by applying Taylor's theorem around $x_0=Np$, where there exists $\zeta$ between $x$ and $x_0$ such that:
\begin{equation} \log(x) = \log(Np) + \frac{x-Np}{Np} - \frac{(x-Np)^2}{2\zeta^2} \end{equation}
since $\zeta \geq \min \left\{ x, x_0 \right\} = \min \left\{ x, Np \right\}$, we have:

\begin{eqnarray}
& & -\log(x) + \log(Np) + \frac{x-Np}{Np} = \frac{(x-Np)^2}{2 \zeta^2} \nonumber \\
& \leq & \max \left\{ \frac{(x-Np)^2}{2x^2}, \frac{(x-NP)^2}{2(Np)^2} \right\} \leq \frac{(x-Np)^2}{2x^2} + \frac{(x-Np)^2}{2(Np)^2 }
\end{eqnarray}

Now taking the conditional expectations from both sides, we have:

\begin{eqnarray}
 & & -\mathbb{E} \left[ \log(X) \middle| X \geq k \right] + \log(Np) +\frac{  \mathbb{E} \left[ X \middle| X \geq k \right] -Np}{Np} \nonumber \\
 & \leq & \mathbb{E} \left[ \frac{(X-Np)^2}{2X^2}  \middle| X \geq k \right] + \frac{ \mathbb{E} \left[ (X-Np)^2  \middle| X \geq k \right]}{2(Np)^2 }
\end{eqnarray}

First, we notice that $ \mathbb{E} \left[ X \middle| X \geq k \right] \geq  \mathbb{E} \left[ X \right] = Np$.

Second, $\mathbb{E} \left[ (X-Np)^2  \middle| X \geq k \right] \leq \frac{1}{1 - \exp\left( -2 \frac{(Np-k)^2}{N} \right)} \text{Var} \left[ X \right] = \frac{Np(1-p)}{1 - \exp\left( -2 \frac{(Np-k)^2}{N} \right)}$.

Thus we can write:
\begin{eqnarray}
-\mathbb{E} \left[ \log(X) \middle| X \geq k \right] + \log(Np) \leq \frac{Np(1-p)}{1 - \exp\left( -2 \frac{(Np-k)^2}{N} \right)}\frac{1}{2(Np)^2} + \mathbb{E} \left[ \frac{(X-Np)^2}{2X^2}  \middle| X \geq k \right] \label{eq:lemma_b3_modified}
\end{eqnarray}

To deal with the term $\mathbb{E} \left[ \frac{(X-Np)^2}{2X^2}  \middle| X \geq k \right]$, we have:

\begin{eqnarray}
\mathbb{E} \left[ \frac{(X-Np)^2}{2X^2}  \middle| X \geq k \right] &\leq& \frac{1}{1 - \exp\left( -2 \frac{(Np-k)^2}{N} \right)}
\sum_{i=k}^N \frac{(i-Np)^2}{2 i^2} \left( \begin{array}{c} N \\ i \end{array} \right) p^i (1-p)^{N-i} \\
&\leq& \frac{1}{1 - \exp\left( -2 \frac{(Np-k)^2}{N} \right)}
\sum_{i=k}^N \frac{(i-Np)^2}{(i+1)(i+2)} \left( \begin{array}{c} N \\ i \end{array} \right) p^i (1-p)^{N-i} \\
&=& \frac{1}{1 - \exp\left( -2 \frac{(Np-k)^2}{N} \right)}
\sum_{i=k}^N \frac{(i-Np)^2}{(N+1)(N+2)p^2} \left( \begin{array}{c} N+2 \\ i+2 \end{array} \right) p^{2+i} (1-p)^{N-i} \\
&\leq& \frac{1}{1 - \exp\left( -2 \frac{(Np-k)^2}{N} \right)}\frac{1}{(N+1)(N+2)p^2}
\mathbb{E}_{Y \sim \text{Bino}(N+2,p)} \left[ (Y-Np)^2 \right] \\
& = & \frac{1}{1 - \exp\left( -2 \frac{(Np-k)^2}{N} \right)}\frac{(N+2)p(1-p)+4p^2}{(N+1)(N+2)p^2} \\
& \leq & \frac{1}{1 - \exp\left( -2 \frac{(Np-k)^2}{N} \right)}\frac{(N+2)p}{(N+1)(N+2)p^2} \leq \frac{1}{1 - \exp\left( -2 \frac{(Np-k)^2}{N} \right)}\frac{1}{Np}
\end{eqnarray}

In which we used the fact that $ 2i^2 \geq (i+1)(i+2)$ for $i \geq 4$, and $(N+2)p \geq 4p$ for $N \geq 2$. Plugging it into Equation \ref{eq:lemma_b3_modified}, we have:

\begin{equation}
-\mathbb{E} \left[ \log(X) \middle| X \geq k \right] + \log(Np) \leq \frac{1}{1 - \exp\left( -2 \frac{(Np-k)^2}{N} \right)}\frac{3}{2Np}
\end{equation}

And the desired result is yielded.

\subsection{Proof of Lemma~\ref{lem:bino}}

Now we deal with the case that $\rho_{k,1} = r > 0$. Given that $(X_1, Y_1) = (x,y)$ and $\rho_{k,1} = r > 0$, we sort the samples $\{(X_i, Y_i)\}_{i=2}^N$ by their distance to $(x,y)$ defined as $d_i = \max\{\|X_i - x\|, \|Y_i - y\|\}$. To avoid the case that two samples have identical distance, we introduce a set of random variables $\{Z_i\}_{i=2}^N$ i.i.d. samples from ${\rm Unif}[0,1]$ and define a comparison operator $\prec$ as:
\begin{eqnarray}
i \prec j &\Longleftrightarrow& d_i < d_j {\rm \quad or \quad} \left\{ d_i = d_j {\rm \quad and \quad} Z_i < Z_j \right\}\;.
\end{eqnarray}
Since for any $i \neq j$, the probability that $Z_i = Z_j$ is zero, so we can have either $i \prec j$ or $i \succ j$ with probability 1. Now let $\{2, 3, \dots, N\} = S \cup \{j\} \cup T$ be a partition of the indices with $\left|S\right| = k-1$ and $\left|T\right| = N-k-1$.  Define an event $\mathcal{A}_{S, j, T}$ associated to the partition as:
\begin{eqnarray}
    \mathcal{A}_{S, j, T} = \big\{\, s \prec j, \forall s \in S,  \textrm{ and }t \succ j,\forall t \in T \,\big\}.
\end{eqnarray}
Since $(X_j,Y_j) - (x,y)$ are i.i.d.\ random variables each of the events $\mathcal{A}_{S, j, T}$ has identical probability. The number of all partitions is $\frac{(N-1)!}{(N-k-1)!(k-1)!}$ and thus  $\Pr\left(\,\mathcal{A}_{S, j, T}\,\right) = \frac{(N-k-1)!(k-1)!}{(N-1)!}$. So the cdf of $n_{x,1}$ is given by:
\begin{eqnarray}
&&\Pr\left(\,n_{x,1} \leq k+m\big|\rho_{k,1} = r, (X_1, Y_1) = (x,y)\,\right) \,\notag\\
&=&\sum_{S,j,T} \Pr\left(\,\mathcal{A}_{S, j, T}\,|\, \rho_{k,1} = r, (X_1, Y_1) = (x,y)\right) \Pr\left(\,n_{x,1} \leq k+m\big|\mathcal{A}_{S, j, T}, \rho_{k,1} = r, (X_1, Y_1) = (x,y)\,\right) \notag \\
&=&\frac{(N-k-1)!(k-1)!}{(N-1)!} \sum_{S,j,T} \Pr\left(\,n_{x,1} \leq k+m\big|\mathcal{A}_{S, j, T}, \rho_{k,1} = r, (X_1, Y_1) = (x,y)\,\right)
\end{eqnarray}

Now condition on event $\mathcal{A}_{S,j,T}$ and $\rho_{k,1} = r$, namely $(X_j, Y_j)$ is the $k$-nearest neighbor with distance $r$, $S$ is the set of samples with distance smaller than (or equal to) $r$ and $T$ is the set of samples with distance greater than (or equal to) $r$. Recall that $n_{x,1}$ is the number of samples with $\|X_j - x\| \leq r$. For any index $s \in S \cup \{j\}$, $\|X_j - x\| \leq r$ are satisfied. Therefore, $n_{x,1} \leq k+m$ means that there are no more than $m$ samples in $T$ with $\mathcal{X}$-distance smaller than $r$. Let $U_l = \mathbb{I}\{\|X_l - x\| \leq r \,\big|\, d_l \geq r\}.$ Therefore,
\begin{eqnarray}
    &&\Pr\left(\,n_{x,1} \leq k+m\big|\mathcal{A}_{S, j, T}, \rho_{k,1} = r, (X_1, Y_1) = (x,y)\,\right) \notag \\
    &=& \Pr\Big(\,\sum_{l \in T} \mathbb{I}\{\|X_l - x\| \leq r \} \leq m \,\big|\, d_s \leq r, \forall s \in S, d_j = r, d_t \geq r,\forall t \in T\,\Big) \notag \\
    &=& \Pr\left(\,\sum_{l \in T} \mathbb{I}\{\|X_l - x\| \leq r \} \leq m \,\big|\, d_l \geq r,\forall l \in T\,\right) = \Pr \left(\, \sum_{l \in T} U_l \leq m\,\right),
\end{eqnarray}
where $U_l$ follows bernoulli distribution with $\Pr\{U_l = 1\} = Pr\{\|X_l - x\| \leq r | d_l \geq r\}$. We can drop the conditioning of $(X_s,Y_s)$'s for $s \not\in T$ since $(X_s, Y_s)$ and $(X_t, Y_t)$ are independent. Therefore, given that $d_l \geq r$ for all $l \in T$, the variables $\mathbb{I}\{\|X_l - x\| \leq r \}$ are i.i.d. and have the same distribution as $U_l$. We conclude:
\begin{eqnarray}
    &&\Pr\left(\,n_{x,1} \leq k+m\big|\rho_{k,1} = r, (X_1, Y_1) = (x,y)\,\right) \,\notag\\
    &=& \frac{(N-k-1)!(k-1)!}{(N-1)!} \sum_{S,j,T} \Pr\left(\,n_{x,1} \leq k+m\big|\mathcal{A}_{S, j, T}, \rho_{i,xy} = r, (X_1, Y_1) = (x,y)\,\right) \notag \\
    &=& \frac{(N-k-1)!(k-1)!}{(N-1)!} \sum_{S,j,T}  \Pr \left(\, \sum_{l \in T} U_l \leq m\,\right)
    =  \Pr \left(\, \sum_{l \in T} U_l \leq m\,\right).
\end{eqnarray}
Thus we have shown that $n_{x,1}-k$ has the same distribution as $\sum_{l \in T} U_l$, which is a Binomial random variable with parameter $|T| = N-k-1$ and $\Pr\{\|X_l - x\| \leq r\,|\, d_l \geq r\} = \frac{P_X(x,r) - P_{XY}(x,y,r)}{1-P_{XY}(x,y,r)}$. For $n_{y,1}$, we can follow the same proof and conclude that $n_{y,1}-k \sim {\rm Bino}(N-k-1, \frac{P_Y(x,r) - P_{XY}(x,y,r)}{1-P_{XY}(x,y,r)})$.
%\begin{itemize}
%    \item If $r = 0$, then for $n_{x,i}$, the probability is $\Pr\{\|X_l - x\| \leq 0\,|\, d_l \geq 0\} = \Pr\{\|X_l - x\| = 0\} = P_X(x,0)$ and the probability for $n_{y,i}$ is $P_Y(y,0)$.
%    \item If $r > 0$, then for $n_{x,i}$, the probability is  $\Pr\{\|X_l - x\| \leq r\,|\, d_l \geq r\} = \frac{P_X(x,r) - P_{XY}(x,y,r)}{1-P_{XY}(x,y,r)}$. Similarly, the probability for $n_{y,i}$ is $\frac{P_Y(x,r) - P_{XY}(x,y,r)}{1-P_{XY}(x,y,r)}$.
%\end{itemize}

\subsection{Proof of Lemma~\ref{lem:elog}}
By Jensen's inequality, we know that $\E [\log X] \leq \log \E [X] = \log (Np+m)$. So it suffices to give an upper bound for $\log (Np+m) - \E [\log X]$. We consider two different cases.
\\

(i) $Np \geq m$. In this case, for any $x$, by applying Taylor's theorem around $x_0 = Np+m$, there exists $\zeta$ between $x$ and $x_0$ such that
\begin{eqnarray}
\log (x) = \log (Np+m) + \frac{x-Np-m}{Np+m} - \frac{(x-Np-m)^2}{2\zeta^2}
\end{eqnarray}
By noticing that $\zeta \geq \min\{x, x_0\} = \min\{x, Np+m\}$, we have
\begin{eqnarray}
&&- \log (x) + \log(Np+m) + \frac{x-Np-m}{Np+m} = \frac{(x-Np-m)^2}{2\zeta^2} \,\notag\\
&\leq& \max\{\frac{(x-Np-m)^2}{2x^2}, \frac{(x-Np-m)^2}{2(Np+m)^2}\ \} \leq \frac{(x-Np-m)^2}{2x^2} + \frac{(x-Np-m)^2}{2(Np+m)^2}.
\end{eqnarray}
Now let $X-m$ be a ${\rm Bino}(N,p)$ random variable. By taking expectation on both sides, we have:
\begin{eqnarray}
&&\,- \E [\log X] + \log (Np+m) + \frac{\E[X] - Np - m}{Np+m} \,\notag\\
&\leq& \E \left[\, \frac{(X-Np-m)^2}{2X^2} \,\right] +    \frac{\E \left[\, (X-Np-m)^2\,\right]}{2(Np+m)^2} \label{eq:taylor_1}\;.
\end{eqnarray}
Since $\E[X] = Np+m$, $\E \left[\, (X-Np-m)^2\,\right] = \Var[X] = Np(1-p)$, and
\begin{eqnarray}
\E \left[\, \frac{(X-Np-m)^2}{2X^2} \,\right] &=& \sum_{j=0}^N \frac{(j-Np)^2}{2(j+m)^2} {N \choose j} p^j (1-p)^{N-j} \,\notag\\
&\leq& \sum_{j=0}^N \frac{(j-Np)^2}{(j+2)(j+1)} {N \choose j} p^j (1-p)^{N-j} \,\notag\\
&=& \sum_{j=0}^N \frac{(j-Np)^2}{(N+2)(N+1)p^2} {N+2 \choose j+2} p^{j+2} (1-p)^{N-j} \,\notag\\
&\leq& \frac{1}{(N+2)(N+1)p^2} \E_{Y \sim {\rm Bino}(N+2, p)} \left[\, (Y-Np)^2 \,\right] \,\notag\\
&=& \frac{(N+2)p(1-p)+4p^2}{(N+2)(N+1)p} \leq \frac{(N+2)p}{(N+2)(N+1)p} \leq \frac{1}{Np}
\end{eqnarray}
for $m \geq 1$ and $N \geq 4$. Plug these in~\eqref{eq:taylor_1}, we have
\begin{eqnarray}
&&\,- \E [\log X] + \log (Np+m) \leq \frac{1}{Np} + \frac{Np(1-p)}{2(Np+m)^2} \,\notag\\
&\leq& \frac{2}{Np+m} + \frac{1}{2(Np+m)} = \frac{5}{2(Np+m)}\;.
\end{eqnarray}
where $1/(2Np) \leq 1/(Np+m)$ comes from the fact that $Np \geq m$.
\\

(ii) $Np < m$. In this case, for any $x$, by applying Taylor's theorem around $x_0 = Np+m$, there exists $\zeta$ between $x$ and $x_0$ such that
\begin{eqnarray}
\log (x) = \log (Np+m) + \frac{x-Np-m}{Np+m} - \frac{(x-Np-m)^2}{2\zeta^2}
\end{eqnarray}
By noticing that $\zeta \geq \min\{x, x_0\} \geq m \geq (Np+m)/2$, we have:
\begin{eqnarray}
- \log (x) + \log (Np+m) + \frac{x-Np-m}{Np+m} \leq \frac{2(x-Np-m)^2}{(Np+m)^2} \;.
\end{eqnarray}
Similarly, by taking expectation on both sides, we have
\begin{eqnarray}
- \E [\log X] + \log (Np+m) + \frac{\E[X] - Np - m}{Np+m} \leq \frac{\E \left[\, 2(X-Np-m)^2\,\right]}{(Np+m)^2} \;.
\end{eqnarray}
By plugging in $\E[X] = Np+m$ and $\E \left[\, (X-Np-k)^2\,\right] = \Var[X] = Np(1-p)$, we obtain
\begin{eqnarray}
- \E [\log X] + \log (Np+m) \leq \frac{ 2Np(1-p) }{(Np+m)^2} \leq \frac{2(Np+m)}{(Np+m)^2} = \frac{2}{Np+m} \;.
\end{eqnarray}

Combining the two cases, we obtain the desired statement.

\section{Proof of Theorem~\ref{thm:var}}
\label{sec:var}
We use the Efron-Stein inequality to bound the variance of the estimator. For simplicity, let $\widehat{I}^{(N)}(Z)$ be the estimate based on original samples $\{Z_1, Z_2, \dots, Z_N\}$, where $Z_i = (X_i, Y_i)$. For the usage of Efron-Stein inequality, we consider another set of i.i.d. samples $\{Z'_1, Z'_2, \dots, Z'_n\}$ drawn from $P_{XY}$. Let $\widehat{I}^{(N)}(Z^{(j)})$ be the estimate based on $\{Z_1, \dots, Z_{j-1}, Z'_j, Z_{j+1}, \dots, Z_N\}$. Then Efron-Stein inequality states that
\begin{eqnarray}
\Var \left[\, \widehat{I}^{(N)}(Z) \,\right] &\leq& \frac{1}{2} \sum_{j=1}^N \E \left[\, \left(\, \widehat{I}^{(N)}(Z) - \widehat{I}^{(N)}(Z^{(j)})\,\right)^2 \,\right] \;. \label{eq:E_S}
\end{eqnarray}

Now we will give an upper bound for the difference $|\widehat{I}^{(N)}(Z) - \widehat{I}^{(N)}(Z^{(j)})|$ for given index $j$. First of all, let $\widehat{I}^{(N)}(Z_{\setminus j})$ be the estimate based on $\{Z_1, \dots, Z_{j-1}, Z_{j+1}, \dots, Z_N\}$, then by triangle inequality, we have:
\begin{eqnarray}
&&\sup_{Z_1, \dots, Z_N, Z'_j} \Big|\, \widehat{I}^{(N)}(Z) - \widehat{I}^{(N)}(Z^{(j)}) \,\Big| \,\notag\\
&\leq& \sup_{Z_1, \dots, Z_N, Z'_j} \left(\, \Big|\, \widehat{I}^{(N)}(Z) - \widehat{I}^{(N)}(Z_{\setminus j}) \,\Big| + \Big|\, \widehat{I}^{(N)}(Z_{\setminus j}) - \widehat{I}^{(N)}(Z^{(j)}) \,\Big| \,\right) \,\notag\\
&\leq& \sup_{Z_1, \dots, Z_N} \Big|\, \widehat{I}^{(N)}(Z) - \widehat{I}^{(N)}(Z_{\setminus j}) \,\Big| + \sup_{Z_1, \dots, Z_{j-1}, Z'_j, Z_{j+1}, \dots, Z_N} \Big|\, \widehat{I}^{(N)}(Z_{\setminus j}) - \widehat{I}^{(N)}(Z^{(j)}) \,\Big| \,\notag\\
&=& 2 \sup_{Z_1, \dots, Z_N} \Big|\, \widehat{I}^{(N)}(Z) - \widehat{I}^{(N)}(Z_{\setminus j}) \,\Big|
\end{eqnarray}
where the last equality comes from the fact that $\{Z_1, \dots, Z_{j-1}, Z'_j, Z_{j+1}, \dots, Z_N\}$ has the same joint distribution as $\{Z_1, \dots, Z_N\}$. Now recall that \begin{eqnarray}
\widehat{I}^{(N)}(Z) = \frac{1}{N} \sum_{i=1}^N \xi_i(Z) = \frac{1}{N} \sum_{i=1}^N \left(\, \psi(\tilde{k}_i) + \log N - \log (n_{x,i}+1) - \log (n_{y,i}+1) \,\right) \;,
\end{eqnarray}
Therefore, we have
\begin{eqnarray}
\sup_{Z_1, \dots, Z_N, Z'_j} \Big|\, \widehat{I}^{(N)}(Z) - \widehat{I}^{(N)}(Z^{(j)}) \,\Big| \leq \frac{2}{N} \sup_{Z_1, \dots, Z_N} \sum_{i=1}^N  \Big|\, \xi_i(Z) - \xi_i(Z_{\setminus j}) \,\Big| \;. \label{eq:diff}
\end{eqnarray}
Now we need to upper-bound the difference $|\, \xi_i(Z) - \xi_i(Z_{\setminus j}) \,|$ created by eliminating sample $Z_j$ for different $i$ 's. There are three cases of $i$'s as follows,
\begin{itemize}
    \item {\bf Case I.} $i = j$. Since the upper bounds $|\xi_i(Z)| \leq 2 \log N$ and $|\xi_i(Z_{\setminus j})| \leq 2 \log (N-1)$ always holds, so $|\, \xi_i(Z) - \xi_i(Z_{\setminus j}) \,| \leq 4 \log N$. The number of $i$'s in this case is only 1. So $\sum_{\textrm{Case I}} |\, \xi_i(Z) - \xi_i(Z_{\setminus j}) \,| \leq 4 \log N$.
    \item {\bf Case II.} $\rho_{i,xy} = 0$. In this case, recall that $\tilde{k}_i = \Big|\, \{i' \neq i: Z_i = Z_{i'}\} \,\Big|$, $n_{x,i} = \Big|\, \{i' \neq i: X_i = X_{i'}\} \,\Big|$ and $n_{y,i} = \Big|\, \{i' \neq i: Y_i = Y_{i'}\} \,\Big|$. There are 4 sub-cases in this case.
        \begin{itemize}
            \item {\bf Case II.1.} $Z_i = Z_j$. By eliminating $Z_j$, $\tilde{k}_i$, $n_{x,i}$, $n_{y,i}$ will all decrease by 1. Therefore,
                \begin{eqnarray}
                &&|\, \xi_i(Z) - \xi_i(Z_{\setminus j}) \,| \,\notag\\
                &=& |\, \left(\, \psi(\tilde{k}_i) + \log N - \log(n_{x,i}+1) - \log(n_{y,i}+1)\,\right) \,\notag\\
                &&-\, \left(\, \psi(\tilde{k}_i-1) + \log (N-1) - \log(n_{x,i}) - \log(n_{y,i})\,\right) \,| \,\notag\\
                &\leq& |\psi(\tilde{k}_i) - \psi(\tilde{k}_i-1)| + |\log N - \log(N-1)| \,\notag\\
                &&+\, |\log(n_{x,i}+1) - \log(n_{x,i})| + |\log(n_{y,i}+1) - \log(n_{y,i})| \,\notag\\
                &\leq& \frac{1}{\tilde{k}_i - 1} + \frac{1}{N-1} + \frac{1}{n_{x,i}} + \frac{1}{n_{y,i}} \leq \frac{4}{\tilde{k}_i - 1} = \frac{4}{\tilde{k}_j - 1}\;.
                \end{eqnarray}
                The number of $i$'s in this case is the number if $i$'s such that $Z_i = Z_j$, which is just $\tilde{k}_j$. Therefore, $\sum_{\textrm{Case II.1}} |\, \xi_i(Z) - \xi_i(Z_{\setminus j}) \,| \leq 4\tilde{k}_j/(\tilde{k}_j - 1) \leq 8$, for $\tilde{k}_j \geq k \geq 2$.
            \item {\bf Case II.2.} $X_i = X_j$ but $Y_i \neq Y_j$. By eliminating $Z_j$, $\tilde{k}_i$ and $n_{y,i}$ won't change but $n_{x,i}$ will decrease by 1. Therefore,
                \begin{eqnarray}
                |\, \xi_i(Z) - \xi_i(Z_{\setminus j}) \,| &\leq& |\log N - \log (N-1)| + |\log(n_{x,i}+1) - \log(n_{x,i})| \,\notag\\
                &\leq& \frac{1}{N-1} + \frac{1}{n_{x,i}} \leq \frac{2}{n_{x,i}} = \frac{2}{n_{x,j}}
                \end{eqnarray}
                The number of $i$'s in this case is the number if $i$'s such that $X_i = X_j$ but $Y_i \neq Y_j$, which is less than $n_{x,j}$. Therefore, $\sum_{\textrm{Case II.2}} |\, \xi_i(Z) - \xi_i(Z_{\setminus j}) \,| \leq 2n_{x,j}/n_{x,j} \leq 2$.
            \item {\bf Case II.3.} $Y_i = Y_j$ but $X_i \neq X_j$. By eliminating $Z_j$, $\tilde{k}_i$ and $n_{x,i}$ won't change but $n_{y,i}$ will decrease by 1. Similarly as Case II.2, we have $\sum_{\textrm{Case II.3}} |\, \xi_i(Z) - \xi_i(Z_{\setminus j}) \,| \leq 2$.
            \item {\bf Case II.4.} $X_i \neq X_j$ and $Y_i \neq Y_j$. In this case, none of $\tilde{k}_i$, $n_{x,i}$, or $n_{y,i}$ will change. So $|\, \xi_i(Z) - \xi_i(Z_{\setminus j}) \,| = \log N - \log(N-1) \leq 1/(N-1)$. The number of $i$'s in this case is simply less than $N-1$. Therefore, $\sum_{\textrm{Case II.4}} |\, \xi_i(Z) - \xi_i(Z_{\setminus j}) \,| \leq 1$.
        \end{itemize}
        Combining the four sub-cases, we conclude that $\sum_{\textrm{Case II}} |\, \xi_i(Z) - \xi_i(Z_{\setminus j}) \,| \leq 13$.
    \item {\bf Case III.} $\rho_{i,xy} > 0$. In this case, recall that $\tilde{k}_i$ always equals to $k$, $n_{x,i} = \Big|\, \{i' \neq i: \|X_i - X_{i'}\| \leq \rho_{i,xy}\} \,\Big|$ and $n_{y,i} = \Big|\, \{i' \neq i: \|Y_i - Y_{i'}\| \leq \rho_{i,xy}\} \,\Big|$. Similar to Case II, there are 4 sub-cases.
        \begin{itemize}
            \item {\bf Case III.1.} $Z_j$ is in the $k$-nearest neighbors of $Z_i$. In this case, we don't know how $n_{x,i}$ and $n_{y,i}$ will change by eliminating $Z_j$, so we just use the loosest bound $|\, \xi_i(Z) - \xi_i(Z_{\setminus j}) \,| \leq 4 \log N$. However, the number of $i$'s in this case is upper bounded by the following lemma.
                \begin{lemma}
                \label{lem:knn}
                Let $Z, Z_1, Z_2, \dots, Z_N$ be vectors of $\mathbb{R}^d$ and $\mathcal{Z}_i$ be the set $\{Z_1, \dots, Z_{i-1},Z,Z_{i+1}, \dots, Z_N\}$. Then
                \begin{eqnarray}
                \sum_{i=1}^N \mathbb{I}\{Z \textrm{ is in the } k \textrm{-nearest neighbors of } Z_i \textrm{ in } \mathcal{Z}_i\} \leq k \gamma_d \;,
                \end{eqnarray}
                (distance ties are broken by comparing indices). Here $\gamma_d$ is the minimum number of cones with angle smaller than $\pi/6$ needed to cover $\mathbb{R}^d$.  Moreover, if we allow $k$ to be different for difference $i$, we have
                \begin{eqnarray}
                \sum_{i=1}^N \frac{1}{k_i} \mathbb{I}\{Z \textrm{ is in the } k_i \textrm{-nearest neighbors of } Z_i \textrm{ in } \mathcal{Z}_i\} \leq \gamma_d (\log N + 1)\;.
                \end{eqnarray}
                \end{lemma}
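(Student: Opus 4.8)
The plan is to prove both inequalities by the classical cone-decomposition (space-partitioning) argument that bounds the number of reverse $k$-nearest neighbors of a fixed point. First I would fix the point $Z$ and cover $\mathbb{R}^d$ by $\gamma_d$ cones $C_1,\dots,C_{\gamma_d}$, all with apex at $Z$ and half-opening less than $\pi/6$, so that any two rays emanating from $Z$ inside a common cone subtend an angle strictly less than $\pi/3$. Each point $Z_i$ with $Z_i \neq Z$ is assigned to the (a) cone containing the direction $Z_i - Z$; the degenerate points $Z_i = Z$ sit at distance zero and are accounted for through the index tie-breaking convention.

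The geometric heart of the argument is a single observation: if $Z_a$ and $Z_b$ lie in the same cone with $\|Z_a - Z\| \le \|Z_b - Z\|$, then $Z_a$ is strictly closer to $Z_b$ than $Z$ is. Writing $\theta$ for the angle between $Z_a - Z$ and $Z_b - Z$, the law of cosines gives
\[
\|Z_a - Z_b\|^2 = \|Z_a - Z\|^2 + \|Z_b - Z\|^2 - 2\,\|Z_a - Z\|\,\|Z_b - Z\|\cos\theta .
\]
Since $\theta < \pi/3$ we have $\cos\theta > 1/2$, hence $2\|Z_b - Z\|\cos\theta > \|Z_b - Z\| \ge \|Z_a - Z\|$, which forces $\|Z_a - Z_b\|^2 < \|Z_b - Z\|^2$, as claimed.

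To establish the first inequality I would, within each fixed cone $C$, list the assigned points in order of increasing distance to $Z$ as $Z_{i_1},\dots,Z_{i_m}$ (breaking distance ties by index, as in the statement). For the $\ell$-th such point $Z_{i_\ell}$, every one of the $\ell-1$ earlier points $Z_{i_1},\dots,Z_{i_{\ell-1}}$ belongs to $\mathcal{Z}_{i_\ell}$ and, by the cone observation, is strictly closer to $Z_{i_\ell}$ than $Z$ is. Thus at least $\ell-1$ candidates already beat $Z$, so $Z$ can be among the $k$-nearest neighbors of $Z_{i_\ell}$ only when $\ell - 1 < k$, i.e.\ $\ell \le k$. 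Each cone therefore contributes at most $k$ to the sum, and summing over the $\gamma_d$ cones yields $\sum_{i=1}^N \mathbb{I}\{\cdots\} \le k\gamma_d$.

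For the weighted inequality the same ordering is used, but the constraint now bounds the admissible value of $k_{i_\ell}$: if $Z$ lies in the $k_{i_\ell}$-nearest neighbors of $Z_{i_\ell}$, the $\ell-1$ strictly closer candidates force $k_{i_\ell}\ge \ell$, so the $\ell$-th term obeys $\frac{1}{k_{i_\ell}}\mathbb{I}\{\cdots\}\le \frac{1}{\ell}$. Summing within a cone gives at most $\sum_{\ell=1}^m \frac{1}{\ell}\le \sum_{\ell=1}^N \frac{1}{\ell}\le \log N + 1$, and summing over the $\gamma_d$ cones produces $\gamma_d(\log N + 1)$. I expect the main obstacle to be the geometric observation of the second paragraph (equivalently, fixing the right cone cover and verifying the pairwise-angle bound), together with the bookkeeping needed for distance ties and for points coinciding with $Z$; once the cone cover is fixed, the remaining counting is routine.
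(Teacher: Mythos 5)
Your proposal is correct, but it takes a genuinely different route from the paper's. For the first inequality the paper does not argue at all: it cites Lemma 20.6 of Biau and Devroye's lectures on the nearest-neighbor method, and your cone-covering argument (cones of half-angle below $\pi/6$ at apex $Z$, plus the law-of-cosines observation that the nearer of two points in a common cone is strictly closer to the farther one than $Z$ is) is essentially a from-scratch reconstruction of that cited lemma, so the difference there is one of self-containment. The genuine divergence is in the weighted inequality. The paper treats the first part as a black box: it groups indices by the value of $k_i$, sets $S_k = \sum_{i=1}^N \mathbb{I}\{k_i = k \textrm{ and } Z \textrm{ is in the } k\textrm{-nearest neighbors of } Z_i\}$, observes that the first part yields $\sum_{\ell=1}^k S_\ell \leq k\gamma_d$ (an $\ell$-nearest neighbor with $\ell \le k$ is a fortiori a $k$-nearest neighbor), and then extracts the harmonic sum by summation by parts, $\sum_{k=1}^N S_k/k = \sum_{k=1}^{N-1} \tfrac{1}{k(k+1)}\bigl(\sum_{\ell=1}^k S_\ell\bigr) + \tfrac{1}{N}\sum_{\ell=1}^N S_\ell \le \gamma_d\sum_{k=1}^N 1/k$. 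You instead stay inside a single cone: the $\ell$-th point in order of distance to $Z$ can have $Z$ among its $k_{i_\ell}$ nearest neighbors only if $k_{i_\ell} \ge \ell$, so its weighted contribution is at most $1/\ell$, and summing the harmonic series within each cone and over the $\gamma_d$ cones gives the bound directly. Your version is more unified and avoids the Abel-summation bookkeeping; the paper's reduction has the advantage that it reuses the first inequality from any source without reopening the geometry. One point you should not leave implicit: for indices with $Z_i = Z$ the strict-closeness claim fails (a copy of $Z$ is tied with $Z$, not strictly closer), so these indices require the separate, easy count via the index tie-breaking convention that you only allude to; they contribute at most $k$ (respectively, at most a harmonic sum) in total, and this is exactly where the tie-breaking hypothesis in the statement is actually used.
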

                By the first inequality in Lemma~\ref{lem:knn}, the number of $i$'s in this case is upper bounded by $k \gamma_d$. Therefore, $\sum_{\textrm{Case III.1}} |\, \xi_i(Z) - \xi_i(Z_{\setminus j}) \,| \leq 4 k \gamma_{d_x+d_y} \log N$.
            \item {\bf Case III.2.} $Z_j$ is not in the $k$-nearest neighbors of $Z_i$, but $\|X_j - X_i\| \leq \rho_{i,xy}$, i.e., $X_j$ is in the $n_{x,i}$-nearest neighbors of $X_i$. In this case, $n_{x,i}$ will decrease by 1 and $n_{y,i}$ remains the same. So
                \begin{eqnarray}
                |\, \xi_i(Z) - \xi_i(Z_{\setminus j}) \,| &\leq& |\log N - \log (N-1)| + |\log(n_{x,i}+1) - \log(n_{x,i})| \,\notag\\
                &\leq& \frac{1}{N-1} + \frac{1}{n_{x,i}} \leq \frac{2}{n_{x,i}}
                \end{eqnarray}
                We don't have an upper bound for the number of $i$'s in this case, but from the second inequality in Lemma~\ref{lem:knn}, we have the following upper bound, where $\mathcal{X}_{i,j} = \{X_1, \dots, X_{i-1},X_j,X_{i+1}, \dots, X_N\}$:
                \begin{eqnarray}
                &&\sum_{\textrm{Case III.2}} |\, \xi_i(Z) - \xi_i(Z_{\setminus j}) \,| \,\notag\\
                &\leq& \sum_{i=1}^N \frac{2}{n_{x,i}} \mathbb{I}\{X_j \textrm{ is in the } n_{x,i} \textrm{-nearest neighbors of } X_i \textrm{ in } \mathcal{X}_{i,j}\} \,\notag\\
                &\leq& 2 \gamma_{d_x} (\log N + 1) \leq 2 \gamma_{d_x+d_y} (\log N + 1)\;.
                \end{eqnarray}
            \item {\bf Case III.3.} $Z_j$ is not in the $k$-nearest neighbors of $Z_i$, but $\|Y_j - Y_i\| \leq \rho_{i,xy}$, i.e., $Y_j$ is in the $n_{y,i}$-nearest neighbors of $Y_i$. In this case, $n_{y,i}$ will decrease by 1 and $n_{x,i}$ remains the same. Follow the same analysis in Case III.2, we have $\sum_{\textrm{Case III.2}} |\, \xi_i(Z) - \xi_i(Z_{\setminus j}) \,| \leq 2 \gamma_{d_x+d_y} (\log N + 1)$ as well.
            \item {\bf Case III.4.} $Z_j$ is not in the $k$-nearest neighbors of $Z_i$, and $\|X_j - X_i\| > \rho_{i,xy}$, $\|Y_j - Y_i\| > \rho_{i,xy}$. In this case, neither $n_{x,i}$ nor $n_{y,i}$ will change. Similar to Case II.4, $\sum_{\textrm{Case III.4}} |\, \xi_i(Z) - \xi_i(Z_{\setminus j}) \,| \leq 1$.
        \end{itemize}
        Combining the four sub-cases, we conclude that $\sum_{\textrm{Case III}} |\, \xi_i(Z) - \xi_i(Z_{\setminus j}) \,| \leq (4k+4)\gamma_{d_x+d_y} \log N + 4\gamma_{d_x+d_y} + 1$.
\end{itemize}
Combining the three cases, we have:
\begin{eqnarray}
\sum_{i=1}^N  \Big|\, \xi_i(Z) - \xi_i(Z_{\setminus j}) \,\Big| & \leq& 4 \log N + 13 + (4k+4) \gamma_{d_x+d_y} \log N + 4 \gamma_{d_x+d_y} + 1 \,\notag\\
&\leq& 30 \gamma_{d_x+d_y} k \log N
\end{eqnarray}
for $k \geq 1$, $\log N \geq 1$ and all $\{Z_1, \dots, Z_N\}$. Plug it into~\eqref{eq:diff}, we obtain,
\begin{eqnarray}
\sup_{Z_1, \dots, Z_N, Z'_j} \Big|\, \widehat{I}^{(N)}(Z) - \widehat{I}^{(N)}(Z^{(j)}) \,\Big| \leq \frac{60 \gamma_{d_x+d_y} k \log N}{N} \;.
\end{eqnarray}
Plug it into Efron-Stein inequality~\eqref{eq:E_S}, we obtain:
\begin{eqnarray}
\Var \left[\, \widehat{I}^{(N)}(Z) \,\right] &\leq& \frac{1}{2} \sum_{j=1}^N \E \left[\, \left(\, \widehat{I}^{(N)}(Z) - \widehat{I}^{(N)}(Z^{(j)})\,\right)^2 \,\right] \,\notag\\
&\leq& \frac{1}{2} \sum_{j=1}^N \sup_{Z_1, \dots, Z_n, Z'_j} \left(\, \widehat{I}^{(N)}(Z) - \widehat{I}^{(N)}(Z^{(j)}) \,\right)^2 \,\notag\\
&\leq& \frac{1}{2} \sum_{j=1}^N (\frac{60 \gamma_{d_x+d_y} k \log N}{N})^2 = \frac{1800 \gamma_{d_x+d_y}^2 (k \log N)^2}{N} \;.
\end{eqnarray}
Since $1800\gamma^2_{d_x+d_y}$ is a constant independent of $N$, and $(k_N \log N)^2 / N \to 0$ as $N \to \infty$ by Assumption 6, we have $\lim_{N \to \infty} \Var \left[\, \widehat{I}^{(N)}(Z) \,\right] = 0$.

\subsection{Proof of Lemma~\ref{lem:knn}}
For the first part of the lemma, we refer to Lemma 20.6 in~\cite{biau2015lectures}.

The second part of the lemma is a consequence of the first part. We reorder the indices $i$'s by $k_i$ and rewrite the summation as follows,
\begin{eqnarray}
&&\sum_{i=1}^N \frac{1}{k_i} \mathbb{I}\{Z \textrm{ is in the } k_i \textrm{-nearest neighbors of } Z_i \textrm{ in } \mathcal{Z}_i\} \,\notag\\
&=& \sum_{k=1}^N \frac{1}{k} \sum_{i=1}^N \mathbb{I}\{k_i = k\} \mathbb{I}\{Z \textrm{ is in the } k \textrm{-nearest neighbors of } Z_i \textrm{ in } \mathcal{Z}_i\} \,\notag\\
&=& \sum_{k=1}^N \frac{1}{k} \sum_{i=1}^N \mathbb{I}\{k_i = k \textrm{ and } Z \textrm{ is in the } k \textrm{-nearest neighbors of } Z_i \textrm{ in } \mathcal{Z}_i\}
\end{eqnarray}
Notice that we take the summation over $k=1$ to $N$ since each $k_i$ can not be more than $N$. Denote $S_k = \sum_{i=1}^N \mathbb{I}\{k_i = k \textrm{ and } Z \textrm{ is in the } k \textrm{-nearest neighbors of } Z_i \textrm{ in } \{Z_1, \dots, Z_{i-1}, Z, Z_{i+1}, \dots, Z_N\}\}$ for simplicity. Then we need to prove that $\sum_{k=1}^N (S_k/k) \leq \gamma_d \log N$. By the first part of this lemma, we obtain,
\begin{eqnarray}
\sum_{\ell=1}^k S_{\ell} &=& \sum_{\ell=1}^k \sum_{i=1}^N \mathbb{I}\{k_i = \ell \textrm{ and } Z \textrm{ is in the } \ell \textrm{-nearest neighbors of } Z_i \textrm{ in } \mathcal{Z}_i\} \,\notag\\
&=& \sum_{i=1}^N \sum_{\ell=1}^k \mathbb{I}\{k_i = \ell \textrm{ and } Z \textrm{ is in the } \ell \textrm{-nearest neighbors of } Z_i \textrm{ in } \mathcal{Z}_i\} \,\notag\\
&\leq& \sum_{i=1}^N \mathbb{I}\{k_i \leq k \textrm{ and } Z \textrm{ is in the } k \textrm{-nearest neighbors of } Z_i \textrm{ in } \mathcal{Z}_i\} \,\notag\\
&\leq& k \gamma_d \;.
\end{eqnarray}
Therefore, we obtain
\begin{eqnarray}
&& \sum_{k=1}^N \frac{S_k}{k} = \sum_{k=1}^{N-1} \frac{1}{k(k+1)} \left(\, \sum_{\ell=1}^k S_{\ell} \,\right) + \frac{1}{N} \sum_{\ell=1}^N S_{\ell} \,\notag\\
&\leq& \sum_{k=1}^{N-1} \frac{k\gamma_d}{k(k+1)} + \frac{N \gamma_d}{N} = \sum_{k=1}^{N} \frac{\gamma_d}{k} < \gamma_d (\log N + 1) \;,
\end{eqnarray}
which completes the proof.

%\input{supplementary_figure}

%\clearpage
{
\bibliographystyle{plain}
\bibliography{mixed}
 %\appendices
}

\end{document}